\documentclass[10pt,a4paper]{article}
\pdfoutput=1

\usepackage{amssymb}
\usepackage{amsfonts}
\usepackage{amsmath}
\usepackage[amsmath,thmmarks]{ntheorem}
\usepackage{mathrsfs}
\usepackage{graphicx}
\usepackage{color}
\usepackage{cite}
\usepackage{subfigure}
\usepackage{textcomp}
\usepackage{algorithm}
\usepackage{algorithmic}
\usepackage{url}
\usepackage[toc,page]{appendix}

\newcommand{\R}{\mathbb{R}}

\newcommand{\id}{\textrm{id}}

\newcommand{\N}{\mathbb{N}}

\newtheorem{thm}[equation]{Theorem}
\newtheorem{lem}[equation]{Lemma}
\newtheorem{prop}[equation]{Proposition}
\newtheorem{cor}[equation]{Corollary}
\newtheorem{con}[equation]{Conjecture}

\theoremstyle{plain}
\theorembodyfont{\normalfont}
\newtheorem{defn}[equation]{Definition}
\newtheorem{ex}[equation]{Example}
\newtheorem{rem}[equation]{Remark}

\theoremstyle{nonumberplain}
\theoremheaderfont{\normalfont\bfseries}
\theorembodyfont{\normalfont}
\theoremsymbol{\ensuremath{\square}}
\theoremseparator{.}
\newtheorem{proof}{Proof}

\title{Towards a theory of statistical tree-shape analysis}

\author{Aasa~Feragen, Pechin~Lo, Marleen~de Bruijne,\\
 Mads~Nielsen, Fran\c{c}ois~Lauze\\
  E-mail: \{aasa, pechin, marleen, madsn, francois\}@diku.dk \thanks{A. Feragen, P. Lo, M. de
    Bruijne, M. Nielsen and F. Lauze are with the eScience Center, Dept.~of Computer Science, University of Copenhagen, Denmark; and M. de Bruijne
    is with the Biomedical Imaging Group Rotterdam, Depts of Radiology \& Medical
    Informatics, Erasmus MC -- University Medical Center Rotterdam, The Netherlands } }

\begin{document}

\maketitle

\begin{abstract}
In order to develop statistical methods for shapes with a tree-structure, we construct a shape space framework for tree-like shapes and study metrics on the shape space. This shape space has singularities, corresponding to topological transitions in the represented trees. We study two closely related metrics on the shape space, TED and QED. QED is a quotient Euclidean distance arising naturally from the shape space formulation, while TED is the classical tree edit distance. Using Gromov's metric geometry we gain new insight into the geometries defined by TED and QED. We show that the new metric QED has nice geometric properties which facilitate statistical analysis, such as existence and local uniqueness of geodesics and averages. TED, on the other hand, does not share the geometric advantages of QED, but has nice algorithmic properties. We provide a theoretical framework and experimental results on synthetic data trees as well as airway trees from pulmonary CT scans. This way, we effectively illustrate that our framework has both the theoretical and qualitative properties necessary to build a theory of statistical tree-shape analysis.
\end{abstract}

{\bf Keywords:} Trees, Tree metric, Shape,  Anatomical structure, Pattern matching, Pattern recognition, Geometry

\section{Introduction}

Tree-shaped objects are fundamental in nature, where they appear, e.g., as delivery systems for gases and fluids \cite{mandelbrot}, as skeletal structures, or describing
hierarchies. Examples encountered in image analysis and computational biology are airway trees~\cite{tschirren,metzen_journal,Ginneken08}, vascular
systems~\cite{Chalopin01}, shock graphs~\cite{editshock,active_skel,trinh2}, scale space hierarchies~\cite{kuijperflorack,toppoints} and phylogenetic trees \cite{phylogenetic,owen_provan,mds_phylo}. 

Statistical methods for tree-structured data would have endless applications. For instance, one could make more consistent studies of changes in airway geometry and structure related to airway disease~\cite{washko,lauge} to improve tools for computer aided diagnosis and prognosis.

Due to the wide range of applications, extensive work has been done in the past $20$ years on comparison of trees and graphs in terms of matching~\cite{keselman,metzen_journal,tschirren}, object recognition~\cite{toppoints,klein_imple} and machine
learning~\cite{ferrer_gen_med,riesen_bunke,structurespaces} based on inter-tree distances. However, the existing tree-distance frameworks are algorithmic rather than geometric. Very few attempts~\cite{wangandmarron,pcatrees,nye} have been made to build analogues of the theory for landmark point shape spaces using manifold statistics and Riemannian submersions~\cite{kendall,small,fletcher}. There exists no principled approach to studying the space of tree-structured data, and as a consequence, the standard statistical properties are not well-defined. As we shall see, difficulties appear even in the basic problem of finding the average of two tree-shapes. This paper fills the gap by introducing a shape-theoretical framework for geometric trees, which is suitable for statistical analysis.

Most statistical measurements are based on a concept of distance. The most fundamental statistic is the \emph{mean} (or prototype) $m$ for a dataset $\{x_1, \ldots, x_n\}$, which can be defined as the minimizer of the sum of squared distances to the data points:
\begin{equation} \label{meandef}
m = \textrm{argmin}_x \sum_{i=1}^n d(x, x_i)^2.
\end{equation}
This definition of the mean, called \emph{Fr\'echet mean}, assumes a space of tree-shapes endowed with a distance $d$, and is closely connected to geodesics, or shortest paths, between tree-shapes. For example, the midpoint of a geodesic from $x_1$ to $x_2$ is a mean for the two-point dataset $\{x_1, x_2\}$. Hence, if there are multiple geodesics connecting $x_1$ to $x_2$, with different midpoints, then there will also be multiple means. \emph{As a consequence, without (local, generic) uniqueness of geodesics, statistical properties are fundamentally ill-posed!}

\begin{figure} 
\centering
\subfigure[Edge matching]{\label{treetopologyviolation} \includegraphics[height=18mm]{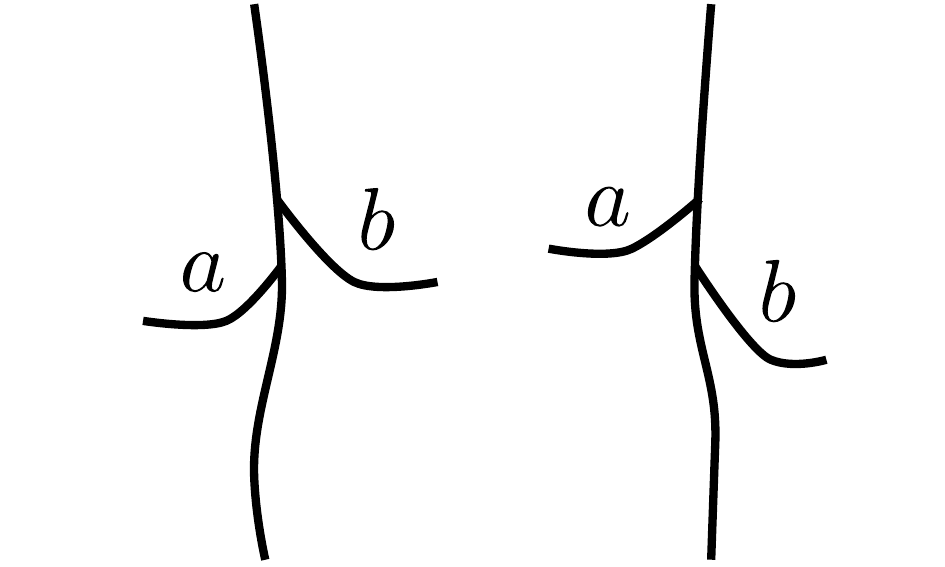}}
\hspace{2mm}
\subfigure[Geodesic candidate]{\label{treetopologyviolationgeodesic} \includegraphics[height=18mm]{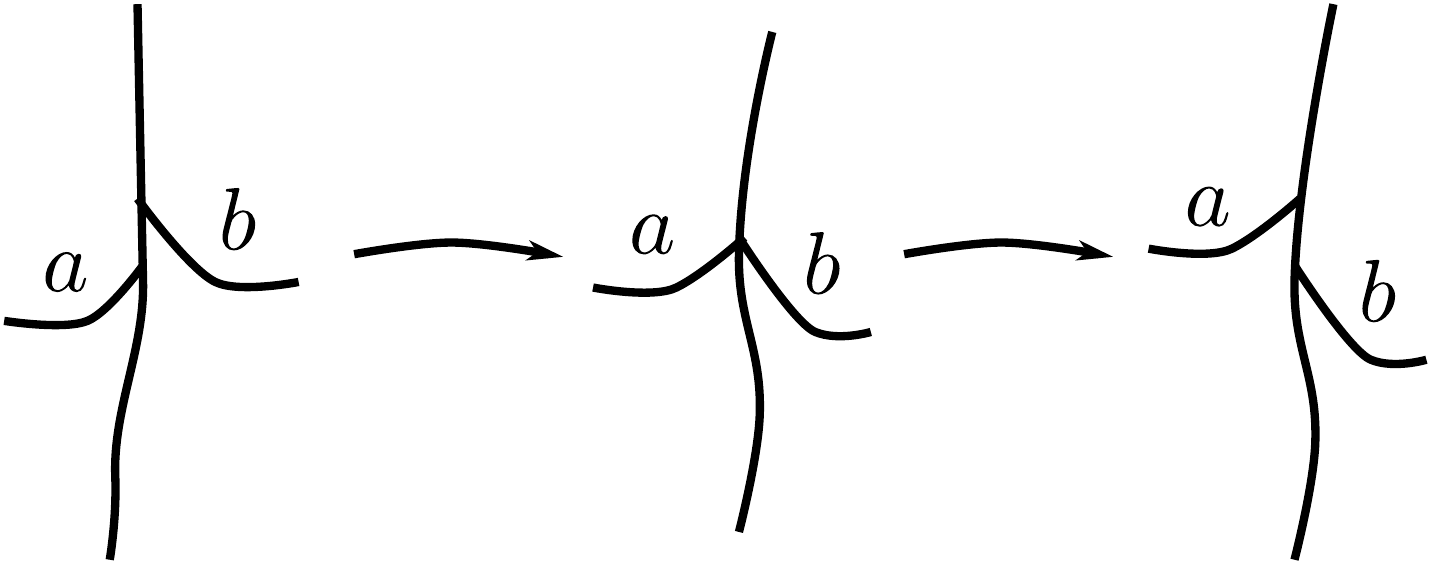}}
\caption{A good metric must handle edge matchings which are inconsistent with tree topology.}
\label{treetopology}
\end{figure}

Thus, geometry enters the picture, and the idea of a geodesic tree-space gives a constraint on the possible geometric structure of tree-space. In a shape space where distances are given by path length, we must be able to continuously deform any given tree-shape into any other by traveling along the shortest path that connects the two shapes. The deformation-paths are easy to describe when only branch shape is changed, while tree topology (branch connectivity) is fixed. Such deformations take place in portions of tree-space where all trees have the same topological structure. It is more challenging to describe deformation-paths in which the tree-topological structure is changed, for instance through a collapsed internal branch as in fig.~\ref{treetopologyviolationgeodesic}. We model topologically intermediate trees as collapsed versions of trees with differing tree topology, and glue the portions of tree-space together along subspaces that correspond to collapsed trees, as in fig.~\ref{tree_space_components}. As a consequence, tree-space has self intersections and is not smooth, but has singularities!

\begin{figure}
\centering
\includegraphics[width=\linewidth]{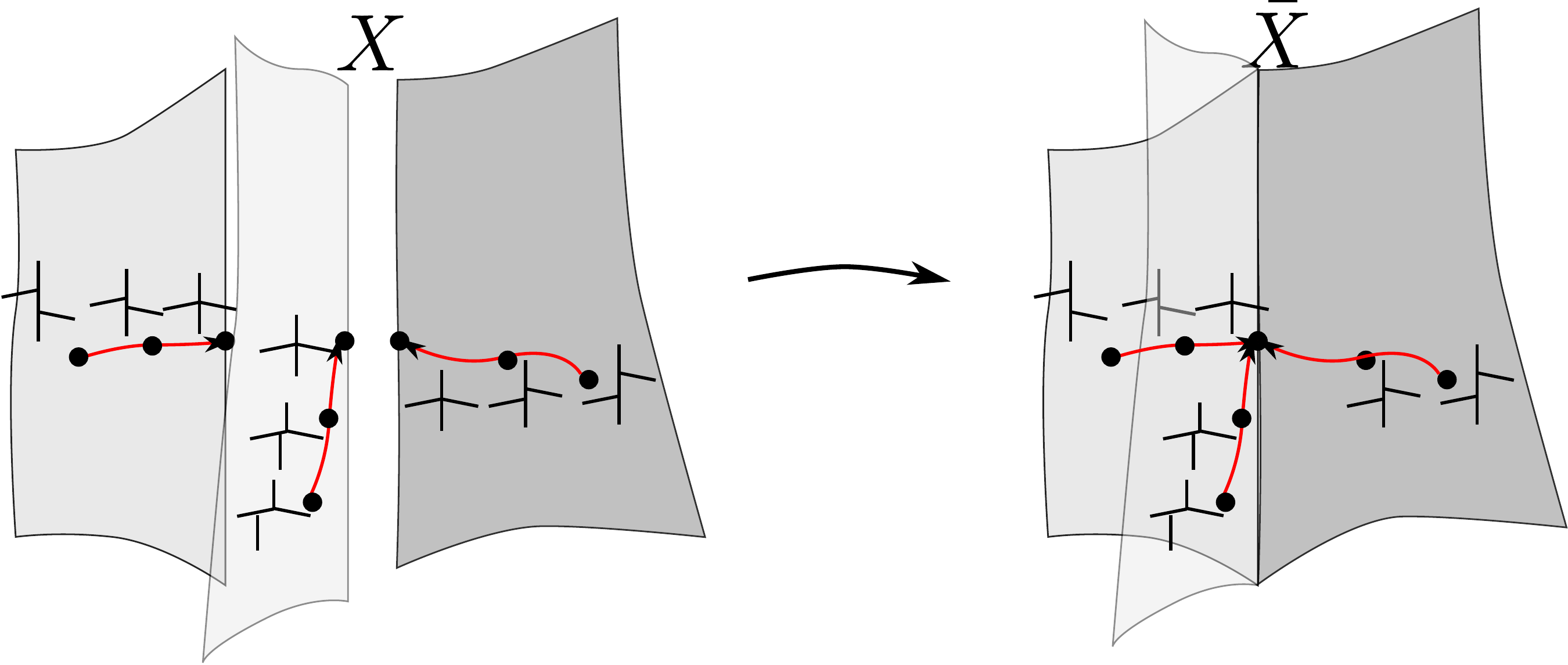}
\caption{Continuous transitions in tree topology: Tree-shapes with different tree topology live in different components of tree-space. Paths in different components can terminate in identical collapsed tree-shapes. To identify these, the tree-space components are glued together along subsets with the collapsed tree-shapes.}
\label{tree_space_components}
\end{figure}

The main theoretical contributions of this paper are the construction of a mathematical tree-shape framework along with a geometric analysis of two natural metrics on the shape space. One of these metrics is the classical Tree Edit Distance (TED), into which we gain new insight. The second metric, called Quotient Euclidean Distance (QED), is induced from a Euclidean metric. Using Gromov's approach to metric geometry \cite{gromov}, we show that QED generically gives locally unique geodesics and means, whereas finding geodesics and means for TED is ill-posed even locally. We explain why using TED for computing average trees must always be accompanied by a carefully engineered choice of edit paths in order to give well-defined results; choices which can yield average trees which are substantially different from the trees in the dataset~\cite{trinh}. The QED approach, on the contrary, allows us to investigate statistical methods for tree-like structures which have previously not been possible, like different well-defined concepts of average tree. \emph{This is our motivation for studying the QED metric!}

\begin{figure}
\centering
\includegraphics[width=\linewidth]{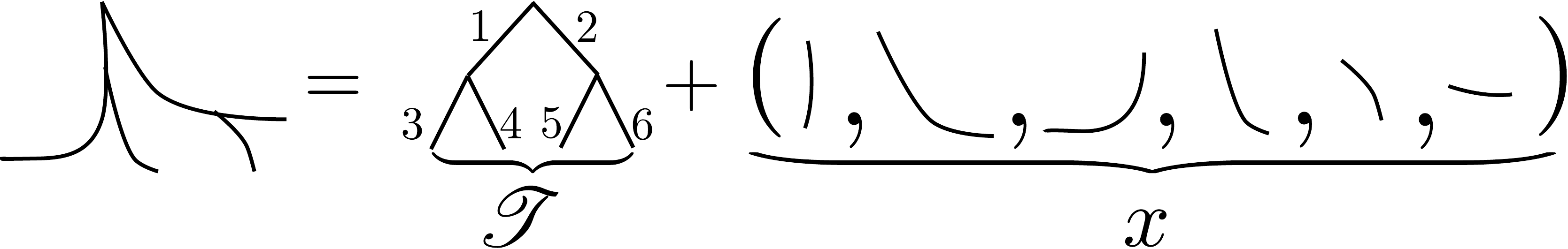}
\caption{Tree-like shapes are encoded by an ordered binary tree and a set of attributes describing edge shape.}
\label{pickapart}
\end{figure}

The paper is organized as follows: In section~\ref{background} we discuss related work. The tree-space is defined in section~\ref{geometrysection}, and the statistical properties of tree-space are analyzed in section~\ref{curvaturesection}. In section~\ref{complexitysection}, we discuss how to overcome the computational complexity of both metrics, and present a simple QED approximation. In section~\ref{experimentalsection}, we illustrate the properties of QED by computing geodesics and different types of average tree for synthetic planar data trees as well as by computing QED means for sets of $3D$ airway trees from human lungs.

\subsection{Related work} \label{background}

Metrics on sets of tree-structured data have been studied by different research communities for the past 20 years. The best-known approach is perhaps Tree Edit Distance (TED), which has been used extensively for shape matching and recognition based on medial axes and shock graphs~\cite{klein_imple,trinh,torsello}. TED and, more generally, graph edit distance (GED), are also popular in the pattern recognition community, and are still used for distance-based pattern recognition approaches to trees and graphs~\cite{ferrer_gen_med,riesen_bunke}. The TED and GED metrics will nearly always have infinitely many shortest edit paths, or geodesics, between two given trees, since edit operations can be performed in different orders and increments. As a result, even the problem of finding the average of two trees is not well posed. With no kind of uniqueness of geodesics, it becomes hard to meaningfully define and compute average shapes or modes of variation. This problem can be solved to some extent by choosing a preferred edit path~\cite{ferrer_gen_med,riesen_bunke,trinh}, but there will always be a risk that the choice has negative consequences in a given setting. Trinh and Kimia~\cite{trinh} face this problem when they use TED for computing average medial axes using the simplest possible edit paths, leading to average shapes which can be substantially different from most of the dataset shapes.

Statistics on tree-shaped objects receive growing interest in the statistical community. Wang and Marron~\cite{wangandmarron} study metric spaces of trees and define a notion of average tree called the median-mean as well as a version of PCA, which finds modes of variation in terms of \emph{tree-lines}, encoding the maximum amount of structural and attributal variation. Aydin et al.~\cite{pcatrees} extend this work by finding efficient algorithms for PCA. This is applied to analysis of brain blood vessels. The metric defined by Wang and Marron does not give a natural geodesic structure on the space of trees, as it places a large emphasis on the tree-topological structure of the trees. The metric has discontinuities in the sense that a sequence of trees with a shrinking branch will not converge to a tree that does not have that branch. Such a metric is not suitable for studying trees with continuous topological variations and noise, such as anatomical tree-structures extracted from medical images, since the emphasis on topology makes the metric sensitive to structural noise.

A different approach is that of Jain and Obermayer \cite{structurespaces}, who study metrics on attributed graphs, represented as incidence matrices. The space of graphs is defined as a quotient of the Euclidean space of incidence matrices by the group of vertex permutations. The graph-space inherits the Euclidean metric, giving it the structure of an orbifold. This graph-space construction is similar to the tree-space presented in this paper in the sense that both spaces are constructed as quotients of a Euclidean space. The graph-space framework does not, however, give continuous transitions in internal graph-topological structure, which leads to large differences between the geometries of the tree- and graph-spaces.

Trees also appear in genetics. Hillis et al.~\cite{mds_phylo} visualize large sets of phylogenetic trees using multidimensional scaling. Billera et al.~\cite{phylogenetic} have invented a phylogenetic tree-space suitable for geodesic analysis of phylogenetic trees, and Owen and Provan~\cite{owen_provan} have developed fast algorithms for computing geodesics in phylogenetic tree-space. Nye~\cite{nye} has developed a notion of PCA in phylogenetic tree-space, but is forced to make strict assumptions on possible principle components being ''simple lines'' for the sake of computability. Phylogenetic trees are not geometric, and have fixed, labeled leaf sets, making the space of phylogenetic trees much simpler than the space of tree-like shapes.

We have previously \cite{icpr,feragen_accv2010} studied geodesics between small tree-shapes in the same type of singular shape space as studied here, but most proofs have been left out. In~\cite{feragen_iccv11}, we study different algorithms for computing average trees based on the QED metric. This paper extends and continues~\cite{feragen_accv2010}, giving proofs, in-depth explanations and more extensive examples illustrating the potential of the QED metric.

\section{The space of tree-like shapes} \label{geometrysection}

Let us discuss which properties are desirable for a tree-shape model. As previously discussed, we require, at the very least, local existence and uniqueness for geodesics in order to compute average trees and analyze variation in datasets. When geodesics exist, we want the topological structure of the intermediate trees along the geodesic to reflect the resemblance in structure of the trees being compared. In particular, \emph{a geodesic passing through the trivial one-vertex tree should indicate that the trees being compared are maximally different.} Perhaps more importantly, we would like to compare trees where the desired edge matching is inconsistent with tree topology, as in fig.~\ref{treetopologyviolation}. Specifically, we would like to find geodesic deformations in which the tree topology changes when we have such edge matchings, for instance as in fig.~\ref{treetopologyviolationgeodesic}.

\subsection{Representation of trees} \label{treerep}

In this paper we shall work with two different tree-spaces: A tree-space $X$, which is the space of all trees of a certain size, and a subspace $Z \subset X$, which is a restricted space of trees, whose exact definition is flexible (see definition~\ref{subspacedef}). The large tree-space $X$ is the natural space for geometric trees. However, the available mathematical tools only allow us to prove our results locally, where the locality assumptions become very strict in $X$. Using a set of natural assumptions, we can restrict to a tree-space $Z$ where our results hold in larger regions of tree-space. This is discussed in detail in section~\ref{injectivity}. We also believe that our results hold in $X$, as described in conjecture~\ref{conjecture}.
\begin{figure}
\centering
\subfigure[]{
\label{constedge} 
\raisebox{5mm}{\includegraphics[width=0.55\linewidth]{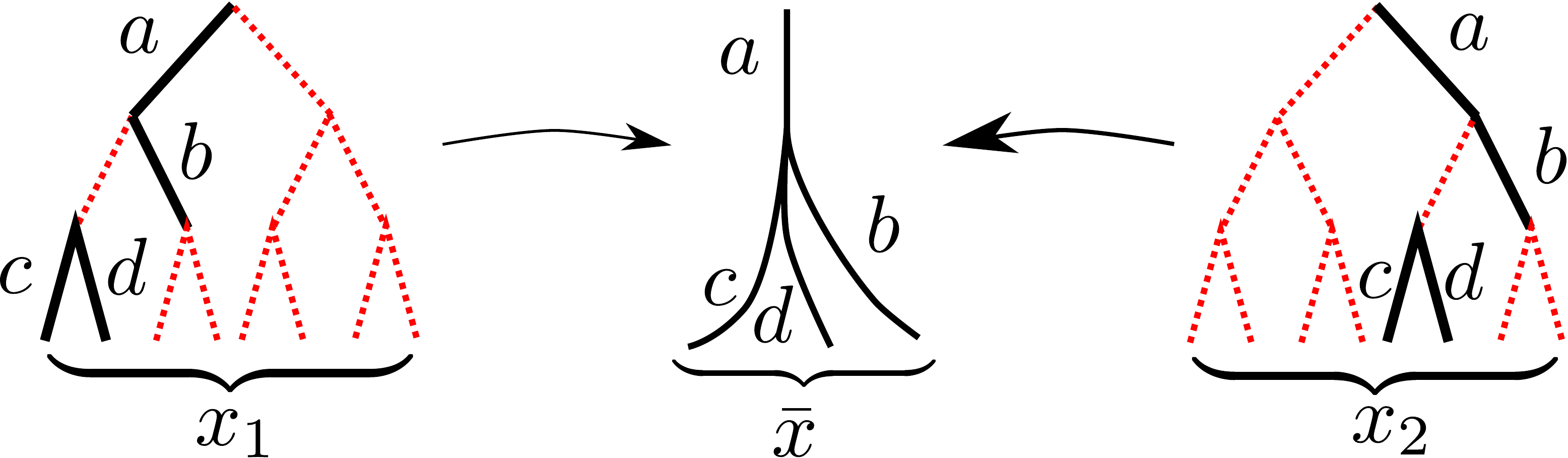}}
}
\hspace{4mm}
\subfigure[]{
\label{tedmoves4}
\includegraphics[width=0.32\linewidth]{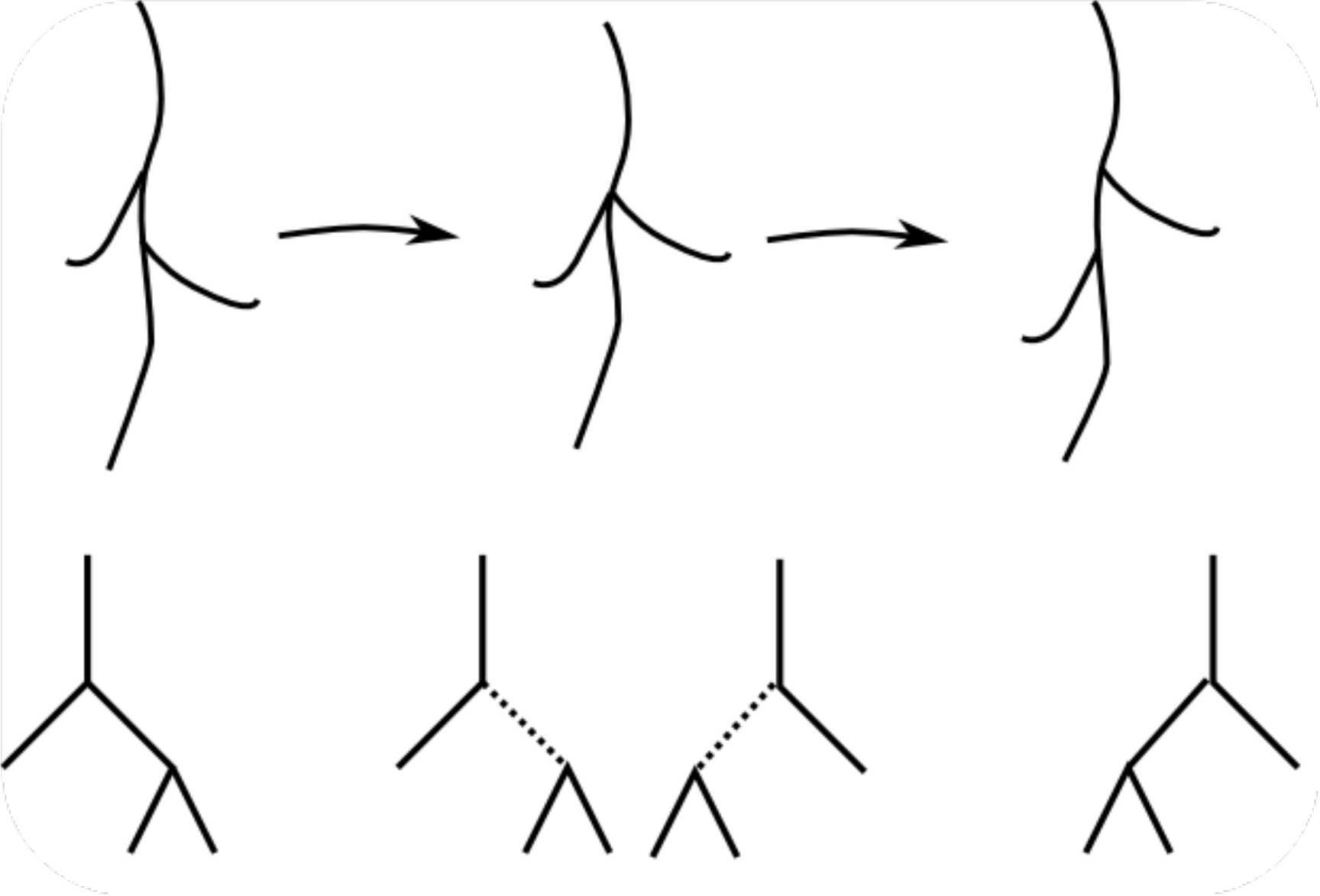}
}
\caption{(a) Higher-degree vertices are represented by collapsing internal edges (dotted line = zero attribute =  collapsed edge). We identify those tree pre-shapes whose collapsed structures are identical: The pre-shapes $x_1$ and $x_2$ represent the same tree-shape $\bar{x}$. (b) The tree deformation shown in the top row does not correspond to a path in $X$, as the two representations of the intermediate tree are found at distinct points in $X$.}
\end{figure}

In this paper, a ''tree-shape'' is an embedded tree in $\R^2$ or $\R^3$, and consists of a series of edge embeddings, glued together according to a rooted combinatorial tree. Tree-shapes are invariant to translation, but our definition of tree-shape does not remove scale and rotation. However, tree-shapes can always be aligned with respect to scale and rotation prior to comparison, if this is important.

Any tree-like shape is represented as a pair $(\mathscr{T}, x)$ consisting of a rooted, planar tree $\mathscr{T}$ with edge attributes $x$. In $\mathscr{T} = (V, E, r)$, $V$ is the vertex set, $E \subset V \times V$ is the edge set, and $r$ is the root vertex. The tree $\mathscr{T}$ describes the tree-shape topology, and the attributes describe edge shape, as illustrated in fig.~\ref{pickapart}. The shape attributes, represented by a point $x = (x_e)$ in the product space $\prod_{e \in E} A$, is a concatenation of edgewise attributes from an attribute space $A$. The attributes could, e.g., be edge length, landmark points or edge parametrizations. In this work, we mostly use open curves translated to start at the origin, described by a fixed number of landmark points along the edge. Thus, throughout the paper, the attribute space $A$ is $(\R^d)^n$ where $d = 2$ or $3$ and $n$ is the number of landmark points per edge. Collapsed edges are represented by a sequence of origin points. In some illustrations, we shall use scalar attributes for the sake of visualization, in which case $A = \R$.

In order to compare trees of different sizes and structures, we need to represent them in a unified way. We describe all shapes using the \emph{same} tree $\mathscr{T}$ to encode tree topology. By choosing a sufficiently large $\mathscr{T}$, we can represent all the trees in our dataset by filling out with empty (collapsed) edges. We call $\mathscr{T}$ \emph{the maximal tree}.

We model tree-shapes using \emph{binary} maximal trees $\mathscr{T}$. Tree-shapes which are not binary are represented by the binary tree $\mathscr{T}$ in a natural way by allowing constant, or collapsed, edges, represented by the zero scalar or vector attribute. In this way \emph{an arbitrary attributed tree can be represented as an attributed {\bf binary} tree}, see fig.~\ref{constedge}. This is geometrically very natural. Binary trees are geometrically \emph{stable} in the sense that small perturbations of a binary tree-shape do not change the tree-topological structure of the shape. Conversely, a trifurcation or higher-order vertex can always be turned into a series of bifurcations sitting close together by an arbitrarily small perturbation. In our representation, thus, trifurcations are represented as two bifurcations sitting infinitely close together, etc. 

Trees embedded in the plane have a natural edge order induced by the left-right order on the children of any edge. We say that a tree is \emph{ordered} whenever each set of sibling edges in the tree is endowed with such a total order. Conversely, an ordered combinatorial tree always has a unique, implicit embedding in the plane where siblings are ordered from left to right. For this reason we use the terms ''planar tree'' and ''ordered tree'' interchangingly. We initially study metrics on the set of ordered binary trees; later we use them to induce distances between unordered trees by considering all possible orders. This allows us to model trees in $\R^3$. Considering all orders leads to potential computational challenges, which are discussed in section~\ref{complexitysection}.

\begin{figure}
\centering
\includegraphics[width=0.8\linewidth]{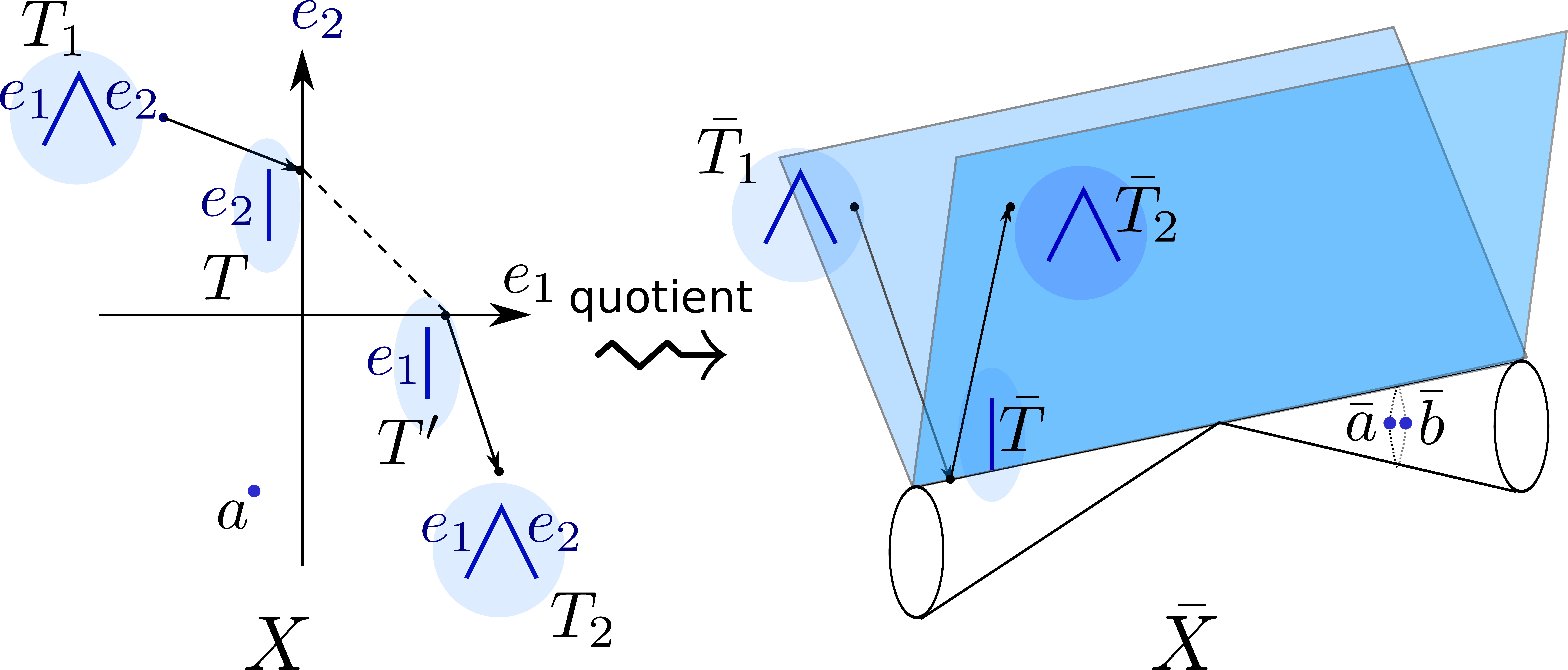}
\caption{The simplest non-trivial tree-shape space, consisting of ordered trees with two edges with scalar attributes. Along the $x$- and $y$-axes are trees with a single branch. For each real number $a$, the tree-shape found at $T' = (a,0)$ is also represented at $T = (0,a)$. We build the tree-shape space by gluing the different representations of the same tree-shapes (e.g., $T$ and $T' \in X$) together, obtaining the shape space shown on the right. Note the path from $\bar{T}_1$ to $\bar{T}_2$ through $\bar{T}$ in $\bar{X}$ on the right; the corresponding path in the pre-shape space $X$ involves a ''teleportation'' between the representations $T$ and $T'$ of $\bar{T}$.}
\label{folding}
\end{figure}

In order to build a space of tree-shapes, fix an ordered maximal binary tree $\mathscr{T}$ with edges $E$, which encodes the connectivity of all our trees. Any attributed tree $T$ is now represented by a point $x = (x_e)_{e \in E}$ in $X = \prod_{e \in E} (\R^d)^n$, where the coordinate $x_e$ describes the shape of the edge $e$. Since we allow zero-attributed edges, as discussed above, some tree-shapes will be represented by several points in $X$ (fig.~\ref{constedge}). As a result, some natural tree-deformations are not found as continuous paths in $X$. In figs.~\ref{tedmoves4} and~\ref{folding}, the paths in $X$ corresponding to the indicated deformations require a ''teleportation'' between two representations of the intermediate tree-shape. We tackle this by using a refined tree-shape space $\bar{X}$, where different representations are identified as being the same point. The original space $X$ is called the tree pre-shape space, analogous to Kendall's terminology~\cite{kendall}.

\subsection{The singular space of ordered tree-shapes} \label{singularspace}

We go from pre-shapes to shapes by identifying those pre-shapes which define the same shape. 

Consider two ordered tree-shapes with collapsed edges. Replace their binary representations by collapsed representations, where the zero attributed edges have been removed. The orders of the original trees induce well-defined orders on the collapsed trees. We say that two ordered tree-shapes are \emph{the same} when their collapsed ordered, attributed versions are identical, as in fig.~\ref{constedge}. Tree identifications come with an inherent bijection of subsets of the edge set $E$: If we identify $x, y \in X = \prod_E (\R^d)^n$, denote by
\begin{eqnarray}
E_1 = \{e \in E | x_e \neq 0\},\\
E_2 = \{e \in E | y_e \neq 0\}. 
\end{eqnarray}
the sets of non-collapsed edges with non-zero attributes. The identification of $x$ and $y$ is equivalent to an order preserving bijection $\varphi \colon E_1 \to E_2$, identifying those edges that correspond to the same edge in the collapsed tree-shape. Varying the attributes $x_e$, $\varphi$ spans a family of tree-shapes with fixed topology and several representatives. Thus, the edge sets $E_1$ and $E_2$ induce linear subspaces 
\begin{eqnarray}{l} \label{vdef}
V_1 = \{x \in X | x_e = 0 \textrm{ if } e \notin E_1\}\label{vdef1}\\
V_2 = \{x \in X | x_e = 0 \textrm{ if } e \notin E_2\}\label{vdef2}
\end{eqnarray}
of $X$ where, except for on the axes, the topological tree structure is constant. The tree-shapes represented in $V_1$ are exactly the same as those represented in $V_2$. The bijection $\varphi$ induces a bijection $\Phi \colon V_1 \to V_2$ given by $\Phi \colon (x_e) \mapsto (x_{\varphi(e)})$, which connects each representation $x \in V_1$ to the representation $\Phi(x) \in V_2$ of the same shape. Note that the $V_i$ are spanned by axes in $X$. 

Define a map $\Phi$ for each pair of identified tree-structures, and form an equivalence on $X$ by setting $x \sim \Phi(x)$ for all $x$ and $\Phi$. For each $x \in X$, $\bar{x}$ is the equivalence class $\{x' \in X| x' \sim x\}$. The quotient space 
\begin{equation} \label{qspacedef}
\bar{X} = (X/\sim) = \{\bar{x} | x \in X\}
\end{equation}
of equivalence classes $\bar{x}$ is the space of ordered tree-like shapes.

Quotient spaces are standard constructions from topology and geometry, where they are used to glue spaces together~\cite[chapter 1.5]{bridsonhaef}. The geometric interpretation of the identification in the tree-space quotient is that we fold and glue the pre-shape space space along the identified subspaces; i.e., when $x_1 \sim x_2$, we glue the two points $x_1$ and $x_2$ together. See fig.~\ref{folding} for an illustration.

\subsection{Metrics on the space of ordered trees} \label{metrics}

Given a metric $d$ on the Euclidean pre-shape-space $X = \prod_{e \in E} (\R^d)^n$, we induce
the standard quotient pseudometric \cite{bridsonhaef} $\bar{d}$ on the quotient space
$\bar{X} = X/\sim$ by setting
\begin{equation} \label{metricdefn} \bar{d}(\bar{x}, \bar{y}) = \inf \left\{ \sum_{i=1}^k
    d(x_i, y_i) | x_1 \in \bar{x}, y_i \sim x_{i+1}, y_k \in \bar{y}\right\}.
\end{equation}
This corresponds to finding the optimal path from $\bar{x}$ to $\bar{y}$, consisting of any number $k$ of concatenated Euclidean lines, passing through $k-1$ identified subspaces, as shown in fig.~\ref{folding}. It is clear from the definition that the distance function $\bar{d}$ is symmetric and transitive. It is, however, an infimum, giving a risk that the distance between two distinct tree-shapes is zero, as occurs with some intuitive shape distance functions~\cite{michormumford}. This is why $\bar{d}$ is called a \emph{pseudo}metric, and it remains to prove that it actually is a metric; i.e., that $\bar{d}(\bar{x}, \bar{y}) = 0$ implies $\bar{x} = \bar{y}$.

We prove this for two specific metrics on $X$, which come from two different ways of combining
individual edge distances: The metrics $d_1$ and $d_2$ on $X = \prod_{e \in E} (\R^d)^n$ are the norms 
\begin{eqnarray}
\|x-y\|_1 = \sum_{e \in E} \|x_e - y_e\|,\\
\|x-y\|_2 = \sqrt{\sum_{e \in E} \|x_e - y_e \|^2}.
\end{eqnarray}

From now on, $d$ and $\bar{d}$ will denote either the distance functions $d_1$ and $\bar{d}_1$, or $d_2$ and $\bar{d}_2$. We prove the following theorem in section~\ref{theoremproof}:

\begin{thm} \label{orderedthm} The distance function $\bar{d}$ is a metric
  on $\bar{X}$, which is a contractible, complete, proper geodesic space.
\end{thm}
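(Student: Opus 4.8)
The plan is to establish the four properties in turn, reducing each to a statement about the Euclidean pre-shape space $X$ and the gluing structure. First I would prove that $\bar{d}$ is genuinely a metric, i.e. that $\bar{d}(\bar{x},\bar{y}) = 0$ forces $\bar{x} = \bar{y}$. The key observation is that each identification map $\Phi \colon V_1 \to V_2$ is an \emph{isometry} between linear subspaces of $X$ spanned by coordinate axes (both metrics $d_1, d_2$ are symmetric in the edge coordinates, and $\Phi$ merely permutes non-collapsed edge coordinates while preserving zero coordinates). Because there are only finitely many edge subsets $E_i$, there are only finitely many such subspaces $V_i$ and finitely many identification isometries $\Phi$. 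I would argue that any admissible concatenated path as in \eqref{metricdefn} of total Euclidean length $<\varepsilon$ cannot connect points in distinct equivalence classes once $\varepsilon$ is smaller than the minimal Euclidean separation between the (finitely many) distinct representatives of $\bar{x}$. This uses the fact that the gluing subspaces meet $X$ in a controlled, locally finite manner, so short paths stay near a single representative.

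Next I would treat completeness and properness together. Since $X = \prod_{e\in E}(\R^d)^n$ is a finite-dimensional Euclidean space, it is complete and proper, and closed bounded sets are compact. The quotient map $\pi \colon X \to \bar{X}$ is $1$-Lipschitz and, crucially, \emph{surjective with the property that $\bar{d}(\bar{x},\bar{y}) \le d(x,y)$}; conversely a ball in $\bar{X}$ lifts to a bounded set in $X$ because the gluing is by isometries identifying only finitely many subspaces, so distances in $\bar{X}$ are comparable to distances in $X$ up to the finite combinatorial structure. I would show that a closed ball $\bar{B}(\bar{x}_0, R)$ is the continuous image of a compact subset of $X$ (a finite union of closed Euclidean balls, one around each representative, intersected appropriately), hence compact; properness then gives completeness as well. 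Geodesic existence then follows from the Hopf--Rinow--Cohn-Vossen theorem: a complete, locally compact length space in which closed bounded sets are compact is a geodesic space, so it suffices to confirm that $\bar{d}$ is a \emph{length} metric, which is immediate from \eqref{metricdefn} since $\bar{d}$ is defined as an infimum of lengths of concatenated Euclidean segments.

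For contractibility, I would exhibit an explicit deformation retraction of $\bar{X}$ onto the equivalence class of the origin (the trivial one-vertex tree). The straight-line scaling homotopy $H(x,t) = (1-t)x$ on $X$ descends to $\bar{X}$ because each $\Phi$ is linear, so it commutes with scalar multiplication: $\Phi((1-t)x) = (1-t)\Phi(x)$, which means $x \sim y$ implies $(1-t)x \sim (1-t)y$, so $\bar{H}(\bar{x},t) = \overline{(1-t)x}$ is well-defined. Continuity of $\bar{H}$ follows from continuity of $H$ and of the quotient map. This contracts everything to $\bar{0}$, giving contractibility.

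The main obstacle I anticipate is the positive-definiteness (metric, not pseudometric) step, because \eqref{metricdefn} is an infimum over concatenated paths of unbounded combinatorial length $k$, and a priori one must rule out that infinitely many tiny ''teleportations'' between glued subspaces could accumulate to connect genuinely distinct shapes at zero cost. The finiteness of the edge set $E$, hence of the subspaces $V_i$ and identifications $\Phi$, together with the fact that each $\Phi$ is an isometry fixing the intersection locus, is what makes this tractable: I expect to need a lower bound showing that any path leaving a neighborhood of one representative and reaching a point in a different equivalence class accrues a definite positive length, controlled by the minimal angle or minimal separation between the finitely many glued subspaces. Making this quantitative — essentially a local finiteness and transversality argument for the arrangement of coordinate subspaces being glued — is the technical heart of the proof.
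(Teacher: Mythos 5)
Your overall architecture --- positive-definiteness via a lower bound on admissible chain length, properness by lifting balls to compact subsets of $X$, Hopf--Rinow on a complete locally compact length space, and the scaling homotopy for contractibility --- is the same as the paper's, and your contractibility and Hopf--Rinow steps are correct (the scaling argument is verbatim lemma~\ref{contractible}, and the length-space property is the paper's citation of Bridson--Haefliger). However, both quantitative steps that you yourself flag as the technical heart are carried out with the wrong quantities, and as proposed they fail. The phenomenon you are underestimating is visible already in the simplest tree-space of fig.~\ref{folding}: $X=\R^2$ with the axes glued by $(t,0)\sim(0,t)$. Take $\bar{x}=\overline{(1,\delta)}$ and $\bar{y}=\overline{(\delta,1)}$ for small $\delta>0$. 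Each class has a \emph{unique} representative, and the representatives are at Euclidean distance about $\sqrt{2}$, yet $\bar{d}(\bar{x},\bar{y})\le 2\delta$: move to $(1,0)$, teleport to $(0,1)$, move to $(\delta,1)$. So a short admissible chain does \emph{not} stay near a single representative, and quotient distance is not comparable to Euclidean distance between representatives (the ratio is unbounded as $\delta\to 0$). Consequently no bound depending only on $\bar{x}$ (such as the separation between its representatives) can work --- indeed $\inf_{\bar{y}\neq\bar{x}}\bar{d}(\bar{x},\bar{y})=0$ for every $\bar{x}$ --- and neither can a bound depending only on the arrangement of glued subspaces: these are coordinate subspaces all containing the origin, so their mutual separations are zero and their angles carry no information about the two shapes being compared. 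The correct controlling quantity, used in the paper's proposition~\ref{pseudometricismetric}, is anchored at the points themselves: choose $\epsilon$ much smaller than every nonzero edge attribute norm of $x$ and $y$ (and their pairwise differences); then any chain visiting a stratum that does not contain representatives of both endpoints must collapse a nonzero edge of one of them, costing more than $\epsilon$, and once the chain is confined to a common stratum the metric there is Euclidean in the collapsed coordinates, so total length below $\epsilon$ forces $\bar{x}=\bar{y}$.

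The same example breaks your properness step: $\bar{y}\in\bar{B}(\bar{x},3\delta)$, but no representative of $\bar{y}$ lies within $3\delta$ of any representative of $\bar{x}$, so the closed ball $\bar{B}(\bar{x}_0,R)$ is \emph{not} contained in the image of Euclidean $R$-balls around the representatives of $\bar{x}_0$, and your proposed compact set misses points of the ball. What rescues the argument is not comparability of distances but an invariant: the identifications permute nonzero coordinates, hence preserve the Euclidean norm (lemma~\ref{sizelemma}), so $\|\bar{z}\|:=\|z\|$ is well defined and $1$-Lipschitz along admissible chains (segments change it by at most their length, teleportations not at all). Therefore $p^{-1}(\bar{B}(\bar{x}_0,R))$ is a closed subset of the Euclidean ball of radius $\|x_0\|+R$ centered at the \emph{origin}, hence compact, and $\bar{B}(\bar{x}_0,R)=p(p^{-1}(\bar{B}(\bar{x}_0,R)))$ is compact. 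With these two repairs --- the point-dependent $\epsilon$ for positive-definiteness and the norm invariant for boundedness, which are exactly the paper's two key lemmas --- the rest of your plan (properness $\Rightarrow$ completeness, then Hopf--Rinow, then the scaling homotopy) goes through as you describe.
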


\noindent Thus, given any two trees, we can always find a geodesic between them in both metrics $\bar{d}_1$ and $\bar{d}_2$. \footnote{It can be shown that for \emph{any} metric $d$ on $X$, the induced pseudometric $\bar{d}$ on $\bar{X}$ is a metric.}

We may often want to restrict to a subset of the large tree-space. 

\begin{defn}[Restricted tree-shape space] \label{subspacedef}
Consider a subset $Z \subset X$, which only contains all representations of trees of certain restricted topologies, defined by collapsed subtrees of the maximal tree $\mathscr{T}$. The $i^{th}$ collapsed subtree of $\mathscr{T}$ is characterized by a subset $E_i \subset E$ consisting of the edges in the maximal tree $\mathscr{T}$ which are not collapsed. Associated to it is a linear subspace $Z_i = \prod_{e \in E_i \subset E} (\R^n)^d \subset \prod_{e \in E} (\R^n)^d = X$ containing representations of all the trees of this particular topology. We include all representations of each tree topology, and obtain a restricted preshape space
\[
Z = \bigcup_i Z_i \subset \prod_{e \in E} (\R^n)^d = X,
\]
containing all the trees that have of one of the considered topologies. The equivalence relation $\sim$ on $X$ restricts to an equivalence relation $\sim_Z$ on $Z$, from which we obtain a restricted tree-shape space $\bar{Z} = Z/\sim_Z = p(Z) \subset \bar{X}$. The metric $d$ on $X$ induces a metric $d_Z$ on $Z$ which induces a quotient pseudometric $\bar{d}_Z$ on $\bar{Z}$.
\end{defn}

\begin{ex}
Denote by $Z$ the space of all trees in $X$ with $n$ leaves, now $\bar{Z}$ is the space of tree-shapes with $n$ leaves.
\end{ex}

\begin{rem}
Note that the space of tree-shapes on $n$ leaves in \emph{i)} is different from the Billera-Holmes-Vogtmann (BHV) space \cite{phylogenetic} of trees with vector attributes, because in the BHV space, geodesics will always deform leaves onto leaves, whereas in $\bar{Z}$, leaves can be transformed to non-leaf branches by a geodesic.
\end{rem}

Even in the restricted tree-shape space, we obtain an induced metric:

\begin{cor} \label{subspacemetric}
The pseudometric $\bar{d}_{\bar{Z}}$ on $\bar{Z}$ is a metric, and $(\bar{Z}, \bar{d}_{\bar{Z}})$ a contractible, complete, proper metric space.
\end{cor}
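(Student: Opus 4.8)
The plan is to derive all four properties from the structure of $Z$ as a finite union of coordinate subspaces, invoking Theorem~\ref{orderedthm} for only one step. Write $p\colon X \to \bar{X}$ for the quotient map and $\bar{0}=p(0)$ for the class of the trivial (one-vertex) tree. Two structural observations drive everything. First, $Z=\bigcup_i Z_i$ is a finite union of linear subspaces, hence closed in $X$ and invariant under the scaling $x\mapsto(1-t)x$. Second, each identification $\Phi\colon(x_e)\mapsto(x_{\varphi(e)})$ merely relabels coordinate blocks by the order-preserving bijection $\varphi$, so it fixes the origin and preserves both norms $\|\cdot\|_1,\|\cdot\|_2$; since $\sim_Z$ is the restriction of $\sim$, it too is norm-preserving, and therefore $\|\cdot\|$ descends to a well-defined function $\|\bar{z}\|$ on $\bar{Z}$.

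First I would check that $\bar{d}_{\bar{Z}}$ separates points. Because $\bar{d}_{\bar{Z}}$ is built from $d_Z=d|_{Z\times Z}$ using only connecting sequences lying in $Z$ and identifications drawn from $\sim_Z$, every sequence admissible for $\bar{d}_{\bar{Z}}$ is also admissible for $\bar{d}$, with identical per-segment costs; hence $\bar{d}_{\bar{Z}}(\bar{x},\bar{y})\ge\bar{d}(\bar{x},\bar{y})$ for all $\bar{x},\bar{y}\in\bar{Z}$. Theorem~\ref{orderedthm} says $\bar{d}$ separates points of $\bar{X}$, and the inclusion $\bar{Z}=p(Z)\hookrightarrow\bar{X}$ is injective (two points of $Z$ are $\sim_Z$-equivalent iff they are $\sim$-equivalent, as $\sim_Z$ is the restriction of $\sim$). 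So $\bar{d}_{\bar{Z}}(\bar{x},\bar{y})=0$ forces $\bar{d}(\bar{x},\bar{y})=0$, giving $\bar{x}=\bar{y}$. This is the only place the theorem is used.

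Next I would prove properness directly via the norm, which automatically yields completeness. An elementary telescoping estimate, using that the identifications preserve the norm and fix the origin, gives $\bar{d}_{\bar{Z}}(\bar{z},\bar{0})=\|\bar{z}\|$: the straight segment from $z$ to $0$ stays inside the subspace $Z_i\ni z$, so $\bar{d}_{\bar{Z}}(\bar{z},\bar{0})\le\|z\|$, while the matching lower bound follows from norm-preservation of $\sim_Z$. Consequently $\bar{B}(\bar{0},r)=p\big(Z\cap\bar{B}_X(0,r)\big)$ is the $1$-Lipschitz image of a set that is closed in the proper space $X$ and bounded, hence compact; and a general closed ball satisfies $\bar{B}(\bar{x},r)\subseteq\bar{B}(\bar{0},\|\bar{x}\|+r)$, so it is a closed subset of a compact set and is therefore compact. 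Thus $(\bar{Z},\bar{d}_{\bar{Z}})$ is proper, whence complete, since bounded Cauchy sequences lie in a compact ball.

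Finally, contractibility follows by retracting onto the trivial tree: $\tilde{H}(x,t)=(1-t)x$ maps $Z\times[0,1]$ into $Z$ because each $Z_i$ is linear, and it commutes with every linear, origin-fixing $\Phi$, so it descends to $H\colon\bar{Z}\times[0,1]\to\bar{Z}$ with $H(\cdot,0)=\id$ and $H(\cdot,1)\equiv\bar{0}$; continuity of $H$ follows from that of $\tilde{H}$ through the quotient map, exactly as for the analogous contraction of $\bar{X}$. I expect the main obstacle to be the tempting but unsafe shortcut of calling $\bar{Z}$ simply a closed subspace of the proper space $\bar{X}$: the inequality $\bar{d}_{\bar{Z}}\ge\bar{d}|_{\bar{Z}}$ can be strict, because a path realizing $\bar{d}$ may leave $Z$, so the two metrics need not induce the same balls or topology. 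Routing properness through the intrinsic norm on $\bar{Z}$, as above, avoids this pitfall, and it is why I do not claim $\bar{Z}$ is geodesic (which would require $Z$ to be a length space, generally false for a union of subspaces).
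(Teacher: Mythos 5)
Your proposal is correct and takes essentially the same route as the paper: the separation of points is proved by observing that every chain admissible for $\bar{d}_{\bar{Z}}$ is admissible for $\bar{d}$ and then invoking theorem~\ref{orderedthm}, while completeness, properness and contractibility are obtained by rerunning the arguments of section~\ref{theoremproof} (norm descends to the quotient, closed balls are compact images/preimages of compact sets in $Z \subset X$, and the scaling homotopy respects $Z$ and $\sim_Z$). Your added cautions --- that $\bar{d}_{\bar{Z}}$ may strictly exceed $\bar{d}|_{\bar{Z}}$ so $\bar{Z}$ cannot be treated as a metric subspace of $\bar{X}$, and that no geodesic claim should be made --- are consistent with the corollary as stated, which (unlike theorem~\ref{orderedthm}) asserts only a proper metric space.
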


\begin{proof}
First, we show that the pseudometric is a metric. The pseudometric in equation~\eqref{metricdefn} defines the distance $\bar{d}(\bar{x}, \bar{y})$ as the infimum of lengths of paths in $\bar{X}$ connecting $\bar{x}$ and $\bar{y}$. Any path in $\bar{Z}$ is also a path in $\bar{X}$, so if $\bar{d}_{\bar{Z}}(\bar{x}, \bar{y}) = 0$, then $\bar{d}(\bar{x}, \bar{y}) = 0$ as well, so $\bar{x} = \bar{y}$.

The proofs of the other claims follow the proof of theorem~\ref{orderedthm} as in section~\ref{theoremproof}.
\end{proof}

\subsection{From ordered to unordered trees}

The world is not two-dimensional, and for most applications it is necessary to study embedded trees in $\R^3$. The main difference from the ordered case is that trees in $\R^3$ have no canonical edge order. The left-right order on children of planar trees gives an implicit preference for edge matchings, and hence reduces the number of possible matches. When we no longer have this preference, we consider all orderings of the same tree and choose orders which minimize the distance.

We define the space of (unordered) tree-like shapes in $3D$ as the quotient $\bar{\bar{X}} = \bar{X} /G$, where $G$ is the group of reorderings of the maximal binary tree $\mathscr{T}$. The metric $\bar{d}$ on $\bar{X}$ induces a quotient pseudometric $\bar{\bar{d}}$ on $\bar{\bar{X}}$. Again, we can prove:

\begin{thm} \label{unorderedthm}
For $\bar{\bar{d}}$ induced by either $\bar{d}_1$ or $\bar{d}_2$, the function $\bar{\bar{d}}$ is a metric and the space $(\bar{\bar{X}}, \bar{\bar{d}})$ is a contractible, complete, proper geodesic space.

The same result holds for restricted tree-spaces with several restricted tree topologies: Let $Z \subset X$ be a subspace containing only trees of certain restricted topologies, as in definition~\ref{subspacedef}, which is saturated with respect to the reordering group\footnote{$Z$ is saturated if, for each tree topology appearing in $Z$, all reorderings of the same tree topology also appear in $Z$. Equivalently, $Z = p^{-1}(p(Z))$, where $p \colon \bar{X} \to \bar{X}/G$} $G$. For the corresponding restricted tree-space $\bar{\bar{Z}} = \bar{Z}/G \subset \bar{\bar{X}}$, the quotient pseudometric $\bar{\bar{d}}_Z$ is a metric and the space $(\bar{\bar{Z}}, \bar{\bar{d}}_Z)$ is a contractible, complete, proper geodesic space. \hfill \qed
\end{thm}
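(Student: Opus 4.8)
The plan is to build on Theorem~\ref{orderedthm} as heavily as possible, treating the passage from $\bar{X}$ to $\bar{\bar{X}} = \bar{X}/G$ as another quotient by a finite group of isometries. First I would observe that each reordering $g \in G$ permutes the edges of $\mathscr{T}$ while preserving the sibling structure, hence acts on $X = \prod_{e \in E}(\R^d)^n$ by permuting coordinate blocks. Such a block permutation is an isometry for both $d_1$ and $d_2$, since the norms $\|\cdot\|_1$ and $\|\cdot\|_2$ are symmetric in the edgewise terms. Moreover, the $G$-action is compatible with the equivalence relation $\sim$: reordering commutes with collapsing edges, so $g$ descends to an isometry of $(\bar{X}, \bar{d})$. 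Thus $G$ acts on $\bar{X}$ by isometries, and $\bar{\bar{d}}$ is precisely the quotient pseudometric for a \emph{finite} group of isometries acting on a complete, proper geodesic space.

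With this setup the metric axiom becomes tractable: for a finite group of isometries, the quotient pseudometric is realized as an actual minimum over the finitely many group elements, namely $\bar{\bar{d}}(\bar{\bar{x}}, \bar{\bar{y}}) = \min_{g \in G} \bar{d}(\bar{x}, g\bar{y})$, rather than an infimum over arbitrarily long concatenated paths. Here the finiteness of $G$ is essential, and I would verify this reduction carefully, as it is where the ''teleportation'' structure of~\eqref{metricdefn} collapses to a single optimal reordering. Given this formula, $\bar{\bar{d}}(\bar{\bar{x}}, \bar{\bar{y}}) = 0$ forces $\bar{d}(\bar{x}, g\bar{y}) = 0$ for some $g$, whence $\bar{x} = g\bar{y}$ by Theorem~\ref{orderedthm}, i.e.\ $\bar{\bar{x}} = \bar{\bar{y}}$; this gives the metric property. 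Properness and completeness then transfer because $p \colon \bar{X} \to \bar{\bar{X}}$ is a finite, distance-nonincreasing, proper quotient map: closed balls in $\bar{\bar{X}}$ pull back to finite unions of closed balls in $\bar{X}$, which are compact by properness of $\bar{X}$, and their continuous image is compact. The geodesic property follows since $\bar{X}$ is a geodesic space and geodesics descend through a quotient by isometries (alternatively, invoke the Hopf--Rinow theorem, which holds once we know the space is complete, proper, and locally geodesic). For contractibility I would note that $G$ fixes the origin (the trivial one-vertex tree represented by all-zero attributes), so the straight-line contraction of $X$ to the origin is $G$-equivariant and descends to a contraction of $\bar{\bar{X}}$.

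For the restricted statement, the saturation hypothesis $Z = p^{-1}(p(Z))$ is exactly what guarantees that $G$ restricts to an action on $\bar{Z}$, so $\bar{\bar{Z}} = \bar{Z}/G$ is again a quotient of the complete proper geodesic space $\bar{Z}$ (from Corollary~\ref{subspacemetric}) by a finite group of isometries, and the identical argument applies verbatim. I expect the main obstacle to be rigorously justifying the reduction of the infimum in~\eqref{metricdefn} to the finite minimum $\min_{g \in G} \bar{d}(\bar{x}, g\bar{y})$: one must argue that an optimal concatenated path in $\bar{\bar{X}}$ never benefits from switching reorderings mid-path, i.e.\ that any path may be lifted to $\bar{X}$ with its endpoints related by a single $g \in G$ without increasing length. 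This is where the interaction between the two quotients ($\sim$ and $G$) must be controlled, and I would handle it by showing the two equivalence relations commute and that the combined equivalence classes are finite unions of $\sim$-classes, so that $\bar{\bar{X}} = X/\langle \sim, G\rangle$ directly and the earlier quotient-pseudometric machinery applies to the combined relation.
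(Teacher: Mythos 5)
Your proposal is correct and takes essentially the same approach as the paper, which gives no detailed proof of this theorem but states that it follows from ``standard results on compact transformation groups'' combined with the techniques used for theorem~\ref{orderedthm}. Your reduction of the quotient pseudometric to the finite minimum $\min_{g \in G} \bar{d}(\bar{x}, g\bar{y})$ (valid because $G$ acts on $(\bar{X},\bar{d})$ by isometries compatibly with $\sim$) is precisely the standard transformation-group fact being invoked, and your transfer of completeness, properness, geodesicity and the $G$-equivariant contraction, as well as the saturated restricted case, flesh out the paper's sketch.
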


%

Note that $G$ is a finite group, which means that $\bar{\bar{X}}$ is locally well-behaved almost everywhere. In particular, off fixed-points for the action of reorderings $g \in G$ on $\bar{X}$, the projection $\bar{p} \colon \bar{X} \to \bar{\bar{X}}$ is a local isometry, i.e., it is distance preserving within a neighborhood. Hence, the geometry from $\bar{X}$ is preserved off the fixed points. Geometrically, a fixed point in $\bar{X}$ is an ordered tree-shape where a reordering $g$ of certain branches does not change the ordered tree-shape; that is, some pair of sibling edges must have the same shape attributes. In particular, the fixed points are non-generic because they belong to the lower-dimensional subset of $\bar{X}$ where two sibling edges have identical shape. Theorem~\ref{unorderedthm} can be proved using standard results on compact transformation groups along with similar techniques as for theorem~\ref{orderedthm}.

While considering all different possible orderings of the tree is easy from the point of view of geometric analysis, in reality this becomes a computationally impossible task when the size of the trees grow beyond a few generations. In real applications we can, however, efficiently reduce complexity using heuristics and approximations, as discussed in section~\ref{complexitysection}.

\subsection{Geometric interpretation of the metrics} \label{interpretation}

It follows from the definition that the metrics $\bar{d}_1$ and $\bar{\bar{d}}_1$ coincide with the classical tree edit distance (TED) metric for ordered and unordered trees, respectively. In this way the abstract, geometric construction of tree-space gives a new way of viewing the intuitive TED algorithm.

The metrics $\bar{d}_2$ and $\bar{\bar{d}}_2$ are descents of the Euclidean metric on $\bar{X}$, and geodesics in this metric are concatenations of straight lines in flat regions. In section~\ref{fundamental} we compare the two metrics using examples.

\subsection{Proof of theorem~\ref{orderedthm}} \label{theoremproof}

We now pass to the proof of theorem~\ref{orderedthm}. The rest of the article is independent of this section, and during the first read, the impatient reader may skip to section~\ref{curvaturesection}. However, while the proof is technical, theorem~\ref{orderedthm} is a fundamental building block for the shape space framework. We shall assume that the reader has a good knowledge of metric geometry or general topology \cite{bridsonhaef,dug}. It is crucial for the proof that we are only identifying subspaces of the Euclidean space $X$ which are spanned by Euclidean axes, and these are finite in number. This induces a well-behaved projection $p \colon X \to \bar{X}$, which carries many properties from $X$ to $\bar{X}$.

\subsubsection{Precise shape space definition} 

We say that ordered tree-shapes whose (collapsed) ordered structure is the same, belong to the same \emph{combinatorial tree-shape type}. For each combinatorial type of ordered tree-shape $C_j$ ($j = 1, \ldots, K$) which can be represented by collapsing edges in the maximal tree $\mathscr{T} = (V, E, r, <)$, there is a family $E^i_j$ of subsets of $E$, which induce that particular type $C_j$ when we endow the edges in $E^i_j$ ($i = 1, \ldots, n_j$) with nonzero attributes and leave all other edges with zero attributes. These subsets are characterized by the properties
\begin{itemize}
\item[a)] the cardinality $|E^1_j| = | E^i_j|$ for all $i=1, \ldots, n_j$, $j = 1, \ldots, K$.
\item[b)] there is a depth-first order on each $E^i_j$ induced by the depth-first order on $E$, such that the ordered, combinatorial structure defined by any $E^i_j$ coincides with that defined by $E^1_j$.
\end{itemize}

That is, the subset $E^i_j$ for any $i$ lists the set of edges in $\mathscr{T}$ which have nonzero attributes for the $i^{th}$ representation of any shape of type $C_j$. Corresponding to each $E^i_j$ is the linear subspace $V^i_j$ of $X$ given by 
\begin{equation}
V^i_j = \left\{(x_e) \in \prod_{e \in E} (\R^d)^n | x_e = 0 \textrm{ if } e \notin E^i_j\right\}
\end{equation}
and by condition b) we can define isometries $\phi^i_j \colon V^i_j \to \prod_{e \in E^1_j} \R^{dn}$ by forgetting the zero entries in $V^i_j$ and keeping the depth-first coordinate order. We generate the equivalence $\sim$ on $X$ by asking that $z \sim w$ whenever $\phi^i_j(z) = \phi^l_j(w)$ for some $i,j,l$. We now define the space of ordered tree-like shapes as the quotient $\bar{X} = X/\sim$, and define the quotient map $p \colon X \to \bar{X}$.

\subsubsection{The pseudometric is a metric}

It is clear from the definition that the distance function $\bar{d}$ defined in \eqref{metricdefn} is symmetric and satisfies the triangle inequality, which makes it a \emph{pseudometric}.

\begin{prop} \label{pseudometricismetric}
Let $d$ denote $d_1$ or $d_2$. The pseudometric $\bar{d}$ is a metric on $\bar{X}$.
\end{prop}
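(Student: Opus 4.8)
The plan is to establish a quantitative \emph{local} lower bound on $\bar d$ strong enough to force $\bar d(\bar x, \bar y) = 0 \Rightarrow \bar x = \bar y$, exploiting exactly the three structural features highlighted before the proposition: there are finitely many identified subspaces $V^i_j$, each is closed and spanned by coordinate axes, and each gluing map $\phi^i_j$ is an isometry that identifies representations of one and the same tree-shape.

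First I would record that every equivalence class is finite. Since a point $x$ with a given nonzero-attribute pattern realizes a single combinatorial type $C_j$, its class $\bar x$ is in bijection with the finite family of subsets $E^i_j$ realizing $C_j$; hence $\bar x = \{x^{(1)}, \ldots, x^{(m)}\} \subset X$ is a finite set, uniformly bounded by $2^{|E|}$. Writing $P = \bar x$ and $Q = \bar y$, the single-segment quantity $\delta := \inf_{p \in P,\, q \in Q} d(p,q)$ is therefore a minimum over a finite set, and $\delta > 0$ precisely when the disjoint finite sets $P, Q$ are distinct, i.e.\ when $\bar x \neq \bar y$. By the $k=1$ case of \eqref{metricdefn} we already have $\bar d \le \delta$, so the whole content is to prove a matching lower bound.

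The main step is a local-radius argument built on the $1$-Lipschitz function $H(z) = \min_{p \in P} d(z, p)$, the distance in $X$ from $z$ to the finite center set $P = \bar x$. Because the $V^i_j$ are finite in number and closed, there is an $r > 0$ such that, for each center $x^{(l)}$, the ball $B(x^{(l)}, r)$ meets only those subspaces that contain $x^{(l)}$, contains no other center, and (since the $\phi^i_j$ identify only representations of the same shape) is carried by every applicable gluing onto a ball about another center isometrically. Consequently, as long as a point stays within $H < r$ of $P$, each gluing jump $y_i \sim x_{i+1}$ preserves $H$, since it maps the unique nearest center to a center isometrically, while each Euclidean segment changes $H$ by at most its own length. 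Telescoping along a chain whose total length $L$ is less than $r$, starting from $x_1 \in P$ where $H = 0$, then yields $L \ge H(y_k) \ge \delta$; and any chain with $L \ge r$ trivially satisfies $L \ge r$. Taking the infimum gives $\bar d(\bar x, \bar y) \ge \min(r, \delta)$, which is positive whenever $\bar x \neq \bar y$. The same $r$ and estimate apply verbatim to both $d_1$ and $d_2$, since each $\phi^i_j$ only forgets zero coordinate blocks and both norms are invariant under that operation.

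The hard part is the jump-invariance of $H$ within radius $r$, and this is exactly where the hypotheses are used: finiteness and closedness of $\{V^i_j\}$ supply the positive radius on which only incident subspaces appear and the centers are isolated, while the axis-spanned structure guarantees that each $\phi^i_j$ restricts to a genuine isometry carrying the center set into itself. Once this is in place the telescoping is routine. I would also remark that the argument proves slightly more than claimed: on the scale $r$ the quotient distance coincides with the single-segment distance $\delta$, so no genuine ``folding shortcut'' can shrink the distance between distinct shapes, which is the geometric reason the pseudometric is in fact a metric.
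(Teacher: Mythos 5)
Your proof is correct, but it takes a genuinely different route from the paper's. The paper first reduces to $\bar{d}_2$ via $\bar{d}_1 \ge \bar{d}_2$, fixes $\epsilon$ smaller than every nonzero edge attribute (and attribute difference) occurring in $\bar{x}$ and $\bar{y}$, and then argues at the level of the trees themselves: either $\bar{x}$ and $\bar{y}$ have representatives in a common identified subspace $V^i_j$, in which case a connecting path of length $<\epsilon$ can only shift nonzero attributes and so forces $\bar{x}=\bar{y}$, or they do not, in which case any path must annihilate a nonzero edge of $\bar{y}$ and hence costs more than $\epsilon$. You instead run an abstract metric-geometry argument: finiteness of the classes gives $\delta = d(P,Q)>0$ for the disjoint finite sets $P=\bar{x}$, $Q=\bar{y}$ in $X$; finiteness and closedness of the $V^i_j$ give the radius $r$; and since each gluing is an isometry carrying $P$ into $P$, the $1$-Lipschitz function $H=d(\cdot,P)$ cannot increase at a jump performed within distance $r$ of $P$, so telescoping yields the quantitative bound $\bar{d}(\bar{x},\bar{y}) \ge \min(r,\delta)$. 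Your version buys several things: a single argument covering $d_1$ and $d_2$ without the reduction; an explicit lower bound where the paper's final step (``this will only give a sum $<\epsilon$ if the trees are identical and the path is constant'') is asserted rather than argued; and generality --- nothing tree-specific enters, only that finitely many closed subspaces are glued by isometries with finite classes, so your argument essentially establishes the paper's footnoted claim that the quotient pseudometric is a metric in much wider generality. The paper's version buys brevity and stays close to the tree interpretation of the gluings.

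Two small repairs are needed in your write-up, neither of which is a genuine gap. First, a jump $y_i \sim x_{i+1}$ in eq.~\eqref{metricdefn} lies in the equivalence \emph{generated} by the elementary identifications $\phi^i_j$, so it may be a composition of several gluings rather than a single one; your estimate survives by applying the single-gluing argument to each factor in turn, since each factor keeps $H<r$ and hence the next factor's nearest center again lies in the relevant incident subspace. Second, ``preserves $H$'' should read ``does not increase $H$'', which is all the telescoping requires; actual preservation does follow afterwards by applying the same argument to the inverse gluing, but it is not what the induction uses.
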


\begin{proof}
It suffices to show that $\bar{d}_i(\bar{x}, \bar{y}) = 0$ implies $\bar{x} = \bar{y}$ ($i = 1, 2$). Moreover, it is also easy to show that $d_1(\bar{x}, \bar{y}) \ge d_2(\bar{x}, \bar{y})$ for any $\bar{x}, \bar{y} \in \bar{X}$, so it suffices to show that $\bar{d}_2(\bar{x}, \bar{y}) = 0$ implies $\bar{x} = \bar{y}$. Hence, from now on, write $\bar{d}$ for $\bar{d}_2$, and assume that $\bar{d}(\bar{x}, \bar{y}) = 0$ for two tree-shapes $\bar{x}$ and $\bar{y}$.

Choose $\epsilon > 0$ such that 
\begin{equation}
\epsilon \ll \min \left\{
\begin{array}{c|c}
\begin{array}{c}
\|x_e\| > 0,\\
\|y_e\| > 0,\\
\|x_e - x_{\tilde{e}}\| > 0,\\
\|y_e - y_{\tilde{e}} \| > 0,\\
\end{array}  
&
\begin{array}{c}
x = (x_e) \in \bar{x},\\
y = (y_e) \in \bar{y}
\end{array}
\end{array} \right\},
\end{equation}  
that is, $\epsilon$ is smaller than the size of any of the non-zero edges in $\bar{x}$ and $\bar{y}$.

We may assume that $x, y \in \bigcup_{i,j} V^i_j$ since otherwise we may assume by symmetry that $\bar{x} = \{x\}$ and $\bar{d}(\bar{x}, \bar{y}) \ge \min \{d(x, \bar{y}), d(x, \bigcup_{i,j} V^i_j)\} > 0$.

Denote by $\bar{X}_j$ the image of $V^i_j$ under the quotient projection $p \colon X \to \bar{X}$ for any $i$.

We may assume that $\bar{x}$ and $\bar{y}$ belong to the same identified subspace; that is, there exist $i, j$ such that 
\begin{equation} \label{inthesame}
\bar{x} \cap V^i_j \neq \emptyset, \ \bar{y} \cap V^i_j  \neq \emptyset
\end{equation}
since otherwise, 
\begin{equation} \label{emptyint}
\bar{y} \cap \left( \bigcup \{V^i_j | \bar{x} \cap V_j^i \neq \emptyset\} \right) = \emptyset \textrm{ for all } i, j.
\end{equation}
Since $\bar{y}$ is a finite set, and $\bigcup \{V^i_j | \bar{x} \cap V^i_j \neq \emptyset \}$ is a closed set, (\ref{emptyint}) implies
\begin{equation}
d\left(\bar{y}, \bigcup \{V^i_j | \bar{x} \cap V^i_j \neq \emptyset \} \right) > 0.
\end{equation}
In this case, the path will have to go through some $V^{\tilde{i}}_{\tilde{j}}$ which does not contain points equivalent to $y$, and
\begin{equation}
d \left( \bar{y}, \bigcup \{ V^i_j | \bar{y} \cap V^i_j = \emptyset \} \right) > \epsilon,
\end{equation}
since in order to reach $\bigcup \{ V^i_j | \bar{y} \cap V^i_j = \emptyset \}$, we need to remove edge attributes which are nonzero in $\bar{y}$, and $\epsilon \ll \|y_e\|$ for all $y_e \neq 0$. Thus, eq.~\ref{inthesame} holds, and in fact, it holds for all the intermediate path points $\bar{x}_i$ from eq.~\ref{metricdefn}.

But if the path points stay in $\bar{X}_j$, then the path consists of shifting and changing the nonzero edge attributes of the trees in question. This will only give a sum $< \epsilon$ if the trees are identical and the path is constant.
\end{proof}

\subsubsection{Topology of the space of tree-like shapes}

Here, we prove the rest of theorem~\ref{orderedthm}, namely that the tree-shape space $(\bar{X}, \bar{d})$ is a complete, proper geodesic space, and $\bar{X}$ is contractible. First, we note that although $\bar{X}$ is not a vector space, there is a well-defined notion of size for elements of $\bar{X}$, induced by the norm on $X$:

\begin{lem} \label{sizelemma}
Note that if $x \sim y$, we must have $\| x \| = \| y \|$; hence we can define $\| \bar{x} \| :=\| x \|$.
\end{lem}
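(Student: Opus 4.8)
The plan is to show that each of the two norms $\|\cdot\|_1$ and $\|\cdot\|_2$ is invariant under the elementary identifications that generate $\sim$, and then to propagate this invariance through the transitive closure so that $\|\cdot\|$ descends to a well-defined function on equivalence classes.

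First I would unwind the definition of $\sim$ given in the precise shape space definition: the relation is generated by the elementary identifications $z \sim w$ holding whenever $\phi^i_j(z) = \phi^l_j(w)$ for some indices $i,j,l$, where $\phi^i_j \colon V^i_j \to \prod_{e \in E^1_j} \R^{dn}$ forgets the identically-zero coordinates indexed by $e \notin E^i_j$ and relabels the surviving coordinates according to the depth-first order. These maps are, by construction, isometries, and each sends the origin to the origin, since forgetting the zero blocks of the all-zero vector again yields the all-zero vector. Hence for $z \in V^i_j$ we have $\|z\| = d(z,0) = d(\phi^i_j(z),\phi^i_j(0)) = \|\phi^i_j(z)\|$, and likewise $\|w\| = \|\phi^l_j(w)\|$. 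Therefore any elementary identification forces $\|z\| = \|\phi^i_j(z)\| = \|\phi^l_j(w)\| = \|w\|$, so the norm is preserved across a single identification.

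Next I would pass to the full relation: any $x \sim y$ is witnessed by a finite chain $x = z_0, z_1, \ldots, z_m = y$ of elementary identifications, and since $\|z_{k-1}\| = \|z_k\|$ at every step and equality of reals is transitive, we conclude $\|x\| = \|y\|$. Consequently the assignment $\|\bar{x}\| := \|x\|$ does not depend on the chosen representative $x \in \bar{x}$, which is exactly the claim.

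The only point needing care is the norm-invariance of the generators $\phi^i_j$, and this is where the restriction to the two specific norms enters: deleting or adjoining a zero block $x_e = 0$ and permuting the edge blocks leaves $\|x\|_1 = \sum_{e} \|x_e\|$ and $\|x\|_2 = \sqrt{\sum_{e} \|x_e\|^2}$ unchanged, because a zero block contributes $\|0\| = 0$ to either expression and both are symmetric in the edge index $e$. I expect no genuine obstacle here; the content of the lemma is simply that $\sim$ only permutes the edgewise attribute blocks and adjoins or removes zero blocks, operations under which these two norms are manifestly invariant.
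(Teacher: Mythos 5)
Your proof is correct and takes essentially the same route as the paper's: both arguments observe that the generating identifications are realized by the isometries $\phi^i_j$ (hence norm-preserving, since they merely drop zero blocks and reorder coordinates), and then propagate this invariance through the recursive/transitive generation of $\sim$. Your added care in checking that the $\phi^i_j$ fix the origin is a fine point the paper leaves implicit, but it does not constitute a different approach.
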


\begin{proof}
The equivalence is generated recursively from the conditions $x \sim y$ whenever either $x = y$, indicating $\|x\| = \|y\|$; or $\phi^i_j(x) = \phi^i_k(y)$, indicating $\|x\|=\|\phi^i_j(x)\|=\|\phi^i_k(y)\|=\|y\|$ since the $\phi$ are isometries. Hence, the lemma holds by recursion.
\end{proof}

We will prove that $(\bar{X}, \bar{d})$ is a proper geodesic space using the Hopf-Rinow theorem for metric spaces \cite{bridsonhaef}, which states that every complete locally compact length space is a proper geodesic space. A length space is a metric space in which the distance between two points can always be realized as the infimum of lengths of paths joining the two points. Note that this is a weaker property than being a geodesic space, as the geodesic joining two points does not have to exist; it is enough to have paths that are arbitrarily close to being a geodesic. It follows from \cite[chapter I lemma 5.20]{bridsonhaef} that $(\bar{X}, \bar{d})$ is a length space for \emph{any} metric $d$ on $X$ where $\bar{d}$ is a metric.

To see that the tree-shape space is locally compact, note that the projection $p \colon X \to \bar{X}$ is finite-to-one, so any open subset $U$ of $\bar{X}$ has as pre-image a finite union $\bigcup_{i=1}^N U_i$ of open subsets of $X$, such that $\textrm{diam}(U_i) = \textrm{diam}(U)$ and $p(\bigcup_i \bar{U}_i) = \bar{U}$ is compact whenever $U$ is bounded.

We also need to prove that $(\bar{X}, \bar{d})$ is complete:

\begin{prop} \label{completenessdescends}
Let $\bar{d}$ denote either of the metrics $\bar{d}_1$ and $\bar{d}_2$. The shape space $(\bar{X}, \bar{d})$ is complete.
\end{prop}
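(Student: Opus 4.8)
The plan is to exploit the fact that $X = \prod_{e \in E}(\R^d)^n$ is a finite-dimensional Euclidean space, so that its closed bounded subsets are compact, and to transport this compactness through the finite-to-one projection $p \colon X \to \bar{X}$. I would take an arbitrary Cauchy sequence $(\bar{x}_n)$ in $(\bar{X}, \bar{d})$ and show it converges. The guiding idea is that although the quotient identifications (the ``teleportations'' of fig.~\ref{folding}) destroy the linear structure of $X$, they preserve norm by lemma~\ref{sizelemma}, so the well-defined size function $\bar{x} \mapsto \|\bar{x}\|$ controls how far a Cauchy sequence can wander.

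The key preliminary step is to verify that the size function is $1$-Lipschitz for $\bar{d}$, i.e.\ $\bigl|\,\|\bar{x}\| - \|\bar{y}\|\,\bigr| \le \bar{d}(\bar{x}, \bar{y})$. This follows by examining a near-optimal concatenated path from $\bar{x}$ to $\bar{y}$ as in \eqref{metricdefn}: along each Euclidean segment the norm changes by at most the length of that segment (the reverse triangle inequality in $X$), while at each identification the norm is unchanged, since identified points have equal norm by lemma~\ref{sizelemma}. Summing over the segments bounds the total change in norm by the path length, and taking the infimum over such paths yields the estimate. Consequently any Cauchy sequence $(\bar{x}_n)$ is bounded in size: there is an $R$ with $\|\bar{x}_n\| \le R$ for all $n$.

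Next I would lift to $X$. Choosing any representative $x_n \in \bar{x}_n$ gives $\|x_n\| = \|\bar{x}_n\| \le R$, so the lifts $(x_n)$ all lie in the closed Euclidean ball of radius $R$ in $X$, which is compact by Heine--Borel. Hence $(x_n)$ has a subsequence $x_{n_k}$ converging to some $x \in X$. Since the single-segment path ($k=1$ in \eqref{metricdefn}) shows $\bar{d}(p(a), p(b)) \le d(a,b)$, the projection $p$ is $1$-Lipschitz and therefore continuous, so $\bar{x}_{n_k} = p(x_{n_k}) \to p(x) =: \bar{x}$ in $\bar{X}$. A Cauchy sequence possessing a convergent subsequence converges to the same limit, so $\bar{x}_n \to \bar{x}$, which establishes completeness. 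The argument is identical for $\bar{d}_1$ and $\bar{d}_2$, as both arise from norms on the finite-dimensional space $X$.

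I expect the main obstacle to be precisely the size-control step: one must be careful that the infimum defining $\bar{d}$ is approximated by paths that alternate Euclidean segments with norm-preserving teleportations, and that the reverse triangle inequality is applied segment-by-segment before summing, so that lemma~\ref{sizelemma} can be invoked exactly at the identifications. Everything else --- compactness of closed balls in $X$, continuity of $p$, and the convergent-subsequence argument --- is routine once the size function is known to be $1$-Lipschitz with respect to $\bar{d}$.
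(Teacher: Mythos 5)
Your proof is correct, and its mathematical core coincides with the paper's: both arguments hinge on lemma~\ref{sizelemma} to transfer boundedness from $(\bar{X},\bar{d})$ back to the Euclidean pre-shape space $X$, then invoke Heine--Borel in $X$ and push compactness down through the continuous (indeed $1$-Lipschitz) projection $p$. The difference is one of packaging. The paper proves lemma~\ref{closedandbounded} (closed bounded subsets of $\bar{X}$ are compact) and then cites Dugundji's criterion (lemma~\ref{duglemma}: uniformly compact closed balls imply completeness), whereas you carry out the Cauchy-sequence argument by hand: bound the sequence via the $1$-Lipschitz size function, lift to a compact ball in $X$, extract a convergent subsequence, and project back down; a Cauchy sequence with a convergent subsequence converges. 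Your route is self-contained where the paper's delegates to a general-topology reference, so it buys nothing in generality but removes a citation. More substantively, your explicit verification that $\bigl|\,\|\bar{x}\|-\|\bar{y}\|\,\bigr| \le \bar{d}(\bar{x},\bar{y})$ --- reverse triangle inequality segment-by-segment, lemma~\ref{sizelemma} at each identification, then the infimum over paths --- actually firms up a step the paper asserts without argument: the equality $p^{-1}(\bar{B}_{\bar{d}}(\bar{0},R)) = \bar{B}_d(0,R)$ in the proof of lemma~\ref{closedandbounded} requires precisely the nontrivial inequality $\|\bar{x}\| \le \bar{d}(\bar{0},\bar{x})$, which is your Lipschitz estimate specialized to $\bar{y}=\bar{0}$. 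The one point to keep straight (and you do, implicitly) is that for $\bar{d}_1$ the size function must be taken with respect to $\|\cdot\|_1$ and for $\bar{d}_2$ with respect to $\|\cdot\|_2$; both are preserved by the coordinate-permuting isometries $\phi^i_j$, so lemma~\ref{sizelemma} applies in either case.
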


The proof needs a lemma from general topology:

\begin{lem} \emph{\cite[chapter XIV, theorem~2.3]{dug}} \label{duglemma}
Let $(Y,d)$ be a metric space and assume that the metric $d$ has the following property: \emph{There exists $\epsilon > 0$ such that for all $y \in Y$ the closed ball $\bar{B}(y, \epsilon)$ is compact.} Then $\bar{d}$ is complete. 
\hfill \qed
\end{lem}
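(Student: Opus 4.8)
The plan is to prove completeness directly from the sequential definition, letting the compact closed balls of the fixed radius $\epsilon$ do all the work. I would begin with an arbitrary Cauchy sequence $(y_n)$ in $Y$ and apply the Cauchy condition to the \emph{specific} radius $\epsilon$ supplied by the hypothesis: fix an index $N$ with $d(y_m, y_n) < \epsilon$ for all $m, n \ge N$. In particular $d(y_n, y_N) < \epsilon$ for every $n \ge N$, so the entire tail $(y_n)_{n \ge N}$ lies inside the closed ball $\bar{B}(y_N, \epsilon)$, which is compact by assumption.

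Next I would invoke sequential compactness of compact metric spaces: the tail $(y_n)_{n \ge N}$, being a sequence in the compact set $\bar{B}(y_N, \epsilon)$, admits a subsequence $(y_{n_k})$ converging to some limit $y \in \bar{B}(y_N, \epsilon)$. I would then use the elementary fact that a Cauchy sequence possessing a convergent subsequence converges to the same limit. Concretely, given $\delta > 0$, choose $M \ge N$ so that $d(y_m, y_n) < \delta/2$ for $m, n \ge M$, and choose $k$ large enough that $n_k \ge M$ and $d(y_{n_k}, y) < \delta/2$; then $d(y_n, y) \le d(y_n, y_{n_k}) + d(y_{n_k}, y) < \delta$ for every $n \ge M$. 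Hence $(y_n) \to y$, every Cauchy sequence converges, and $(Y, d)$ is complete.

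There is no genuine obstacle here; this is the standard ``locally compact balls $+$ Cauchy $\Rightarrow$ convergent'' argument. The one point that merits attention is the role of the \emph{uniform} radius $\epsilon$. For a single Cauchy sequence one only ever uses compactness of the single ball $\bar{B}(y_N, \epsilon)$, but since we cannot predict in advance near which center the sequence will eventually concentrate, the hypothesis must furnish one fixed $\epsilon$ that works simultaneously at every point of $Y$. This is precisely what allows the radius appearing in the Cauchy estimate to be matched to the radius of a guaranteed-compact ball. In the application to tree-space this uniform $\epsilon$ is available because $p \colon X \to \bar{X}$ is finite-to-one and $X$ is a product of Euclidean spaces, so closed balls of any fixed radius in $\bar{X}$ are compact.
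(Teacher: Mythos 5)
Your proof is correct and complete: tail of the Cauchy sequence trapped in a compact ball, sequential compactness yields a convergent subsequence, and Cauchy plus convergent subsequence gives convergence. The paper itself offers no proof of this lemma --- it is quoted verbatim from Dugundji \cite[chapter XIV, theorem~2.3]{dug} with a \emph{qed} mark --- so there is nothing to diverge from; yours is the standard argument one would find in the cited reference. Your closing remark about the \emph{uniform} radius $\epsilon$ is well taken and not pedantic: mere local compactness (an $\epsilon$ depending on the point) does not suffice, as the open interval $(0,1)$ shows, and in the paper this uniformity is exactly what lemma~\ref{closedandbounded} delivers for $\bar{X}$, since there \emph{every} closed bounded set, hence every closed ball of any fixed radius, is compact.
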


Using the projection $p \colon X \to \bar{X}$, we can prove:

\begin{lem} \label{closedandbounded}
Bounded closed subsets of $\bar{X}$ are compact.
\end{lem}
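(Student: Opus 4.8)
The plan is to transfer the statement to the Euclidean pre-shape space $X$ and invoke the Heine--Borel theorem there, exploiting that the projection $p\colon X \to \bar{X}$ is continuous, surjective and finite-to-one. Concretely, given a bounded closed subset $\bar{A} \subset \bar{X}$, I would study its full preimage $A = p^{-1}(\bar{A}) \subset X$ and show that $A$ is closed and bounded, hence compact; the conclusion then follows because $\bar{A} = p(A)$ is the continuous image of a compact set. The first ingredient is that $p$ is $1$-Lipschitz: taking $k = 1$ in the infimum \eqref{metricdefn} shows $\bar{d}(p(x), p(y)) \le d(x,y)$, so $p$ is continuous and therefore $A = p^{-1}(\bar{A})$ is closed in $X$ whenever $\bar{A}$ is closed.

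The substantive point is to convert $\bar{d}$-boundedness of $\bar{A}$ into an honest Euclidean bound on $A$. For this I would first identify the trivial tree-shape $\bar{0} = \{0\}$ and prove the norm identity $\bar{d}(\bar{x}, \bar{0}) = \|\bar{x}\|$, where $\|\bar{x}\|$ is the well-defined size from Lemma~\ref{sizelemma}. The inequality $\bar{d}(\bar{x}, \bar{0}) \le \|x\|$ is again the single-segment competitor. For the reverse, consider any admissible path $x_1 \in \bar{x}$, $y_i \sim x_{i+1}$, $y_k = 0$; using $\|x_i - y_i\| \ge \|x_i\| - \|y_i\|$ together with the fact that $\sim$ preserves norm (Lemma~\ref{sizelemma}), so that $\|y_i\| = \|x_{i+1}\|$, the sum $\sum_i \|x_i - y_i\|$ telescopes to $\|x_1\| = \|x\|$. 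Hence $\bar{d}(\bar{x}, \bar{0}) = \|\bar{x}\|$. Since $\bar{A}$ is bounded it lies in some ball $\bar{B}(\bar{x}_0, R)$, and the triangle inequality gives $\|\bar{a}\| = \bar{d}(\bar{a}, \bar{0}) \le R + \|\bar{x}_0\|$ for all $\bar{a} \in \bar{A}$; as every representative $a \in A$ satisfies $\|a\| = \|\bar{a}\|$, the set $A$ is norm-bounded in $X$.

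With $A$ closed and bounded in the finite-dimensional Euclidean space $X$, the Heine--Borel theorem makes $A$ compact, and then $\bar{A} = p(A)$ is compact as the continuous image of a compact set, which is exactly the claim. I expect the only real obstacle to be the lower bound in the norm identity $\bar{d}(\bar{x},\bar{0}) = \|\bar{x}\|$: everything else (continuity of $p$, Heine--Borel, images of compacta) is routine, whereas controlling the Euclidean size of an arbitrary representative by the quotient distance requires the telescoping argument above and genuinely uses that the identifications are norm-preserving isometries of coordinate subspaces. It is worth noting that this lemma supplies the local-compactness input feeding the Hopf--Rinow argument, so ``bounded'' here should be read in the metric sense, consistent with the finite-to-one structure of $p$ already used to establish local compactness.
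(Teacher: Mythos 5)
Your proof is correct and follows essentially the same route as the paper: pull the bounded closed set back to the Euclidean pre-shape space using the fact that the quotient distance to $\bar{0}$ equals the norm from lemma~\ref{sizelemma}, apply Heine--Borel there, and push compactness forward through the continuous projection $p$. The only difference is one of detail: the paper simply asserts $p^{-1}(\bar{B}_{\bar{d}}(\bar{0}, R)) = \bar{B}_d(0,R)$, whereas you make explicit the telescoping argument establishing the nontrivial inclusion, i.e., the lower bound $\bar{d}(\bar{x},\bar{0}) \ge \|\bar{x}\|$.
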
 

\begin{proof}
Since lemma~\ref{sizelemma} defines a notion of size in $\bar{X}$, any closed, bounded subspace $C$ in $\bar{X}$ is contained in a closed ball $\bar{B}_{\bar{d}}(\bar{0}, R)$ in $\bar{X}$ for some $R > 0$, where $\bar{0} $ is the image $p(0) \in \bar{X}$. Since $\|x\| = \| \bar{x} \|$, it follows that $p^{-1}(\bar{B}_{\bar{d}}(\bar{0}, R)) = \bar{B}_d(0, R)$, which is a closed and bounded ball in $X$. Since closed, bounded subsets of $X$ are compact $\bar{B}_d(0, R)$ is compact. By continuity of $p$, $\bar{B}_{\bar{d}}(\bar{0}, R))$ is compact. Then $C$ is compact too.
\end{proof}

It is now very easy to prove proposition~\ref{completenessdescends}:

\begin{proof}[Proof of proposition~\ref{completenessdescends}]
By lemma~\ref{closedandbounded}, all closed and bounded subsets of $\bar{X}$ are compact, but then by lemma~\ref{duglemma} the metric $\bar{d}$ must be complete.
\end{proof}

Using the Hopf-Rinow theorem \cite[chapter I, proposition~3.7]{bridsonhaef} we thus prove that $(\bar{X}, \bar{d})$ is a complete, proper geodesic space. We still miss contractibility:

\begin{lem} \label{contractible}
Let $B$ be a normed vector space and let $\sim$ be an equivalence on $B$ such that $a \sim b$ implies $t \cdot a \sim t \cdot b$ for all $t \in \R$. Then $\bar{B} = B/\sim$ is contractible.
\end{lem}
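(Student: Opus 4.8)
The plan is to push the standard straight-line contraction of the vector space $B$ down to the quotient $\bar{B}$. First I would define $F \colon B \times [0,1] \to B$ by $F(x,t) = t \cdot x$; this is continuous, and it contracts $B$ onto the origin, since $F(\cdot, 1) = \id_B$ and $F(\cdot, 0) \equiv 0$. Composing with the quotient projection $p \colon B \to \bar{B}$ then gives a continuous map $G = p \circ F \colon B \times [0,1] \to \bar{B}$, $G(x,t) = \overline{t x}$, which I want to realize as $\bar{F} \circ (p \times \id_{[0,1]})$ for a suitable homotopy $\bar{F}$ on $\bar{B}$.

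Next I would check that $G$ descends to $\bar{B} \times [0,1]$, and this is exactly where the hypothesis enters: if $x \sim x'$, then $t x \sim t x'$ for every $t$, so $\overline{t x} = \overline{t x'}$, which means $G(x,t)$ depends only on the class $\bar{x}$ and on $t$. Hence there is a well-defined set-theoretic map $\bar{F} \colon \bar{B} \times [0,1] \to \bar{B}$, $\bar{F}(\bar{x}, t) = \overline{t x}$, characterized by $\bar{F} \circ (p \times \id_{[0,1]}) = G$. The endpoints are immediate: $\bar{F}(\bar{x}, 1) = \bar{x}$ and $\bar{F}(\bar{x}, 0) = \bar{0}$, where $\bar{0} = p(0)$, the latter being independent of $\bar{x}$.

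The hard part will be the continuity of $\bar{F}$: well-definedness alone does not transfer continuity through a quotient, and $p \times \id_{[0,1]}$ need not, a priori, be a quotient map. The key topological input is that $[0,1]$ is compact, hence locally compact Hausdorff; by the standard fact that the product of a quotient map with the identity of a locally compact Hausdorff space is again a quotient map (a result usually attributed to Whitehead, see \cite{dug}), $p \times \id_{[0,1]}$ is a quotient map. Since $G = \bar{F} \circ (p \times \id_{[0,1]})$ is continuous, the universal property of quotient maps forces $\bar{F}$ to be continuous. Combined with the endpoint computations, $\bar{F}$ is then a homotopy from $\id_{\bar{B}}$ to the constant map at $\bar{0}$, which is precisely the assertion that $\bar{B}$ is contractible.

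As a remark for the concrete setting of theorem~\ref{orderedthm}, I note that continuity of $\bar{F}$ can alternatively be obtained by a direct Lipschitz estimate once the metric $\bar{d}$ is available: scaling a near-optimal concatenated path for $\bar{d}(\bar{x}, \bar{y})$ by $s$ produces an admissible path between $\overline{s x}$ and $\overline{s y}$ (again using $a \sim b \Rightarrow s a \sim s b$), giving $\bar{d}(\overline{s x}, \overline{s y}) \le |s|\, \bar{d}(\bar{x}, \bar{y})$, and combining this with $\bar{d}(\overline{s y}, \overline{t y}) \le |s - t|\, \|\bar{y}\|$ yields joint continuity explicitly. The abstract quotient-map argument is cleaner, however, and avoids reference to any particular metric, so I would present it as the main proof.
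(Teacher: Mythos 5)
Your proof is correct and takes essentially the same approach as the paper: the paper's own proof simply defines $H(\bar{x}, t) = t \cdot \bar{x}$, notes it is well defined because of the hypothesis on $\sim$, and concludes it is a homotopy from $\id_{\bar{B}}$ to the constant map at $\bar{0}$. Your write-up is in fact more complete than the paper's, since the paper leaves the continuity of the descended homotopy entirely implicit, whereas you justify it via Whitehead's theorem that the product of a quotient map with $\id_{[0,1]}$ is again a quotient map.
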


\begin{proof}
Define a map $H \colon \bar{B} \times [0,1] \to \bar{B}$ by setting $H(\bar{x}, t) = t \cdot \bar{x}$. Now $H$ is well defined because of the condition on $\sim$, and $H(\bar{x}, 0) = 0 \ \forall \ \bar{x} \in \bar{B}$ so $H$ is a homotopy from $\id_{\bar{B}}$ to the constant zero map.
\end{proof}

Combining the results of section~\ref{theoremproof}, we see that the proof of theorem~\ref{orderedthm} is complete. 

\section{Curvature in the tree-shape space} \label{curvaturesection}

Having proved theorem~\ref{orderedthm}, we may now pass to studying the geometry of the tree-shape space through its geodesics. Uniqueness of geodesics and means is closely connected to the geometric notion of \emph{curvature}, a concept which fundamentally depends on the underlying metric. Using methods from metric geometry \cite{bridsonhaef,gromov} we shall investigate the curvature of the tree-shape space with the QED and TED metrics. Using curvature, we obtain well-posed statistical methods for QED.

The next theorem states that in the tree-shape space endowed with the QED metric, any randomly selected point has a corresponding neighborhood within which the tree-space has non-positive curvature. We shall use this fact to show that datasets within that same neighborhood have unique averages.

\begin{thm}  \label{qedcurv}
\begin{itemize}
\item[i)] Endow $\bar{X}$ with the QED metric $\bar{d}_2$. A generic point $\bar{x} \in \bar{X}$ has a neighborhood $U \subset \bar{X}$ in which the curvature is non-positive. At non-generic points, the curvature of $(\bar{X}, \bar{d}_2)$ is $+\infty$, or unbounded from above. 
\item[ii)] Endow $\bar{X}$ with the TED metric $\bar{d}_1$. The metric space $(\bar{X}, \bar{d}_1)$ does not have locally unique geodesics anywhere, and the curvature of $(\bar{X}, \bar{d}_1)$ is $+\infty$ everywhere.
\end{itemize}

Claims i) and ii) also hold in the subspace $\bar{Z} \subset \bar{X}$ containing only trees of certain restricted topologies, as defined in definition~\ref{subspacedef}.
\end{thm}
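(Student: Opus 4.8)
The plan is to treat $(\bar X,\bar d_2)$ as a finite Euclidean complex, namely the flat coordinate subspaces $V^i_j\subset X$ glued along the isometries $\phi^i_j$, and to read curvature off the \emph{spaces of directions}. By Berestovskii's cone theorem and the link condition (Bridson--Haefliger~\cite{bridsonhaef}), the tangent cone at $\bar x$ is CAT(0)---and an upper curvature bound holds in a whole neighborhood of $\bar x$---exactly when the link $\Sigma_{\bar x}$ is CAT(1). So everything reduces to describing $\Sigma_{\bar x}$. Writing $p^{-1}(\bar x)=\{x^{(1)},\dots,x^{(m)}\}$ for the finitely many representatives (lemma-type finiteness of $p$), each $x^{(s)}$ carries a flat tangent cone $\R^{D}$ with round link $S^{D-1}$, and $\Sigma_{\bar x}$ is obtained by gluing these spheres along the great subspheres tangent to the identified subspaces $V^i_j$ passing through the $x^{(s)}$. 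The whole proof is then a study of when this glued spherical complex is CAT(1).

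For part (i) I split into the two cases. At a \emph{generic} $\bar x$ I would show that every identification active near $\bar x$ is a gluing of flat half-spaces along a \emph{common} convex coordinate subspace in whose relative interior $\bar x$ sits; such a configuration is a ``book'', isometric to a product $\R^k\times T$ with $T$ a metric tree (equivalently a Reshetnyak gluing of flat CAT(0) pieces along a convex subspace), hence CAT(0), so the curvature is non-positive nearby. The model case is the point $\overline{(3,0)}$ in fig.~\ref{folding}, whose link is four half-circles joined at their two shared endpoints (a generalized theta-graph, all cycles of length $2\pi$), manifestly CAT(1). At a \emph{non-generic} $\bar x$ some representative lies in an intersection $V_1\cap V_2$ of two distinct identified subspaces; because the $V^i_j$ are spanned by \emph{orthogonal} coordinate axes, the identification folds the complementary $A$- and $B$-directions of $V_1,V_2$, which meet at angle $\pi/2$, exactly as the positive $x$- and $y$-axes are folded at the image of the origin in fig.~\ref{folding}. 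This produces a closed geodesic of length $\pi/2<2\pi$ in $\Sigma_{\bar x}$, so it is not CAT(1); since a Euclidean cone of total angle $<2\pi$ admits geodesic bigons at every scale, it is not CAT($\kappa$) for any $\kappa$, and the curvature is $+\infty$.

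For part (ii) I use that $p$ is a local isometry off the (dense-complement) singular set, so near any generic point $(\bar X,\bar d_1)$ is isometric to an open set of $(\R^{D},\|\cdot\|_1)$ with $D\ge 2$ (the maximal tree has at least two edges; the one-edge case is one-dimensional and vacuous). The $\ell^1$ metric on $\R^{D}$, $D\ge2$, already has infinitely many geodesics between generic point pairs at \emph{every} scale, so geodesics are nowhere locally unique; and as any upper curvature bound forces local uniqueness of geodesics, $(\R^{D},\|\cdot\|_1)$ is not CAT($\kappa$) for any $\kappa$, i.e.\ curvature $+\infty$. Since generic points are dense, every neighborhood in $\bar X$ contains such a flat $\ell^1$ patch, and singular points only add further geodesics, so both statements hold everywhere. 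The restricted space $\bar Z$ is itself a subcomplex built from the same flat coordinate strata $Z_i$ and their gluings, so the identical link computations (for $\bar d_2$) and the identical local $\ell^1$ patches (for $\bar d_1$) apply verbatim, yielding both claims on $\bar Z$.

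I expect the genuine difficulty to lie in the generic CAT(1) verification in full generality. For a large maximal tree many coordinate subspaces are identified at once, the links $\Sigma_{\bar x}$ are high-dimensional spherical complexes, and two things must be made precise: (a) the combinatorial genericity condition guaranteeing that only ``book-type'' foldings (not orthogonal $V_1\cap V_2$ foldings) occur near $\bar x$, and (b) the absence of \emph{all} short closed geodesics in the glued link, not merely the obvious ones. Dually, at non-generic points I must check that the length-$\pi/2$ loop from the orthogonal fold is a genuine closed geodesic of the \emph{whole} link and not a shortcuttable path. Controlling the global CAT(1) property of this combinatorially intricate spherical complex is the real work; the $\ell^1$ argument for TED and the passage to $\bar Z$ are comparatively routine.
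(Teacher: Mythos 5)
Your proposal reaches the correct conclusions, and for part (i) it takes a genuinely different route from the paper's. The paper (theorem~\ref{unique_geo}) never passes to links or tangent cones: it classifies points of $\bar{X}$ by whether their representatives lie on no identified subspace, on a single identified subspace, or on an intersection of distinct identified subspaces; in the first two cases it invokes the gluing statement of example~\ref{negcurvex}~a) (Reshetnyak-type gluing of flat pieces along a common convex subspace is $CAT(0)$), and in the third case it exhibits geodesic bigons at every scale, exactly the configuration of fig.~\ref{folding}, which rules out $CAT(\kappa)$ for every $\kappa$ by proposition~\ref{bhprop}. Your Berestovskii/link-condition machinery would, if carried out, give a more systematic local picture, but it is also the source of the ``real work'' you defer (global $CAT(1)$-ness of a high-dimensional spherical complex, non-shortcuttability of the $\pi/2$ loop); this overhead is avoidable, because at a generic point \emph{all} identifications active in a small ball share one single common convex coordinate subspace (the ball meets no other identified subspace), so iterated Reshetnyak gluing settles the generic case with no link analysis at all --- which is what the paper does, and what your own parenthetical already contains. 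Likewise, exhibiting bigons directly in $\bar{X}$ at non-generic points (the paper's route) sidesteps certifying that your length-$\pi/2$ loop is an honest closed geodesic of the link. One slip to fix: your local model ``$\R^k\times T$ with $T$ a metric tree'' is only correct when the glued subspace has codimension $1$ in each sheet; here the codimension is a multiple of $dn \ge 2$, so the correct model is $\R^k$ times a wedge of finitely many copies of $\R^{N-k}$ joined at a single point --- still $CAT(0)$ by Reshetnyak, so the conclusion stands. For part (ii) your argument is essentially the paper's: its explicit edit paths $x \mapsto x+(i/n)y_1 \mapsto x+(i/n)y_1+y_2 \mapsto x+y_1+y_2$ are precisely the $\ell^1$ staircase geodesics in your flat patch, followed by the same appeal to proposition~\ref{bhprop}; note the paper's construction works verbatim at \emph{every} point, so your genericity-plus-density step can be dropped.
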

Here, local uniqueness is defined as uniqueness within a sufficiently small neighborhood.

\subsection{Genericity}

\emph{Genericity} is a key concept in this paper. Many of our results, e.g., uniqueness of means, do not hold in general, but they do hold for a randomly chosen dataset with respect to natural probability measures.

\begin{defn}[Generic property]
A \emph{generic property} in a metric space $(X, d)$ is a property which holds on an open, dense subset of $X$. 
\end{defn}

In the tree-spaces $\bar{X}, \bar{\bar{X}}, \bar{Z}$ and $\bar{\bar{Z}}$, one interpretation is that generic properties hold almost surely, or with probability one, with respect to natural probability measures. Thus, for a random tree-shape, we can safely assume that it satisfies generic properties, e.g., that the tree-shape is a binary tree. A \emph{non-generic property} is a property whose ''not happening'' is generic. This is similarly interpreted as a property that may \emph{not} hold for randomly selected tree-shapes. A detailed discussion of the relation between genericity and probability is found in Appendix A.

One common misconception is that the term ''generic tree'' refers to a particular class of trees with a particular generic property. It is important to note that many different properties, which do not necessarily happen on the same subsets of tree-space, may all be generic at the same time. However, any finite set of generic properties \emph{will} all happen on an open, dense subset.

\begin{prop} \label{generic}
Tree-shapes that are truly binary (i.e.~their internal edges are not collapsed) are generic in the space of all tree-like shapes. 
\end{prop}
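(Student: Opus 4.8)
The plan is to prove genericity by exhibiting an explicit open, dense subset of $\bar X$ on which every tree-shape is binary, which is all that the definition of a generic property demands: the set where the property holds need not itself be open, so rather than analyze the (somewhat delicate) set of \emph{all} truly binary shapes, it is cleanest to work with the full maximal binary trees. Recall that a pre-shape $x=(x_e)_{e\in E}\in X$ has all of its edges non-collapsed precisely when $x_e\neq 0$ for every $e\in E$; such a point represents the complete maximal binary tree $\mathscr T$ and is in particular truly binary. I would set
\[
U=\{x\in X \mid x_e\neq 0 \text{ for all } e\in E\},\qquad \bar U = p(U),
\]
and first record the elementary facts in the Euclidean pre-shape space: $U$ is open, being a finite intersection of the open sets $\{x_e\neq0\}$, and dense, since its complement $\bigcup_{e\in E}\{x_e=0\}$ is a finite union of proper linear subspaces and hence nowhere dense. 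The structural observation that makes $U$ convenient is that a point of $U$ takes part in no nontrivial identification: the only subset $E^i_j$ that equals all of $E$ is $E$ itself, so the top combinatorial type has a single representative; thus each $x\in U$ forms a singleton class $\bar x=\{x\}$ and $p^{-1}(\bar U)=U$.

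Density of $\bar U$ in $(\bar X,\bar d)$ is then immediate from $\bar d\le d$: given any $\bar y$ and $\epsilon>0$, choose a representative $y$ and perturb each collapsed coordinate to a nonzero vector of norm $<\epsilon/|E|$, producing $x\in U$ with $d(x,y)<\epsilon$, whence $\bar d(\bar x,\bar y)\le d(x,y)<\epsilon$.

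The substantive step is openness of $\bar U$, and this is where the quotient structure must be handled carefully. I would fix $\bar x=\{x\}\in\bar U$, put $c=\min_{e}\|x_e\|>0$, and observe that every point of $X$ participating in a nontrivial identification lies in $\bigcup_{e}\{x_e=0\}$, whose $d$-distance from $x$ is at least $c$ (collapsing any edge costs at least $\|x_e\|\ge c$). Consequently, for $\delta<c$ the ball $B_d(x,\delta)$ meets no identification locus, so any concatenation of Euclidean segments as in~\eqref{metricdefn} starting at $x$ with total length $<\delta$ can never ''teleport'': the first segment already lands in $B_d(x,\delta)\subseteq U$, a singleton class, forcing the next identification to be trivial, and by induction the whole path is a single segment. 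This yields $\bar B_{\bar d}(\bar x,\delta)\subseteq p(B_d(x,\delta))\subseteq\bar U$, since every point of $B_d(x,\delta)\subseteq U$ is again a full maximal binary tree. Hence $\bar U$ is open, and being truly binary holds on the open dense set $\bar U$, so the property is generic.

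The main obstacle is precisely this openness argument through the quotient: one must rule out that a short $\bar d$-path crosses an identification subspace and lands on a genuinely non-binary (e.g.\ trifurcating) tree-shape arbitrarily close to $\bar x$. This is exactly what the lower bound $c$ on the edge sizes at the singleton class $\bar x$ prevents, by confining a whole $\bar d$-ball to the region where $p$ is a local isometry; away from that region the naive set of all truly binary shapes is in fact \emph{not} open, which is why the proof is routed through the full binary trees. The same reasoning applies verbatim in the restricted spaces $\bar Z$ of definition~\ref{subspacedef} whose top (uncollapsed) topology is binary, replacing $E$ by the corresponding $E_i$.
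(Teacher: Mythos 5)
Your proof is correct and takes essentially the same route as the paper's: both pass to the set of \emph{full} binary tree-shapes (all edges of the maximal tree $\mathscr{T}$ non-collapsed) and obtain density by adding arbitrarily small noise to the zero attributes. The only difference is one of rigor: the paper simply asserts that this set is open, whereas you actually prove it, using the facts that full pre-shapes form singleton equivalence classes and that every identification locus lies at $d$-distance at least $\min_{e}\|x_e\|$ from such a point, so a short quotient path can never cross into a non-binary shape.
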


\begin{proof}
Let $\tilde{T}$ be a tree-shape in $\bar{X}$ or $\bar{\bar{X}}$ which is not truly binary, represented by a maximal binary tree $\mathscr{T}$. By adding arbitrarily small noise to the zero attributes on edges of $\mathscr{T}$, we obtain truly binary tree-shapes $\tilde{T}'$ which are arbitrarily close to $\tilde{T}$. Thus, the set of full truly binary tree-shapes in $\bar{X}$ or $\bar{\bar{X}}$ is open and dense. Hence, truly binary tree-shapes are generic both in $\bar{X}$ and in $\bar{\bar{X}}$.
\end{proof}

\begin{figure}
\centering
\includegraphics[width=0.9\linewidth]{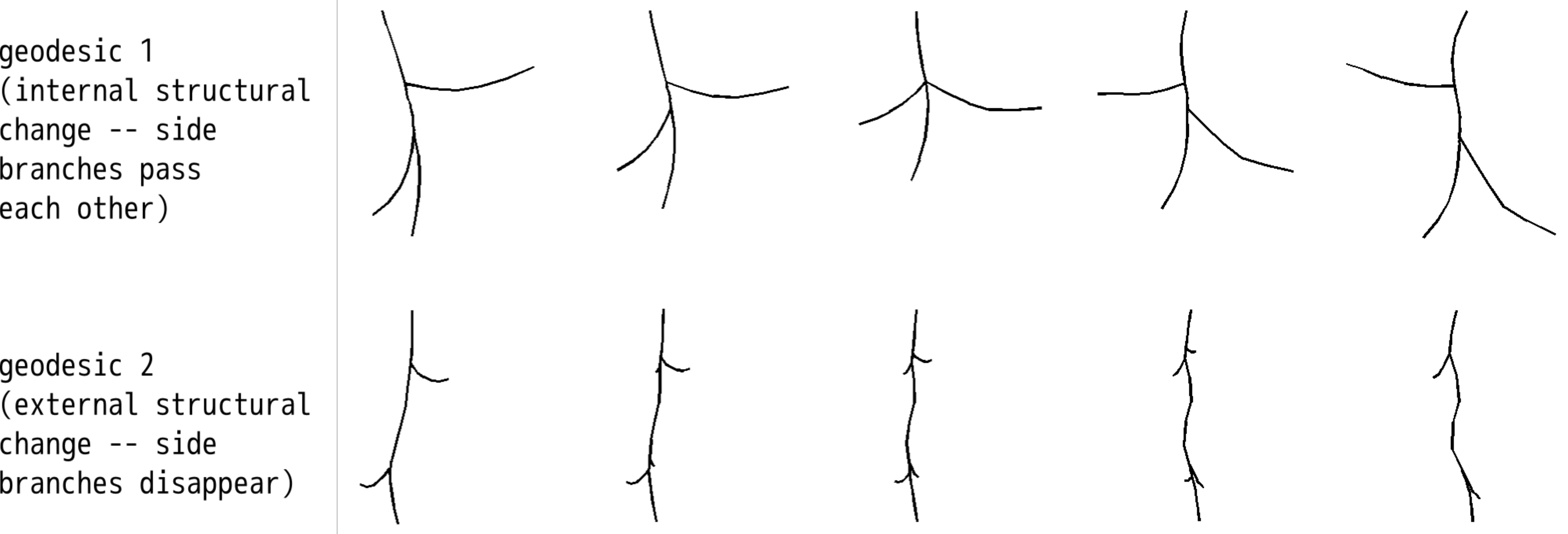}
\caption{Examples of geodesic deformations: geodesic $1$ goes through a tree with a trifurcation, while geodesic $2$ does not have internal structural transitions.}
\label{fundamentalfigure}
\end{figure}

The essence of the proposition is that binary tree-shapes are generic, but that does not mean that non-binary trees do not need to be considered! While non-binary trees may not appear as randomly selected trees, they \emph{do} appear in geodesics between randomly selected pairs of trees, as in fig.~\ref{fundamentalfigure}. Non-binary tree-like shapes also appear as samples in real-life applications, e.g., when studying airway trees. However, we interpret this as an artifact of resolution rather than true higher-degree vertices. For instance, airway extraction algorithms record trifurcations when the lengths of internal edges are below certain threshold values.

\subsection{Curvature in metric spaces}

In order to understand and prove theorem~\ref{qedcurv}, we need a definition of curvature in metric spaces. In spite of its simplicity and elegance, this concept from metric geometry is novel in computer vision. We shall spend a little time introducing it before proceeding to prove theorem~\ref{qedcurv} in section~\ref{qedcurvproof}.

Since general metric spaces can have all kinds of anomalies, the concept of curvature in such spaces is defined through a comparison with spaces that are well understood. More precisely, the metric spaces are studied using \emph{geodesic triangles}, which are compared with corresponding \emph{comparison triangles} in model spaces with a fixed curvature $\kappa$. The model spaces are spheres ($\kappa > 0$), the plane $\R^2$ ($\kappa = 0$) and hyperbolic spaces ($\kappa < 0$), and through comparison with these spaces, we can bound the curvature of the metric space by $\kappa$. In this paper we shall use comparison with planar triangles, which gives us curvature bounded from above by $0$.

\begin{figure} 
\centering
\includegraphics[width=0.5\linewidth]{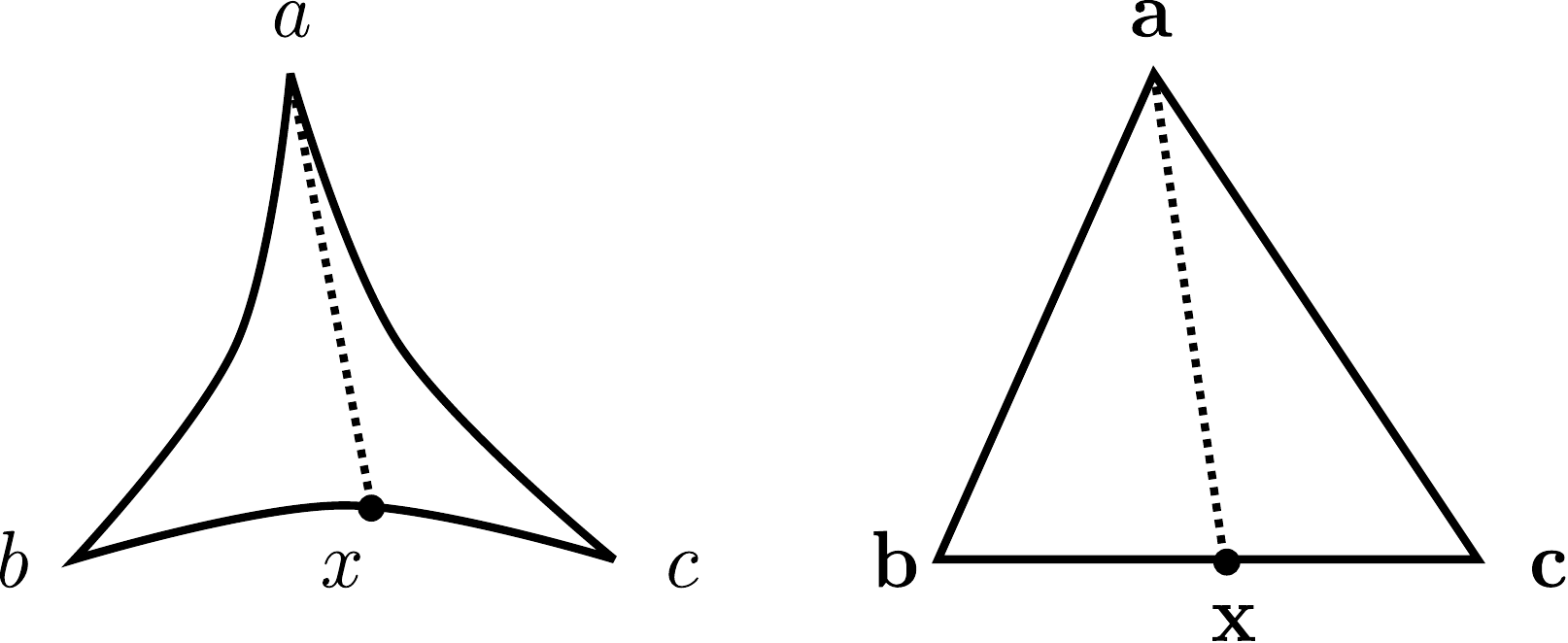}
\caption{A metric space is a $CAT(0)$ space if, for a geodesic triangle $abc$ and for any point $x$ on the triangle, the distance from $x$ to the opposite vertex is not longer than the corresponding distance in the planar comparison triangle $\mathbf{a} \mathbf{b} \mathbf{c}$.}
\label{cat0}
\end{figure}

Given a geodesic metric space $Y$, a \emph{geodesic triangle} $abc$ in $Y$ consists of three points $a, b, c$ and geodesic segments joining them. A planar \emph{comparison triangle} $\mathbf{a}\mathbf{b}\mathbf{c}$ for the triangle $abc$ consists of three points $\mathbf{a}, \mathbf{b}, \mathbf{c}$ in the plane, such that the lengths of the sides in $\mathbf{a} \mathbf{b} \mathbf{c}$ are the same as the lengths of the sides in $abc$, see fig.~\ref{cat0}.

A $CAT(0)$ space is a metric space in which geodesic triangles are ''thinner'' than for their comparison triangles in the plane. That is, $d(x, a) \le \|\mathbf{x} - \mathbf{a}\|$ for any $x$ on the edge $bc$ where $\mathbf{x}$ is the unique point on the edge $\mathbf{b} \mathbf{c}$ such that $d(b,x) = \|\mathbf{b} - \mathbf{x}\|$ and $d(x,c) = \|\mathbf{x} - \mathbf{c}\|$. If the planar comparison triangle is replaced by a comparison triangle in the sphere or hyperbolic space of fixed curvature $\kappa$, we get a $CAT(\kappa)$ space.

A space $Y$ has non-positive curvature if it is locally $CAT(0)$, i.e., if any point $x \in Y$ has a radius $r$ such that the ball $B(x, r)$ is $CAT(0)$. Similarly, define curvature bounded by $\kappa$ as being locally $CAT(\kappa)$.

\begin{ex} \label{negcurvex}
\begin{itemize}
\item[a)] The space $U$ obtained by gluing a family Euclidean spaces $U_i$ together along isomorphic affine subspaces $V_i \subset U_i$ is a $CAT(0)$ space. At any point in $U$ which is not a glued point, the local curvature is $0$, since the space is locally isomorphic to the corresponding $U_i$. At any glued point, it can be shown that the local curvature is $-\infty$. 
\item[b)] The GPCA construction by Vidal et al.~\cite{vidal} defines a $CAT(0)$ space, giving a potential use of $CAT(0)$ spaces and metric geometry in machine learning.
\item[c)] The space of phylogenetic trees is a $CAT(0)$ space~\cite{phylogenetic}.
\item[d)] As we are about to see, the space of tree-like shapes is locally a $CAT(0)$ space almost everywhere.
\end{itemize}
\end{ex}

One of the main reasons why $CAT(\kappa)$ spaces (and $CAT(0)$ in particular) are attractive, is due to the following result on existence and uniqueness of geodesics.
%
%

\begin{prop}\emph{\cite[Proposition II 1.4]{bridsonhaef}} \label{bhprop}
Let $(Y, d)$ be a $CAT(\kappa)$ space. If $\kappa \le 0$, then all pairs of points have a unique geodesic joining them. For $\kappa > 0$, the same holds for pairs of points at a distance less than $\pi/\sqrt{\kappa}$. \hfill \qed
\end{prop}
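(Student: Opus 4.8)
The plan is to reduce the statement to the single assertion that \emph{any two geodesic segments sharing the same pair of endpoints coincide} (existence of the geodesic itself is already part of being a $CAT(\kappa)$ space, at least for points within distance $\pi/\sqrt{\kappa}$ when $\kappa>0$). So I would fix $p, q \in Y$ and suppose $c, c' \colon [0, L] \to Y$ are two geodesics, each parametrized proportionally to arc length, with $c(0) = c'(0) = p$ and $c(L) = c'(L) = q$, where $L = d(p,q)$. The goal is then to show $c = c'$.

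The key device I would use is a \emph{degenerate} geodesic triangle. I would form the triangle $\Delta$ with vertices $p, q, q$, whose three sides are the geodesic $c$ (from $p$ to $q$), the geodesic $c'$ (from $p$ to $q$), and the constant geodesic at $q$. Its comparison triangle $\bar\Delta$ in the model space $M_\kappa^2$ would have vertices $\bar p, \bar q_1, \bar q_2$ with $|\bar p - \bar q_1| = |\bar p - \bar q_2| = L$ and $|\bar q_1 - \bar q_2| = 0$; hence $\bar q_1 = \bar q_2 =: \bar q$, and $\bar\Delta$ collapses onto the single segment $[\bar p, \bar q]$ of length $L$.

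I would then apply the $CAT(\kappa)$ inequality to corresponding points on the two nonconstant sides. For each $t \in [0, L]$, the points $x = c(t)$ and $y = c'(t)$ have comparison points $\bar x$ and $\bar y$ lying at distance $t$ from $\bar p$ on the segments $[\bar p, \bar q_1]$ and $[\bar p, \bar q_2]$ respectively; since these are the \emph{same} segment $[\bar p, \bar q]$, the comparison points coincide, $\bar x = \bar y$. The $CAT(\kappa)$ condition then forces $d(c(t), c'(t)) = d(x,y) \le |\bar x - \bar y| = 0$, so $c(t) = c'(t)$ for every $t$, i.e.\ $c = c'$, which is the desired uniqueness.

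The one genuine subtlety — and the source of the distance restriction when $\kappa > 0$ — is the \emph{existence} of the comparison triangle $\bar\Delta$ in $M_\kappa^2$. For $\kappa \le 0$ the model space is unbounded and a comparison triangle always exists, so uniqueness is unconditional. For $\kappa > 0$, comparison triangles (and the $CAT(\kappa)$ inequality itself) are guaranteed only for triangles of perimeter strictly less than $2\pi/\sqrt{\kappa}$; our degenerate triangle has perimeter $L + L + 0 = 2\,d(p,q)$, which is $< 2\pi/\sqrt{\kappa}$ exactly when $d(p,q) < \pi/\sqrt{\kappa}$, matching the stated hypothesis. I expect this bookkeeping about admissible perimeters to be the only delicate point; in carrying it out one should also invoke the standard fact that the vertex-form $CAT(\kappa)$ inequality stated earlier implies the general two-points-on-the-boundary form actually used here.
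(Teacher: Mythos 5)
The paper itself gives no proof of this proposition: it is imported verbatim from Bridson and Haefliger's book (hence the tombstone with no argument), so the only proof to compare against is the one in that cited source. Your argument is correct and is essentially that proof --- uniqueness via the degenerate comparison triangle of the geodesic bigon, with the perimeter count $2d(p,q) < 2\pi/\sqrt{\kappa}$ accounting exactly for the distance restriction when $\kappa > 0$, and with the (correctly flagged) standard equivalence between the vertex-to-opposite-side comparison used in the paper's definition of $CAT(\kappa)$ and the two-points-on-the-boundary comparison your argument actually invokes.
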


More results on curvature in metric spaces can be found in the book by Bridson and Haefliger \cite{bridsonhaef}.

\subsection{Curvature in the space of ordered tree-shapes -- proof of theorem~\ref{qedcurv}} \label{qedcurvproof}

In this section we study the curvature of tree-shape space using the theory of $CAT(0)$ spaces. We show that at a generic tree, the shape space has bounded curvature. The results rest on the following theorem:

\begin{thm} \label{unique_geo}
At a generic point $\bar{x} \in \bar{X}$, the shape space is locally $CAT(0)$, and thus, $\bar{X}$ has locally unique geodesics in a neighborhood of $\bar{x}$.
\end{thm}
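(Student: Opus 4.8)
The plan is to reduce the statement to a local gluing model and then quote the metric-geometry input wholesale. First, by Proposition~\ref{generic} I may assume that the generic point $\bar{x}$ is a truly binary tree-shape, so every realized internal edge carries a nonzero attribute which, by continuity, stays bounded away from zero on a small ball. The goal is then to produce a radius $r$ for which the ball $B(\bar{x}, r) \subset \bar{X}$ is $CAT(0)$; once this is established, local uniqueness of geodesics at $\bar{x}$ follows immediately from Proposition~\ref{bhprop}, and non-positive curvature is exactly being locally $CAT(0)$.

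The heart of the argument is an explicit description of $B(\bar{x}, r)$. Since the projection $p \colon X \to \bar{X}$ is finite-to-one (as used in the proof of Theorem~\ref{orderedthm}), the fiber $p^{-1}(\bar{x}) = \{x_1, \ldots, x_m\}$ is finite. I would choose $r$ small enough, in the spirit of the $\epsilon$ from the proof of Proposition~\ref{pseudometricismetric}, that within $\bigcup_k B(x_k, r) \subset X$ no realized internal edge can collapse. The only identifications surviving on these balls are the linear maps $\phi^i_j$ gluing the coordinate subspaces $V^i_j$ that pass through the $x_k$, each corresponding to refolding a tree across a collapsed edge. I would then \emph{unfold} the self-identification: treat the balls $B(x_k, r)$ as separate Euclidean pieces and record each $\phi^i_j$ as a gluing between them along the affine, hence convex, subspaces $V^i_j \cap B(x_k, r)$. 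This exhibits $B(\bar{x}, r)$ as a gluing of finitely many Euclidean balls along affine subspaces (with the degenerate case $m = 1$ being an honest Euclidean ball), which is precisely the situation of Example~\ref{negcurvex}(a); by Reshetnyak's gluing theorem \cite{bridsonhaef} the result is $CAT(0)$.

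The main obstacle is justifying that this local unfolding is \emph{faithful} and that genericity is exactly what makes the gluing loci convex and mutually compatible. Concretely, I must verify that on the chosen balls there are no identifications beyond the recorded $\phi^i_j$, that each gluing locus meets a ball in a \emph{single} affine subspace rather than a badly-arranged union, and that the iterated gluing still satisfies the hypotheses of Reshetnyak's theorem for more than two pieces. This is where truly-binary genericity enters: when all internal edges are nonzero, every admissible refolding is governed by a single collapsing edge, so the collapses occur independently and the gluing subspaces are honest linear subspaces. At a non-generic point---say where a higher-order vertex is already present, or where two sibling subtrees carry equal attributes---several collapses become simultaneously possible, the link of $\bar{x}$ acquires a short geodesic loop, and the gluing fails to be $CAT(0)$, yielding curvature $+\infty$ instead; making this contrast precise is what connects the present theorem to the dichotomy asserted in Theorem~\ref{qedcurv}.

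Finally, I would assemble the pieces: $B(\bar{x}, r)$ is $CAT(0)$, so by Proposition~\ref{bhprop} any two of its points are joined by a unique geodesic, giving local uniqueness of geodesics at $\bar{x}$. I expect the only genuinely technical work to be the combinatorial bookkeeping that the local gluing pattern at a truly binary shape matches Example~\ref{negcurvex}(a); once that is in place, the metric-geometry conclusion is entirely off-the-shelf.
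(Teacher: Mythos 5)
Your high-level strategy (realize a small ball around $\bar{x}$ as a gluing of finitely many Euclidean balls along affine subspaces, invoke Example~\ref{negcurvex}(a)/Reshetnyak for $CAT(0)$, then Proposition~\ref{bhprop} for unique geodesics) is the same as the paper's, but your genericity reduction has a genuine gap: being \emph{truly binary} in the sense of Proposition~\ref{generic} does \emph{not} imply the conclusion, so you cannot ``assume the generic point is truly binary'' and proceed. The problem is your unfolding step, which silently assumes that every identification active near the fiber $p^{-1}(\bar{x}) = \{x_1,\ldots,x_m\}$ moves each $x_k$ to a \emph{different} representative $x_l$; only then can the self-identification be re-recorded as a gluing between separate Euclidean pieces. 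This fails for truly binary shapes that are smaller than the maximal tree $\mathscr{T}$. Concretely, suppose a leaf of the shape $\bar{x}$ sits at an internal edge $e$ of $\mathscr{T}$, so that the two child edges $e', e''$ of $e$ carry zero attributes, while all realized edges of $\bar{x}$ are nonzero (such a shape is truly binary). The identification $\Phi \colon \{x_{e''}=0\} \to \{x_{e'}=0\}$ that swaps the two placements of a single child of $e$ \emph{fixes} this representative, since both coordinates vanish at it. Hence no ball around it unfolds into separate pieces: the ball is genuinely folded onto itself, and locally $\bar{X}$ is isometric to a Euclidean factor times the fold $(a,0)\sim(0,a)$ in the $(x_{e'},x_{e''})$-coordinates --- exactly a neighborhood of the origin in fig.~\ref{folding}, which the paper's own proof places in its bad class \emph{iii)}. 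Indeed, for any small $u \neq 0$, the tree with a single child of $e$ having attribute $u$ and the tree with a single child having attribute $-u$ are joined by \emph{two} distinct geodesics of length $\sqrt{2}\|u\|$ (place the two children at $e',e''$ or at $e'',e'$ respectively), so geodesics are not locally unique and the curvature at $\bar{x}$ is $+\infty$. Your list of bad points (higher-order vertices, equal sibling attributes) misses precisely this case; moreover, equal sibling attributes is only relevant for the unordered space $\bar{\bar{X}}$, not for $\bar{X}$.

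The paper avoids this trap by classifying points not through the combinatorics of the shape but through how the preimages sit relative to the identified subspaces $V^i_j$: \emph{i)} points with a unique representative, \emph{ii)} points each of whose representatives lies in a \emph{single} identified subspace, and \emph{iii)} points with a representative in the intersection of two distinct identified subspaces. In classes \emph{i)} and \emph{ii)} no identification can fix a representative (an order-preserving self-identification of one $V^i_j$ is necessarily the identity), so there your unfolding picture is correct and the local model is the book-like gluing of Example~\ref{negcurvex}(a); class \emph{iii)} --- which contains the counterexample above --- is where folding occurs, and the paper disposes of it not via truly-binariness but by observing that such points lie in intersections of identified subspaces, hence in a lower-dimensional, non-generic subset of $\bar{X}$ (this is also why section~\ref{injectivity} warns that injectivity neighborhoods in $\bar{X}$ consist mainly of trees with the same topology as $\mathscr{T}$). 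The repair is therefore to replace ``truly binary'' by the generic property ``$\bar{x}$ is of class \emph{i)} or \emph{ii)}'' --- the simplest sufficient open dense condition being that \emph{all} edges of $\mathscr{T}$ carry nonzero attributes --- after which your gluing argument and the appeal to Proposition~\ref{bhprop} go through and coincide with the paper's proof.
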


\begin{proof}
Recall from section~\ref{singularspace} how $\bar{X}$ was formed by identifying subspaces $V_i \subset X$, defined in eq.~\ref{vdef1} and~\ref{vdef2}. These identified subspaces corresponded to different representations in $X$ of the same shape $\bar{x} \in \bar{X}$. The points in $\bar{X}$ can now be divided into three categories: 
\begin{itemize}
\item[i)] points $\bar{x}_1 \in \bar{X}$ which do not belong to the image of an identified subspace because they only have one representative in $X$,
\end{itemize}
and two classes of points in $\bar{X}$ which have more than one representative in $\bar{X}$.
\begin{itemize}
\item[ii)] The first class contains points $\bar{x}_2 \in \bar{X}$ at which the space $\bar{X}$ is locally homeomorphic to an intersection of linear spaces, as in example~\ref{negcurvex} a).  These points are images of points $x_2 \in X$ which belong to one single identified subspace $V_i$.
\item[iii)] The second class contains points $\bar{x}_3 \in \bar{X}$ whose preimages $x_3$ in $X$ are at the intersection of identified subspaces $V_i, V_j \subset X$. An example of such points is the image of the origin in fig.~\ref{folding}. These points correspond to trees where, infinitely close to the same tree, we can find pairs of trees in $\bar{X}$ between which geodesics are not unique.
\end{itemize}
These three classes of points correspond to local curvature $0$, $-\infty$ and $+\infty$. That is, the space is locally $CAT(0)$ at points in categories \emph{i)}; at points from \emph{ii)} it is $CAT(\kappa)$ for every $\kappa \in \R$, so has curvature $-\infty$; and at points from \emph{iii)} it is not $CAT(\kappa)$ for any $\kappa \in \R$; hence the curvature is $+\infty$. It thus suffices to show that the points in category \emph{iii)} are non-generic, which follows easily from the fact that these must necessarily sit in a lower-dimensional subspace of $\bar{X}$.

The proof carries over to the subspace $\bar{Z}$.
\end{proof}

\begin{defn}[Injectivity neighborhood]
We call a $CAT(0)$ neighborhood $\mathscr{U}$ of a point $\bar{x} \in \bar{X}$ an \emph{injectivity neighborhood} of $\bar{x}$. 
\end{defn}

Based on the above, we are now ready to prove:

\begin{proof}[Proof of theorem~\ref{qedcurv}]
\begin{itemize}
\item[i)] The QED case: Since $\bar{X}$ is locally $CAT(0)$ at generic points $\bar{x}$, the curvature of $\bar{X}$ is non-positive in a neighborhood $U$ of $\bar{x}$. At points $\bar{x} \in \bar{W}$, however, we will always find pairs of points $\bar{a}, \bar{b}$ arbitrarily close to $x$ with two geodesics joining them, just as in fig.~\ref{folding}. 
\item[ii)] The TED case: Consider a tree-shape $\tilde{T} \in \bar{X}$, represented by a point $x \in X$. Induce a second tree-shape $\tilde{T}'$ represented by $x + y_1 + y_2 \in X$, where $y_1, y_2 \in \prod_{e \in E} (\R^d)^n$ such that $y_1$ and $y_2$ have one non-zero coordinate, found in different edges, which are both nonzero edges in $x$. The topology of $\tilde{T}'$ is the same as of $\tilde{T}$. For any $n \in \N$, we can find $n$ TED geodesics $g_1, \ldots, g_n$ from $\tilde{T}$ to $\tilde{T}'$, where $g_i$ can be decomposed as $x \mapsto x + (i/n)y_1 \mapsto x + (i/n)y_1 + y_2 \mapsto x + y_1 + y_2$. Thus, there are infinitely many TED geodesics from $\tilde{T}$ to $\tilde{T}'$, and $(\bar{X}, \bar{d}_1)$ does not have locally unique geodesics anywhere. As a consequence, its curvature is unbounded everywhere \cite[proposition II 1.4]{bridsonhaef}.
\hfill \qed
\end{itemize}
\let\qed\relax
\end{proof}

The practical meaning of theorem~\ref{qedcurv} is that a) we can use techniques from metric geometry to search for QED averages, b) as we are about to see, for datasets contained in an injectivity neighborhood, there exist unique means, centroids and circumcenters for the QED metric, and c) the same techniques cannot be used to prove existence or uniqueness of prototypes for the TED metric, even if they were to exist. In fact, any geometric method which requires bounded curvature~\cite{karcher,bridsonhaef,phylogenetic} will fail for the TED metric. \emph{This motivates our study of the QED metric.}

\subsection{Curvature in the space of unordered tree-shapes}

It is easy to prove that the same results also hold for unordered tree-shapes:

\begin{thm} \label{unordered_curvature}
The space $(\bar{\bar{X}}, \bar{\bar{d}}_2)$ of unordered trees with the QED metric is generically non-positively curved. With the TED metric $\bar{\bar{d}}_1$, however, $\bar{\bar{X}}$ has everywhere unbounded curvature, geodesics are nowhere locally unique and neither are any of the types of average tree discussed in this paper. The same holds in the subspace $\bar{\bar{Z}} \subset \bar{\bar{X}}$, defined in theorem~\ref{unorderedthm}. \hfill \qed
\end{thm}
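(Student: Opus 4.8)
The plan is to transport the curvature results for the ordered space $\bar{X}$ (Theorems~\ref{qedcurv} and~\ref{unique_geo}) down to the quotient $\bar{p} \colon \bar{X} \to \bar{\bar{X}} = \bar{X}/G$. The enabling fact, already observed in the discussion following Theorem~\ref{unorderedthm}, is that $G$ is \emph{finite} and that $\bar{p}$ is a \emph{local isometry} away from the set $F \subset \bar{X}$ of fixed points of the non-trivial elements of $G$; moreover $F$ lies inside the lower-dimensional locus where some pair of sibling edges carries identical attributes, so $F$ is non-generic. The trichotomy of points of $\bar{X}$ into classes i), ii), iii) in the proof of Theorem~\ref{unique_geo} is preserved by the $G$-action (reorderings respect the identification structure), so the ''good'' set $\bar{X}^{\mathrm{good}} \subset \bar{X}$, consisting of category-i) points lying outside $F$, is $G$-invariant, open and dense.

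For the QED metric I would argue as follows. Because $\bar{p}$ is an open map (the $G$-saturation of an open set is open) and $\bar{X}^{\mathrm{good}}$ is open, dense and $G$-invariant, its image $\bar{p}(\bar{X}^{\mathrm{good}})$ is open and dense in $\bar{\bar{X}}$, hence generic. Fix $\tilde{\tilde{x}}$ in this image and a lift $\bar{x} \in \bar{X}^{\mathrm{good}}$. Since $\bar{x} \notin F$, a small ball about $\bar{x}$ is mapped isometrically by $\bar{p}$ onto a neighbourhood of $\tilde{\tilde{x}}$, and by Theorem~\ref{unique_geo} this ball is $CAT(0)$. As the $CAT(0)$ condition is invariant under isometry of geodesic spaces, the neighbourhood of $\tilde{\tilde{x}}$ is $CAT(0)$; thus $\bar{\bar{X}}$ is locally $CAT(0)$ at a generic point, i.e.\ generically non-positively curved.

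For the TED metric I would recycle the explicit construction from the proof of Theorem~\ref{qedcurv}~ii), which is purely local and takes place inside a single flat coordinate subspace. Given any $\tilde{\tilde{T}} \in \bar{\bar{X}}$, lift to $x \in X$, perturb in two distinct non-zero edges to form $x + y_1 + y_2$, and take the $n$ staircase $\bar{d}_1$-geodesics $g_1, \dots, g_n$. Since $\bar{p}$ is at most $|G|$-to-one, at least $n/|G|$ of the images remain distinct $\bar{\bar{d}}_1$-geodesics between the projected endpoints; letting $n \to \infty$ yields infinitely many geodesics between arbitrarily close points. By Proposition~\ref{bhprop} this precludes local $CAT(\kappa)$ for \emph{every} $\kappa$, so the curvature is $+\infty$ everywhere and geodesics are nowhere locally unique. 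Since the midpoint of a geodesic from $\bar{a}$ to $\bar{b}$ is a Fr\'echet mean of $\{\bar{a},\bar{b}\}$, this non-uniqueness immediately forces non-uniqueness of every average notion considered in the paper. The restricted space $\bar{\bar{Z}} = \bar{Z}/G$ is handled verbatim: Theorem~\ref{unique_geo} already carries over to $\bar{Z}$, the map $\bar{p}|_{\bar{Z}}$ is a local isometry off $F \cap \bar{Z}$, and the TED construction stays inside one subspace $Z_i \subset Z$.

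The step I expect to be the main obstacle is the genericity bookkeeping across the quotient together with the verification that projected geodesics remain length-minimising. Concretely, one must confirm that $\bar{p}$ is open, that $F$ and the singular strata project to a nowhere-dense set, and --- crucially --- that the local isometry genuinely transfers the $CAT(0)$ inequality and keeps the staircase paths minimising after projection. At generic points this is immediate from the local isometry; extending the TED statement to the non-generic points requires either a direct check that $\bar{\bar{d}}_1(\bar{p}(x), \bar{p}(x+y_1+y_2))$ is still realised by the projected $g_i$, or a limiting argument from nearby generic points, and this is where the finiteness of $G$ and the lower-dimensionality of $F$ do the essential work.
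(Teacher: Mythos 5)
Your proposal is correct and takes essentially the same route the paper intends: the paper states this theorem without a written proof, relying precisely on the finiteness of $G$ and the observation following theorem~\ref{unorderedthm} that $\bar{p} \colon \bar{X} \to \bar{\bar{X}}$ is a local isometry off the (non-generic, lower-dimensional) fixed-point set --- which transfers the generic local $CAT(0)$ property of theorem~\ref{unique_geo} --- together with the staircase-geodesic construction from the TED part of theorem~\ref{qedcurv}. Your extra bookkeeping (openness of $\bar{p}$, density of the image of the good set, and the density/limiting argument pushing TED non-uniqueness to non-generic points) simply makes explicit the details the paper leaves to the reader.
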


\subsection{Means and related statistics for tree-like shapes} \label{meansection}

In this section we use what we learned in the previous section to show that, given the QED metric on a space of tree-like shapes, we can find various forms of average shape in the space of ordered tree-like shapes, assuming that the data lie within an injectivity neighborhood.

There are many competing ways of defining central elements given a subset of a metric space. We discuss several: the \emph{circumcenter} considered in \cite{bridsonhaef}, the \emph{centroid} considered, among other places, in \cite{phylogenetic}, and the \emph{mean} \cite{karcher}.

The problem of existence and uniqueness of averages can be attacked using convex functions. Recall that a function $f \colon [a, b] \to \R$ is convex if $f((1-s)t + st') \le (1-s)f(t) + sf(t')$ for all $s \in [0,1]$ and $t, t' \in [a,b]$. If we can replace $\le$ with $<$ whenever $s \in ]0, 1[$, then $f$ is \emph{strictly} convex. Convex functions have minimizers, which are unique for strictly convex functions. Hence, existence and uniqueness of averages can be proven by expressing them as minimizers of strictly convex functions. 

We say that a function $f \colon X \to \R$ on a geodesic metric space $X$ is (strictly) convex if for any two points $x, y \in X$ and any geodesic $\gamma \colon [0, l] \to X$ from $x$ to $y$, the function $f \circ \gamma$ is (strictly) convex. We shall make use of the following standard properties of convex functions:

\begin{lem} \label{convexlemma}
\begin{itemize}
\item[i)] If $f \colon \R \to \R$ and $g \colon \R \to \R$ are both convex, $g$ is monotonous and increasing, and $g$ is strictly convex, then $g \circ f$ is strictly convex. 
\item[ii)] If $f \colon \R \to \R$ and $g \colon \R \to \R$ are both convex, then $g + f \colon \R \to \R$ is also convex. If either $f$ or $g$ is strictly convex, then $g + f$ is strictly convex as well. \hfill \qed
\end{itemize}
\end{lem}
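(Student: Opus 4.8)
The plan is to prove both parts directly from the one-variable definition of (strict) convexity. Fix arbitrary $t_0 < t_1$ in $\R$ and $s \in (0,1)$, and write $t_s = (1-s)t_0 + st_1$ for the interpolated argument; everything reduces to comparing the value of a function at $t_s$ with the convex combination of its values at $t_0$ and $t_1$.

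For part i) I would assemble a chain of inequalities. First, convexity of $f$ gives $f(t_s) \le (1-s)f(t_0) + sf(t_1)$. Since $g$ is increasing, applying it preserves the inequality, so $g(f(t_s)) \le g\bigl((1-s)f(t_0) + sf(t_1)\bigr)$. Finally, convexity of $g$ bounds the right-hand side by $(1-s)g(f(t_0)) + sg(f(t_1))$. Chaining these shows $g\circ f$ is convex; the real content of the claim is that the combined inequality is \emph{strict} whenever $t_0 \ne t_1$.

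Part ii) is immediate: writing the defining convexity inequalities for $f$ and for $g$ at $t_s$ and adding them term by term gives $(g+f)(t_s) \le (1-s)(g+f)(t_0) + s(g+f)(t_1)$, so $g+f$ is convex. If either summand satisfies its inequality strictly, the sum does too, giving strict convexity. No case analysis is needed here.

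The main obstacle is the strictness in i), since each inequality in the chain can degenerate to equality. The step using convexity of $g$ is an equality exactly when $f(t_0) = f(t_1)$, while the step using convexity of $f$ is an equality when $f$ is affine on $[t_0,t_1]$. To force overall strictness I would split into cases: if $f(t_0) \ne f(t_1)$, then strict convexity of $g$ makes the middle step strict and we are done; if $f(t_0) = f(t_1)$, the needed strictness must instead come from the inner function, via $f(t_s) < f(t_0)$ together with the strict monotonicity of $g$. This case split is the only place the hypotheses interact nontrivially, and it is precisely here that one must ensure the inner function $f$ supplies the missing strictness (a merely convex $f$ that is constant on a segment would make $g\circ f$ constant there). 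In the intended application $f$ is a distance-type convex function and $g(t)=t^2$, so the required strictness is available.
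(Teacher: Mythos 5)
Your part ii) is correct, and so are the convexity chain in part i) and the case $f(t_0)\neq f(t_1)$; note that there is nothing in the paper to compare against here, since the paper states this lemma with a qed box and no proof, as a standard fact. The genuine problem is exactly the case you flagged: when $f(t_0)=f(t_1)$, the strict inequality $f(t_s)<f(t_0)$ does not follow from the hypotheses, and in fact item i) is \emph{false} as stated --- take $f\equiv 0$ (convex) and $g(t)=e^t$ (strictly convex and increasing); then $g\circ f\equiv 1$ is not strictly convex. So what you found is not a difficulty to be finessed but a counterexample, and no proof of the statement as written can exist. Your way out --- ``in the intended application the required strictness is available'' --- is not a proof: a lemma quantified over all convex $f$ cannot be rescued by properties of one particular $f$. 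The clean repair is to add the hypothesis that $f$ is constant on no nondegenerate interval. That hypothesis is exactly what is needed: if $f(t_0)=f(t_1)=c$ and also $f(t_s)=c$ at some interior point, then convexity forces $f\equiv c$ on $[t_0,t_1]$ (for $u\in(t_0,t_s)$ write $t_s=\lambda u+(1-\lambda)t_1$; then $c\le\lambda f(u)+(1-\lambda)c$ gives $f(u)\ge c$, while $f(u)\le c$ from the endpoint values, and symmetrically on the other side of $t_s$); so under the added hypothesis $f(t_s)<c$, and strict monotonicity of $g$ --- which does follow from ``nondecreasing plus strictly convex'' --- closes your case split.

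A second, related gap hides in your last sentence, and equally in the paper's use of the lemma in theorem~\ref{existencemeans}: for $f=d_y\circ\gamma$, being a ``distance-type convex function'' is not by itself enough, because convexity of $d_y$ along geodesics does not exclude that $d_y$ is constant on a geodesic segment. In $\R^2$ with the maximum norm, $d_y(x)=\|x-y\|_\infty$ is convex, yet for $y=(0,2)$ and the geodesic $\gamma(t)=(t,0)$, $t\in[0,1]$, one has $d_y(\gamma(t))\equiv 2$, so $d_y^2\circ\gamma$ is constant and means in that space are famously non-unique. What rules this out in the application is the $CAT(0)$ hypothesis itself: if $d(\gamma(t),y)\equiv r$ on a segment of positive length $\ell$, the Euclidean comparison triangle has its midpoint at distance $\sqrt{r^2-\ell^2/4}<r$ from the comparison point of $y$, contradicting the $CAT(0)$ inequality. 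Alternatively one bypasses the composition lemma altogether, as in Sturm's proof which the paper cites, via the quantitative inequality $d_y^2(\gamma(s))\le(1-s)\,d_y^2(\gamma(0))+s\,d_y^2(\gamma(1))-s(1-s)\,d(\gamma(0),\gamma(1))^2$ valid in $CAT(0)$ spaces, which yields strict convexity of $d_y^2$ directly. Either way, the extra input has to be stated explicitly; as written, both your proof and the paper's lemma omit it.
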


The \emph{mean} of a finite subset $\{x_1, \ldots x_s\}$ in a metric space $(X, d)$ is defined as in eq.~\ref{meandef}, and is called the \emph{Fr\`{e}chet mean}. Local minimizers of eq.~\ref{meandef} are called \emph{Karcher means}.

The following result follows from a more general theorem by Sturm~\cite[Proposition 4.3]{sturm}; we include the basic version of the proof here for completeness.
\begin{thm} \label{existencemeans}
Means exist and are unique in $CAT(0)$-spaces.
\end{thm}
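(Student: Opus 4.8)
The plan is to realize the mean as the unique minimizer of the strictly convex \emph{energy functional}
$$f(x) = \sum_{i=1}^s d(x, x_i)^2,$$
and to extract both existence and uniqueness from the convexity properties of the squared distance in a $CAT(0)$ space. Throughout I assume the space is complete, which is the standing hypothesis in this setting and holds for all the tree-shape spaces by theorem~\ref{orderedthm}. First I would record the key analytic input: in a $CAT(0)$ space, for any fixed point $p$ and any geodesic $\gamma \colon [0,1] \to X$, the squared distance satisfies the strong convexity (Bruhat--Tits) inequality
$$d(\gamma(t), p)^2 \le (1-t)\, d(\gamma(0), p)^2 + t\, d(\gamma(1), p)^2 - t(1-t)\, d(\gamma(0), \gamma(1))^2,$$
which is a standard consequence of the $CAT(0)$ comparison property \cite{bridsonhaef}. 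In particular, each map $x \mapsto d(x, x_i)^2$ is strictly convex along non-constant geodesics, so by lemma~\ref{convexlemma}~ii) the sum $f$ is strictly convex as well.

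Uniqueness is then the easy direction. Suppose $m_1 \neq m_2$ both minimize $f$. By proposition~\ref{bhprop} there is a unique geodesic $\gamma$ joining them, and strict convexity of $f \circ \gamma$ forces the value at the midpoint to lie strictly below $\tfrac{1}{2}f(m_1) + \tfrac{1}{2}f(m_2)$, which equals the minimum of $f$ --- a contradiction. Hence a minimizer, if it exists, is unique.

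Existence is the main obstacle, and here the \emph{quantitative} form of the inequality does the work. Let $\mu = \inf_x f(x)$ and choose a minimizing sequence $(y_n)$ with $f(y_n) \to \mu$. Writing $m_{nk}$ for the midpoint of the geodesic from $y_n$ to $y_k$ and summing the strong convexity inequality over $i = 1, \ldots, s$ gives
$$f(m_{nk}) \le \tfrac{1}{2} f(y_n) + \tfrac{1}{2} f(y_k) - \tfrac{s}{4}\, d(y_n, y_k)^2.$$
Since $f(m_{nk}) \ge \mu$, the right-hand side bounds $\tfrac{s}{4}\, d(y_n, y_k)^2$ by $\tfrac{1}{2}f(y_n) + \tfrac{1}{2}f(y_k) - \mu$, which tends to $0$; thus $(y_n)$ is Cauchy. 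Completeness yields a limit $m$, and continuity of $f$ gives $f(m) = \mu$. Combined with the uniqueness argument, $m$ is the unique Fr\'echet mean. In the proper tree-shape spaces of this paper one could alternatively invoke compactness of closed balls directly, as $f$ is continuous and proper; but the Cauchy-sequence argument above is what makes the statement valid in \emph{any} complete $CAT(0)$ space.
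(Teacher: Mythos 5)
Your proof is correct, and it reaches the statement by a genuinely different route than the paper on the existence half. The paper's proof obtains strict convexity of $D = \sum_{i=1}^s d_{x_i}^2$ by combining convexity of $d_y$ (\cite[Proposition II.2.2]{bridsonhaef}) with the composition and sum rules of lemma~\ref{convexlemma}, and then settles existence in one line: $D$ is coercive, hence has a minimizer. That step is really using that the spaces at hand are \emph{proper} (theorem~\ref{orderedthm}), since coercivity plus continuity gives existence directly only when closed bounded sets are compact; in an arbitrary complete $CAT(0)$ space an additional argument is required. Your proof supplies exactly that argument: the quantitative (Bruhat--Tits) form of the $CAT(0)$ inequality, summed over the data points, shows that any minimizing sequence is Cauchy, so completeness alone yields existence. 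This is essentially the proof of the more general result of Sturm that the paper cites \cite[Proposition 4.3]{sturm}, and it is what makes the theorem true beyond the proper setting. Your route has a second, smaller advantage: strict convexity of $d_y^2$ falls straight out of the term $-t(1-t)\,d(\gamma(0),\gamma(1))^2$, whereas the paper's composition argument (lemma~\ref{convexlemma}~i)) is delicate in the degenerate case where $d_y$ would be constant along a nonconstant geodesic --- a situation that the quantitative inequality rules out in $CAT(0)$ spaces, but which the composition lemma by itself does not exclude. The uniqueness halves of the two proofs coincide: strict convexity forces the midpoint of two distinct minimizers to beat the minimum, a contradiction.
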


\begin{proof}
The function $d_y \colon Y \to \R$ given by $d_y(x) = d(x, y)$ is convex for any fixed $y \in Y$ by \cite[Proposition II.2.2]{bridsonhaef}, so the function $d_y^2$ is \emph{strictly} convex by lemma~\ref{convexlemma} i). But then $D = \sum_{i = 1}^s d_{x_i}^2$ is strictly convex by lemma~\ref{convexlemma} \emph{ii)}, and a mean is just a minimizer of the function $D$. The function $D$ is coercive, so the minimizer exists. Since $D$ is strictly convex, the minimizer is unique.
\end{proof}

We also consider two other types of statistical ''prototypes'' for a dataset, namely \emph{circumcenters} and \emph{centroids}. These are both well-known in the context of metric geometry and $CAT(\kappa)$ spaces.

\begin{defn}
\begin{itemize}
\item[a)] {\bf Circumcenters.} Consider a metric space $(Y, d)$ and a bounded subset $Z \subset Y$. There exists a unique smallest closed ball $\bar{B}(c_Z, r_Z)$ in $Y$ which contains $Z$; the center $c_Z$ of this ball is the \emph{circumcenter} of $Z$.
\item[b)] {\bf The centroid of a finite set.} Let $X$ be a uniquely geodesic metric space (a metric space where any two points are joined by a unique geodesic). The centroid of a set $S \subset X$ of $n$ elements is defined recursively as a function of the centroids of subsets with $n-1$ elements as follows: Denote the elements of $S$ by $s_1, \ldots, s_n$. If $S$ contains two elements, $|S| = 2$, the centroid $c(S)$ of $S$ is the midpoint of the geodesic joining $s_1$ and $s_2$. If $|S| = n$, define $c^1(S) = \{c(S') : |S'| = n-1\}$, which is a set with $n$ elements. Similarly, for larger $k$, $c^k(S) = c^1(c^{k-1}(S))$. All these sets have $n$ elements. If the elements of $c^k(S)$ converge to a point $c \in X$ as $k \to \infty$, then we say that $c = c(S)$ is the \emph{centroid} of $S$ in $X$.
\end{itemize}
\end{defn}

Based on the theory of $CAT(\kappa)$ spaces and our results for means, we have for the set of tree-like shapes:

\begin{thm} \label{all_averages}
Endow $\bar{X}$ with the QED metric $\bar{d}_2$. A generic point $\bar{x} \in \bar{X}$ has a neighborhood $U$ such that sets contained in $U$ have unique means, centroids and circumcenters.

\emph{The same statistical properties also hold for the QED metric on unordered tree-shapes:} Generic points in the space of unordered tree-shapes with the QED metric $(\bar{\bar{X}}, \bar{\bar{d}}_2)$ have neighborhoods within which means, circumcenters and centroids exist and are unique. 

For the TED metric, these are not unique.

The same results hold in the restricted tree-spaces $\bar{Z}$ and $\bar{\bar{Z}}$, defined in definition~\ref{subspacedef} and theorem~\ref{unorderedthm}.
\end{thm}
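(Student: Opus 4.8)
The plan is to reduce all three statistics to the $CAT(0)$ theory already in hand. At a generic point $\bar{x}\in\bar{X}$ I would first invoke Theorem~\ref{unique_geo} (equivalently Theorem~\ref{qedcurv}(i)) to obtain an injectivity neighborhood, i.e.\ a genuinely $CAT(0)$ neighborhood $\mathscr{U}$ of $\bar{x}$. Since metric balls in a $CAT(0)$ space are convex, I would shrink $\mathscr{U}$ to a convex $CAT(0)$ open ball $U=B(\bar{x},r)$ and pass to its closure $\bar{U}$. By Theorem~\ref{orderedthm} the ambient space is complete and proper, so $\bar{U}$ is a closed, bounded, hence compact (Lemma~\ref{closedandbounded}) and convex subset, and with the induced metric it is a complete $CAT(0)$ space. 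Convexity of $U$ is the property that will let me keep every construction inside the neighborhood.

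Next I would treat the three averages for a finite dataset $\{x_1,\ldots,x_s\}\subset U$. For the mean, the claim is Theorem~\ref{existencemeans} applied to $\bar{U}$: the function $D=\sum_i\bar{d}_2(\cdot,x_i)^2$ is strictly convex by Lemma~\ref{convexlemma} and attains a unique minimizer on the compact convex set $\bar{U}$. For the circumcenter I would cite the standard fact that a bounded subset of a complete $CAT(0)$ space has a unique circumcenter \cite[II.2.7]{bridsonhaef}, applied within $\bar{U}$. For the centroid, the iterated-midpoint construction is well defined because $U$ is uniquely geodesic (Proposition~\ref{bhprop}), and it converges to a unique limit in a $CAT(0)$ space, as in \cite{phylogenetic}. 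The one point requiring care — and the step I expect to be the main obstacle — is localization: I must verify that the average of data contained in $U$ again lies in $\bar{U}$, so that the local computation coincides with the global definition and no point outside $U$ can spoil uniqueness. This follows because in a $CAT(0)$ space the mean, the circumcenter and every iterated midpoint all lie in the closed convex hull of the data, which is contained in the convex set $\bar{U}$.

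For the unordered space $(\bar{\bar{X}},\bar{\bar{d}}_2)$ I would use Theorem~\ref{unordered_curvature} together with the observation that the projection $\bar{p}\colon\bar{X}\to\bar{\bar{X}}$ is a local isometry away from the non-generic fixed-point set of the reordering group $G$; a generic point of $\bar{\bar{X}}$ therefore has a $CAT(0)$ neighborhood, and the argument above applies verbatim. The restricted spaces $\bar{Z}$ and $\bar{\bar{Z}}$ are handled identically, since Theorems~\ref{qedcurv} and~\ref{unordered_curvature} already establish the required curvature statements there. Finally, for the TED metric I would appeal to Theorem~\ref{qedcurv}(ii) and its unordered analogue: since $(\bar{X},\bar{d}_1)$ and $(\bar{\bar{X}},\bar{\bar{d}}_1)$ have no locally unique geodesics, every neighborhood contains pairs of points with infinitely many midpoints, and a midpoint of a two-point dataset is simultaneously its mean and its centroid; hence these averages fail to be unique anywhere, and the unbounded curvature also blocks the convexity arguments used above.
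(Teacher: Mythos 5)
Your proposal is correct and follows essentially the same route as the paper's own proof: reduce everything to the locally $CAT(0)$ structure given by Theorems~\ref{qedcurv} and~\ref{unordered_curvature}, then invoke Theorem~\ref{existencemeans} for means, \cite[Proposition II.2.7]{bridsonhaef} for circumcenters, and \cite[Theorem 4.1]{phylogenetic} for centroids, and dispose of the TED claim via non-uniqueness of geodesic midpoints for two-point datasets. The only difference is that you are more careful about localization---shrinking to a convex ball and observing that all three averages lie in the closed convex hull of the data---a step the paper's terse proof leaves implicit when it applies results stated for complete $CAT(0)$ spaces inside a merely local $CAT(0)$ neighborhood.
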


\begin{proof}
First consider $\bar{X}$ and $\bar{\bar{X}}$ with the QED metric. By theorems~\ref{qedcurv} and~\ref{unordered_curvature}, $\bar{X}$ and $\bar{\bar{X}}$ are both locally $CAT(0)$ spaces, and by theorems~\ref{orderedthm} and~\ref{unorderedthm} they are both complete metric spaces. 

We have seen in theorem~\ref{existencemeans} that means exist and are unique in $CAT(0)$ spaces, so the statement holds for means. By \cite[proposition~2.7]{bridsonhaef}, any subset $Y$ of a complete $CAT(0)$ space has a unique circumcenter. Hence, the statement holds for circumcenters. Similarly, by \cite[theorem~4.1]{phylogenetic}, finite subsets of $CAT(0)$ spaces $X$ have centroids (unique by definition), so the statement holds for centroids.

We turn to the TED metric. By definition, for any $2$-point dataset, all these notions of mean reduce to finding the midpoint of a geodesic connecting the two points. We know that geodesics and midpoints are not unique in the TED metric. This ends the proof.
\end{proof}

\subsection{The injectivity neighborhood} \label{injectivity}

We have shown that the local curvature is nonpositive almost everywhere in $\bar{X}$, which makes $\bar{X}$ well suited for geometric definitions of statistical properties. However, our notion of ''almost everywhere'' is strongly tied to (maximal) dimensionality, which again is strongly tied to the topological structure of the maximal tree $\mathscr{T}$. One consequence is that the injectivity neighborhoods in $\bar{X}$ are rather small, as we are about to see through examples. In this section we impose natural constraints on tree-space that allow us to increase the size of the injectivity neighborhoods, and make a conjecture for future expansion beyond the use of $CAT(0)$-spaces.

Consider the following two examples; we thank the anonymous reviewer for the first!

\begin{ex}
\begin{itemize}
\item[1)] Consider the tree-shapes $T_1$ and $T_2$ shown in fig.~\ref{ref_ex_1}, left. These two tree-shapes are joined by two geodesics, and thus $T_1$ and $T_2$ are not contained in the same injectivity neighborhood in $\bar{X}$. However, in a suitably chosen $\bar{Z}$, they can be.
\item[2)] Consider the tree-shape $\tilde{T}$ in the space of unordered tree-shapes with attributes in $\R^2$, spanned by the maximal tree $\mathscr{T}$ as in fig.~\ref{ref_ex_2}, right. Arbitrarily close to $\tilde{T}$ we will find two trees $T_1$ and $T_2$ which are joined by two geodesics, as shown in the figure.
\end{itemize}
\end{ex}

\begin{figure}
\centering
\includegraphics[width=0.7\linewidth]{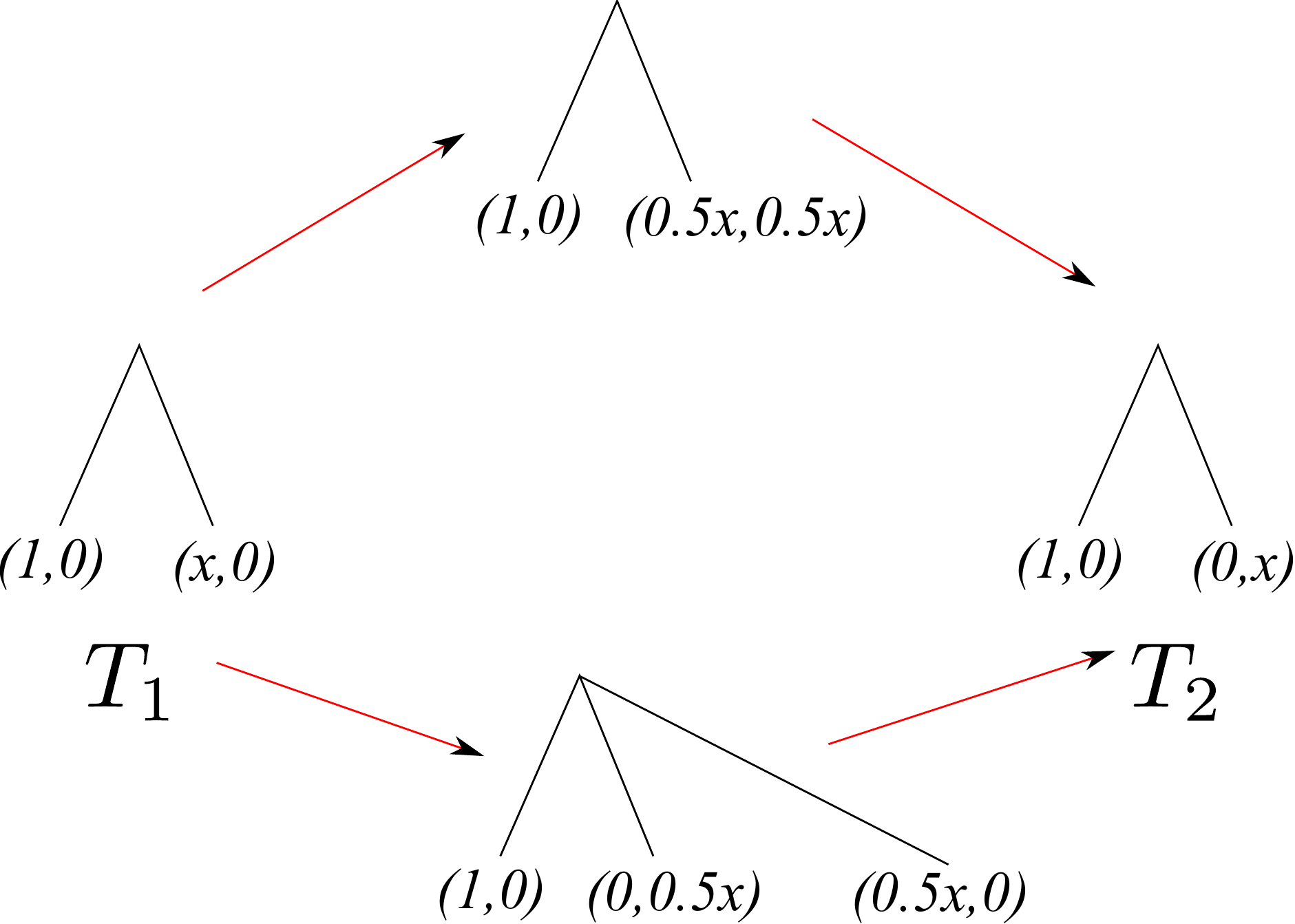}
\caption{Two trees joined by two different geodesics in $\bar{X}$.}
\label{ref_ex_1}
\end{figure}

\begin{figure}
\centering
\includegraphics[width=0.7\linewidth]{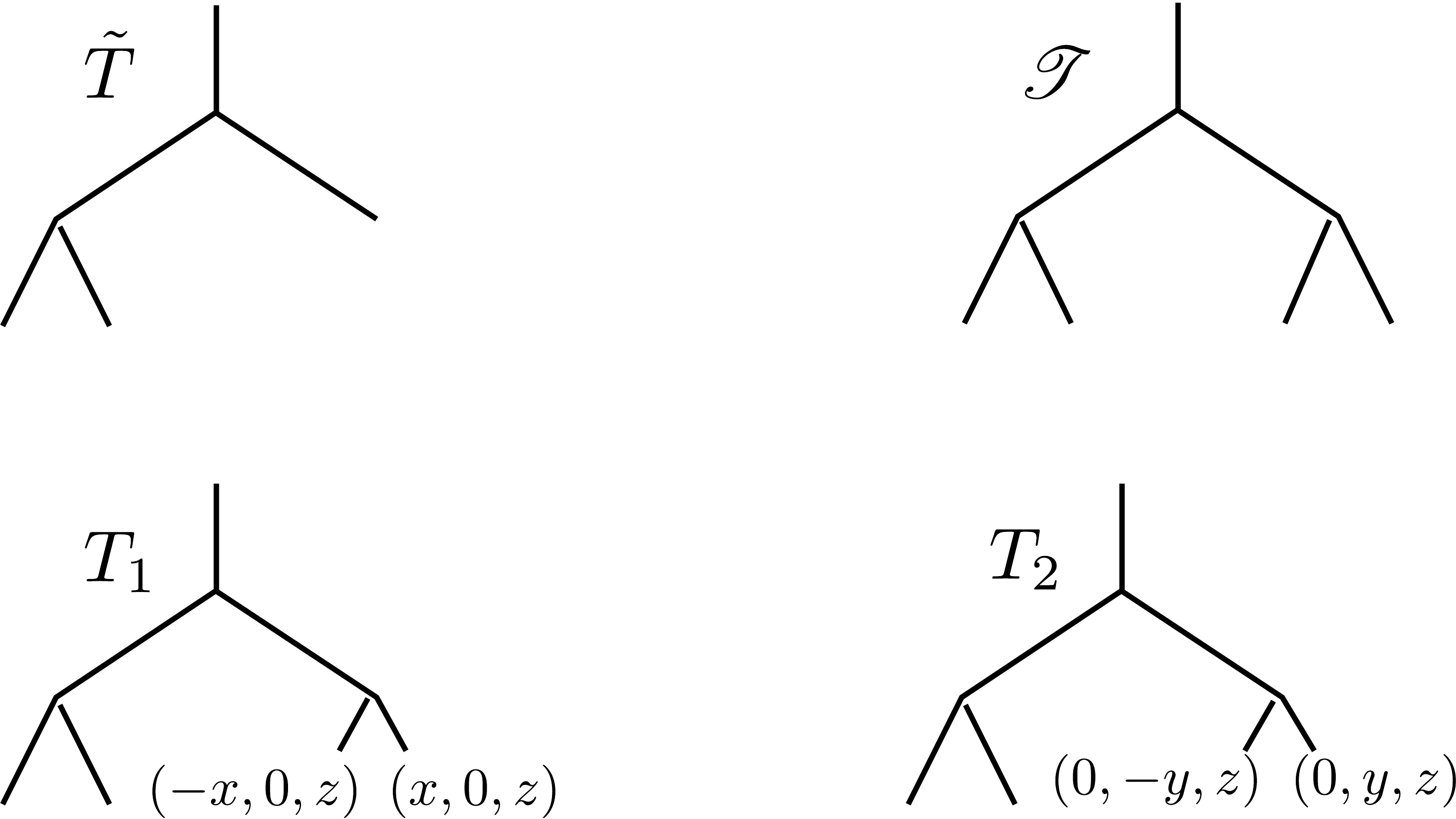}
\caption{Arbitrarily close to $\tilde{T}$ are two other trees $T_1$ and $T_2$ who cannot be joined by a unique geodesic.}
\label{ref_ex_2}
\end{figure}

As a illustrated by these examples, the $CAT(0)$ injectivity neighborhoods in the shape-space $\bar{X}$ can be very small, mainly containing trees whose topology is the same as $\mathscr{T}$. However, we can obtain much larger injectivity neighborhoods by restricting to the natural subspaces $\bar{Z} \subset \bar{X}$ and $\bar{\bar{Z}} \subset \bar{\bar{X}}$ as in definition~\ref{subspacedef}. As shown in theorem~\ref{all_averages}, $\bar{Z}$ and $\bar{\bar{Z}}$ have the same nice geometric properties as $\bar{X}$ and $\bar{\bar{X}}$, and $Z$ can be chosen so that the injectivity neighborhoods are bigger than in $\bar{X}$ and $\bar{\bar{X}}$ by avoiding situations as in the examples above. Radius is not a good measure for the size of an injectivity neighborhood, as a tree may contain both small branches, which do not have much room to vary, as well as large branches, which will be allowed to move more throughout such a neighborhood. However, any convex neighborhood which does not contain points of curvature $+ \infty$, will be an injectivity neighborhood.

All the results from section~\ref{meansection} hold at generic points, within an injectivity neighborhood where the $CAT(0)$ property holds. We have seen examples of points where local $CAT(0)$-property must fail, illustrating the limitaitons of $CAT(0)$ techniques for analyzing general trees. However, most of the situations where the $CAT(0)$ property fails are highly non-generic, and we conjecture that more general results can be proven:

\begin{con} \label{conjecture}
For a generic set of points $x_1, \ldots, x_n$ in $\bar{X}$, $\bar{\bar{X}}$, $\bar{Z}$ or $\bar{\bar{Z}}$, means exist and are unique.
\end{con}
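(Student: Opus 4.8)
The plan is to separate the two assertions, since \emph{existence} requires no curvature hypothesis while \emph{uniqueness} is the real content. For existence, consider the objective $D(\bar{x}) = \sum_{i=1}^n \bar{d}(\bar{x}, x_i)^2$. Using the well-defined size of lemma~\ref{sizelemma} and the fact that $\bar{d}(\bar{0}, \cdot) = \|\cdot\|$ (implicit in the proof of lemma~\ref{closedandbounded}), the triangle inequality gives $\bar{d}(\bar{x}, x_i) \ge \|\bar{x}\| - \|x_i\|$, so $D$ is coercive. Since $D$ is continuous and, by theorem~\ref{orderedthm}, the space is proper, $D$ attains its minimum on a sufficiently large compact ball; hence a Fr\'echet mean exists. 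The same argument runs verbatim in $\bar{X}$, $\bar{\bar{X}}$, $\bar{Z}$ and $\bar{\bar{Z}}$ (the finite-group quotient defining the unordered spaces preserves properness and leaves $\|\cdot\|$ invariant).

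The difficulty is that the space is only \emph{locally} $CAT(0)$: by theorem~\ref{unique_geo} the curvature is $+\infty$ on a closed, lower-dimensional singular set $\Sigma$ formed by the category-iii) points, and it is precisely on $\Sigma$ that strict convexity of $\bar{d}(\cdot, x_i)^2$ can break down. The first step is to record the local statement: away from $\Sigma$ the reasoning of theorem~\ref{existencemeans} applies unchanged, so if a minimizer $\bar{m}$ sits at a generic point, it has an injectivity neighborhood on which $D$ is strictly convex, making $\bar{m}$ an isolated local minimizer and the unique minimizer within that neighborhood.

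The core of the proof is then a two-part genericity argument over the data. First I would show that for generic $x_1, \ldots, x_n$ no global minimizer lands on $\Sigma$: since $\Sigma$ is a finite union of proper linear images and $D$ restricted to each top-dimensional flat stratum is an honest strictly convex Euclidean least-squares functional, a minimizer pinned to $\Sigma$ is a coincidence of codimension $\ge 1$ that is destroyed by an arbitrarily small perturbation of the data. Second, having confined all candidate minima to the generic locus where they are isolated, I would rule out \emph{two distinct} global minimizers by a transversality/Sard-type argument: as the data vary continuously, the values $D(\bar{m}_\alpha)$ of the finitely many local minima vary continuously, and the condition that two distinct local minima attain equal minimal value cuts out a nowhere-dense set of data configurations. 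On its open, dense complement the global minimum is attained at a single point, which is exactly genericity of uniqueness.

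The hard part will be controlling how the local minima behave across $\Sigma$. Near a $+\infty$-curvature point a pair of nearby trees admits several geodesics (as in fig.~\ref{folding}), so local minima of $D$ can be created, merge, or vanish exactly as the data cross the singular set; one must verify that these bifurcations occur on a \emph{nowhere-dense} set rather than merely a measure-zero set, in order to match the open-dense definition of \emph{generic}. I expect this to require exploiting the piecewise-Euclidean, finitely-glued structure of $\bar{X}$ (the identifications are along finitely many coordinate subspaces, giving a semi-algebraic description) together with Sturm's barycenter estimates~\cite{sturm}: stratify the data space by which flat pieces the geodesics from a candidate mean to the data traverse, and run a stratified Morse/transversality argument on each piece. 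Because this genuinely goes beyond the $CAT(0)$ machinery, which yields only the local statement of theorem~\ref{all_averages}, the global result is recorded here as conjecture~\ref{conjecture}.
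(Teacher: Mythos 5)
You should first note that the paper contains no proof of this statement: it is stated as conjecture~\ref{conjecture} precisely because the authors' $CAT(0)$ machinery only yields the local result, theorem~\ref{all_averages}, which requires the \emph{entire dataset} to lie in a single injectivity neighborhood, and the examples of figs.~\ref{ref_ex_1} and~\ref{ref_ex_2} show that this restriction cannot be lifted by the same technique. So there is nothing to compare your argument against, and it must stand on its own; to your credit you flag at the end that the global statement goes beyond the $CAT(0)$ toolbox, which is exactly the authors' reason for leaving it open. Your existence argument is essentially sound (and in fact needs no genericity at all): coercivity via $\bar{d}(\bar{0},\bar{x})=\|\bar{x}\|$ together with properness and continuity gives a minimizer of $D$ for \emph{every} finite dataset.

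The uniqueness half, however, has two concrete gaps that would stop this from being a proof. First, your claim that a minimizer sitting at a generic point ``has an injectivity neighborhood on which $D$ is strictly convex'' is false as stated: local $CAT(0)$ gives the comparison inequality only for triangles contained in that neighborhood, whereas $D=\sum_i \bar{d}(\cdot,x_i)^2$ involves distances to data points far outside it. Convexity of $\bar{d}(\cdot,x_i)$, as invoked in theorem~\ref{existencemeans} via \cite[Prop.\ II.2.2]{bridsonhaef}, requires \emph{global} $CAT(0)$; in a space that is merely locally $CAT(0)$ the squared distance to a distant point is generally non-convex near cut-locus points (already on a flat circle or torus), and this is exactly the obstruction separating theorem~\ref{all_averages} from the conjecture. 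Second, your dismissal of minimizers on the singular set $\Sigma$ as ``a coincidence of codimension $\ge 1$ destroyed by arbitrarily small perturbation'' is not sound reasoning in spaces of this type: Fr\'echet means can lie on lower-dimensional identified strata \emph{robustly}, i.e., for an open set of data configurations (the \emph{stickiness} phenomenon, well documented in spaces glued along subspaces such as the BHV space of \cite{phylogenetic}), so a naive dimension count over the data does not rule this out. What you would actually need is that the curvature-$+\infty$ points are ``anti-sticky'' (repel minimizers), which is plausible for positive-curvature-type singularities but is precisely the hard content. Relatedly, $D$ restricted to a top-dimensional stratum is \emph{not} an honest Euclidean least-squares functional, since geodesics from a candidate mean to the data may traverse several flat pieces; $D$ is only a pointwise minimum of finitely many piecewise-smooth branch functions, so your final Sard/transversality step also rests on unproven regularity (finitely many local minima, non-degenerate dependence of their values on the data). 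As a research program your outline is reasonable and correctly identifies where the difficulty lies, but both pillars of the uniqueness argument are currently unsupported, and the statement should be regarded as remaining open.
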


\subsection{Comparison of QED and TED} \label{fundamental}

As we have seen in theorems~\ref{qedcurv}, the QED metric gives locally non-positive curvature at generic points, while the TED metric gives unbounded curvature everywhere on $\bar{X}, \bar{\bar{X}}, \bar{Z}$ and $\bar{\bar{Z}}$. Equivalently, geodesics are locally unique almost everywhere in the QED metric, while nowhere locally unique in the TED metric. As emphasized by theorem~\ref{all_averages}, this means that we cannot imitate the classical statistical procedures on shape spaces using the TED metric, while for the QED metric, we can.

Note, moreover, that the QED metric is the quotient metric induced from the Euclidean metric on the pre-shape space $X$, making it the natural choice of metric seen from the shape space point of view.


From a computational point of view, the TED metric has nice local-to-global properties. If the trees $T_1$ and $T_2$ are decomposed into subtrees $T_{1,1}, T_{1,2}$ and $T_{2,1}, T_{2,2}$ as in fig.~\ref{tedsplitting}, such that the geodesic from $T_1$ to $T_2$ restricts to geodesics between $T_{1,1}$ and $T_{2,1}$ as well as $T_{1,2}$ and $T_{2,2}$, then $d(T_1, T_2) = d(T_{1,1}, T_{2,1}) + d(T_{1,2}, T_{2,2})$. Many dynamic programming algorithms for TED use this property, and the same does not hold for the QED metric, due to the square root.

\begin{figure}
\centering
\subfigure[]{
\label{tedsplitting} 
\includegraphics[width=0.55\linewidth]{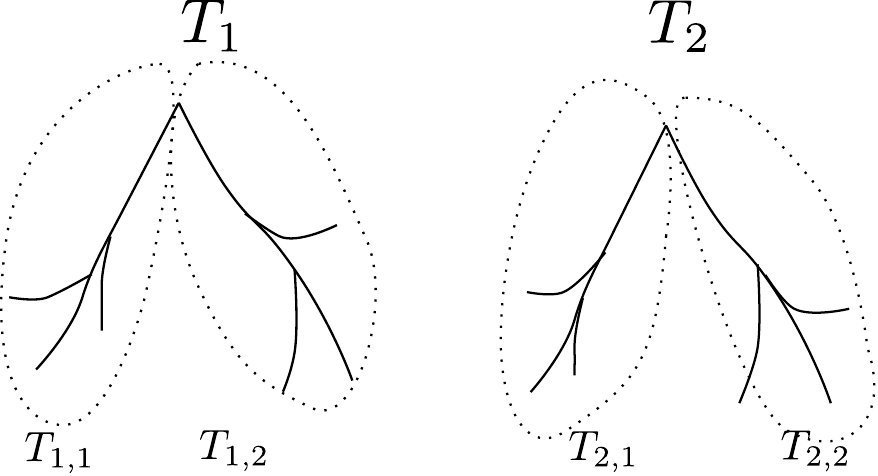}
}
\hspace{1mm}
\subfigure[]{
\label{tedqedcomparison}
\includegraphics[width=0.3\linewidth]{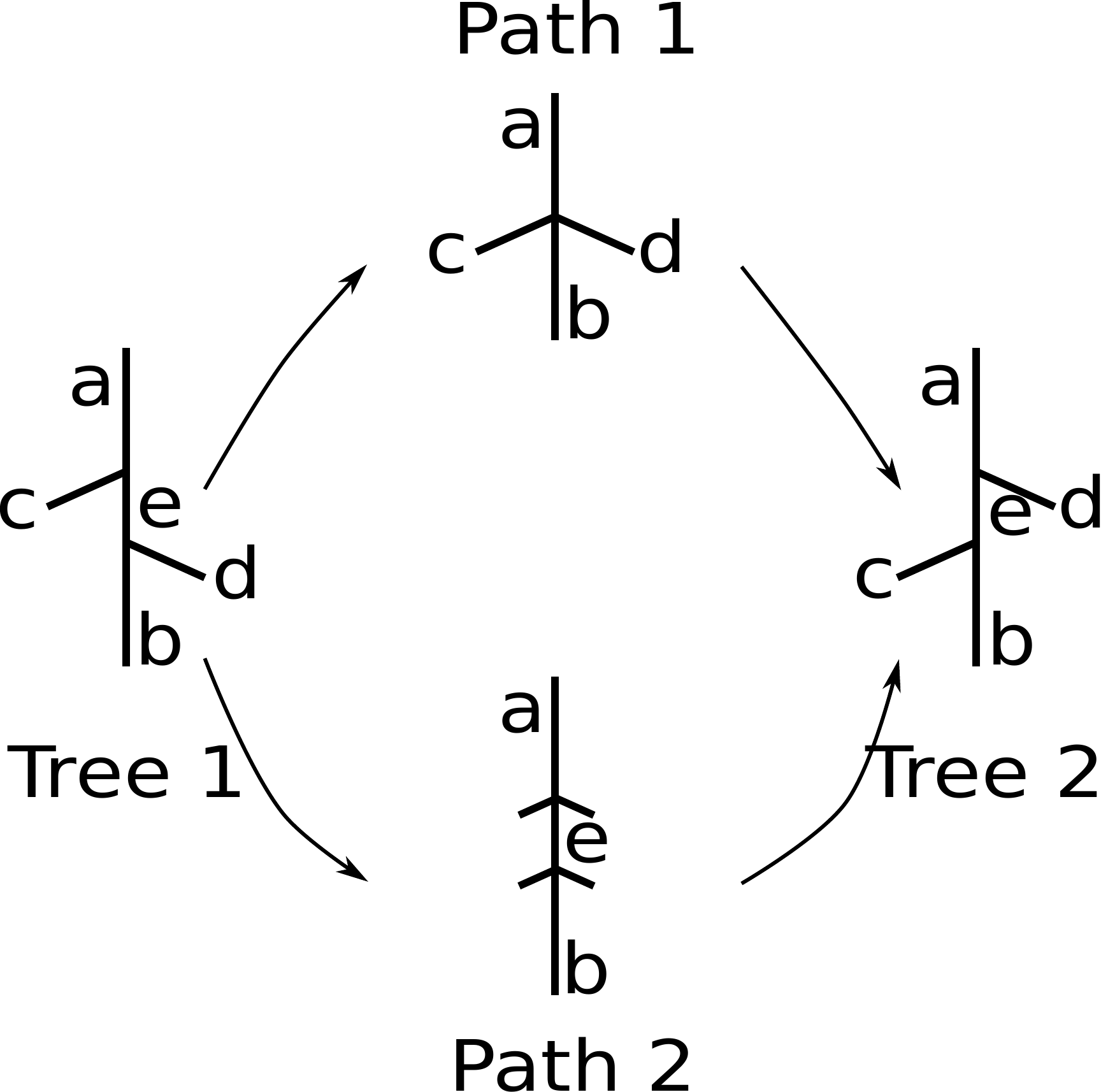}
}
\caption{(a) Local-to-global properties of TED: $\bar{d}_1(T_1, T_2) = \bar{d}_1(T_{1,1}, T_{2,1}) + \bar{d}_1(T_{1,2}, T_{2,2})$. (b) Two options for structural transition in a path from Tree 1 to Tree 2.}
\end{figure}


For a qualitative comparison of QED and TED, we compare the geodesics defined by the TED and QED metrics between small, simple trees. Consider the two tree-paths in fig.~\ref{tedqedcomparison}, where the edges are endowed with scalar attributes $a, b, c, d, e \in \R_+$ describing edge length. Computing the costs of the two different paths in both metrics, we find the shortest (geodesic) path. 

Path $1$ indicates a matching left and right side edges $c$ and $d$, while Path $2$ does not make the match. The cost of Path $1$ is $2e$ in both metrics, while the cost of Path $2$ is $2\sqrt{c^2 + d^2}$ in the QED metric and $2(c+d)$ in the TED metric. Thus, TED will identify the $c$ and $d$ edges whenever $e^2 \le c^2 + 2cd + d^2$, while QED makes the identification whenever $e^2 \le \textrm{\textonehalf} (c^2 + d^2)$. Thus, TED will be more prone to internal structural change than QED. 

The same occurs in the comparison of TED and QED matching in figs.~\ref{qedmatching} and~\ref{tedmatching}. Although the TED is more prone to matching trees with different tree-topological structures, the edge matching results are similar, as is expected, since the metrics are closely related.

\begin{figure}
\centering
\subfigure[Matching in the QED metric.]{\label{qedmatching} \includegraphics[width=0.58\linewidth]{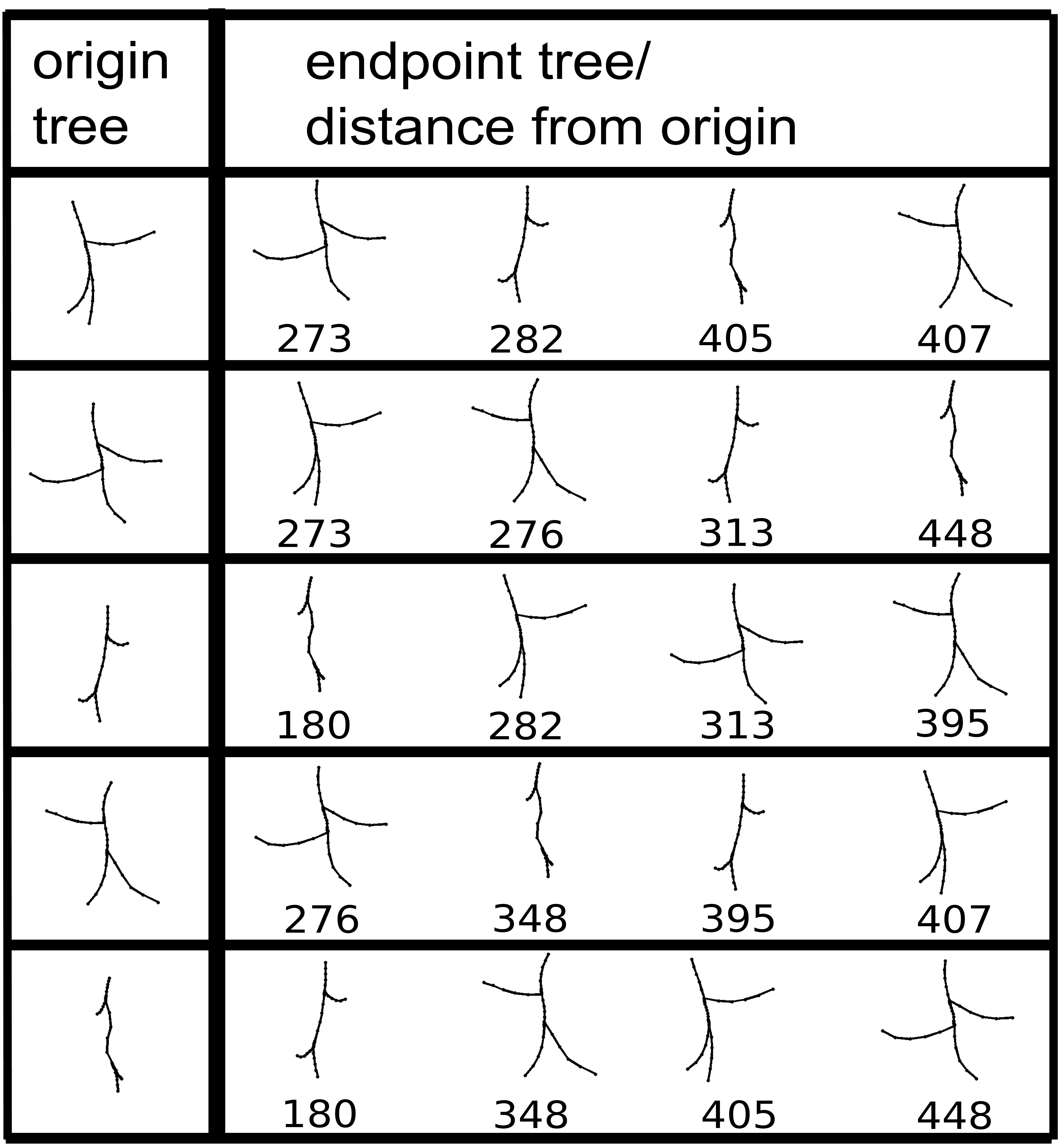}} 
\hspace{2mm}
\subfigure[Matching in the TED metric.]{\label{tedmatching} \includegraphics[width=0.58\linewidth]{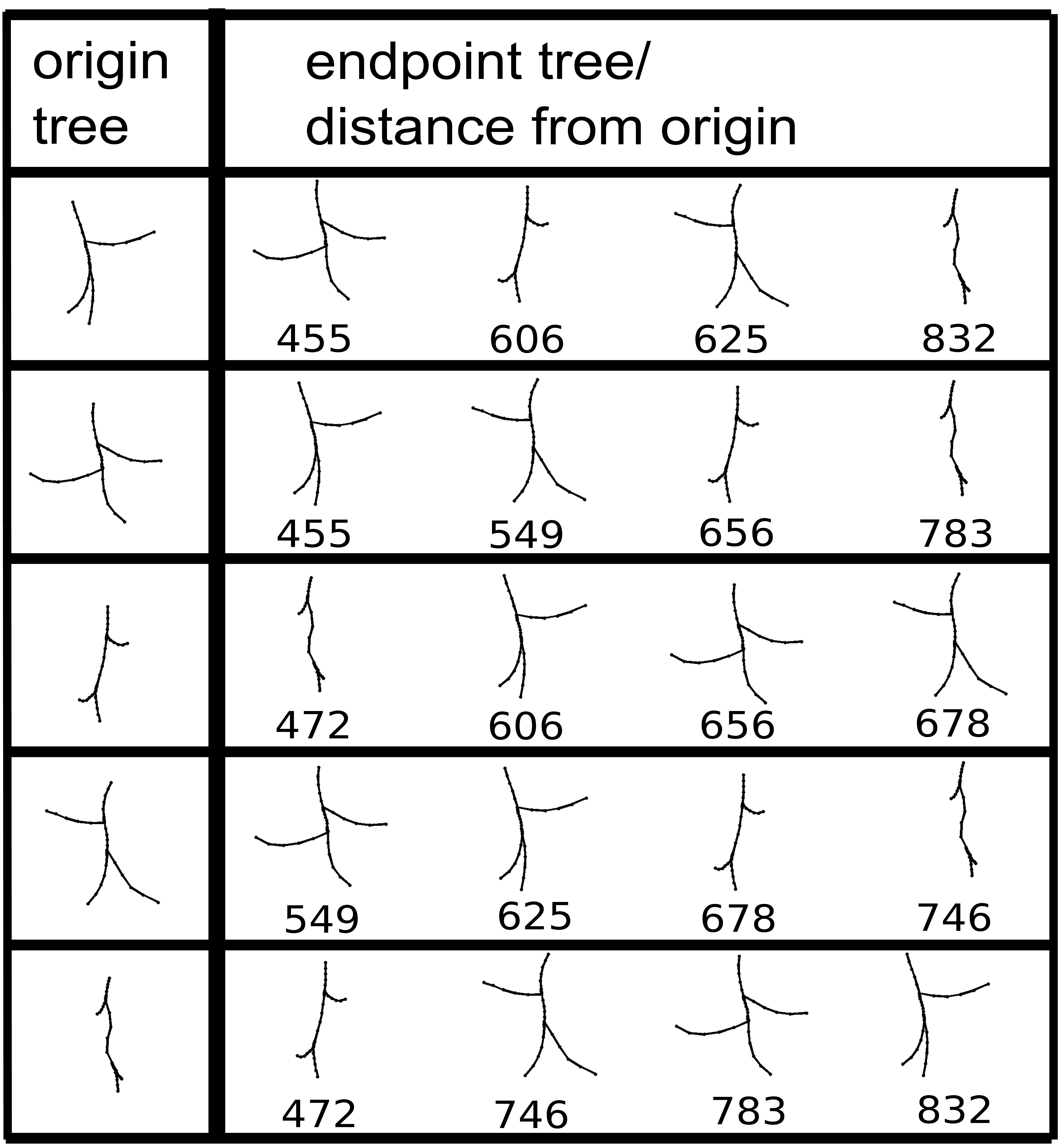}}
\caption{Given a set of five data trees, we match each to the four others in both metrics.}
\end{figure}

\section{Computation and complexity} \label{complexitysection}

In general, computational complexity is a problem for both TED and QED. Computing TED distances between unordered trees is NP-complete \cite{zhangetal}, and we conjecture that the QED metric is NP-complete as well. We can, however, often use geometry and prior knowledge (e.g., anatomy) to find efficient approximations. \emph{Trees appearing in applications are often not completely unordered, but are semi-labeled}.

\begin{ex}[Semi-labeling of the upper airway tree] \label{semilabel}
Most airway trees have similar, but not necessarily identical, topological structure, where several branches have names and can be identified by experts. 

The top generations of the airway tree serve very clear purposes in terms of anatomy. The root edge is the trachea; the second generation edges are the left and right main bronchi; the third generation edges lead to the lung lobes. As these are easily identified, we find a semi-labeling of the airway tree, and use it to simplify computations of inter-airway geodesics in section~\ref{pulmonarysection}.
\end{ex}

%

\subsection{QED approximation and implementation}

Since the decomposition strategy used for dynamic programming in TED is not available for QED, an approximation of QED is used in our experiments. Many anatomical trees have a somewhat fixed overall structure. For these, it is safe to assume that the number of internal structural transitions found in a geodesic deformation is low, and that the geodesics pass through identified subspaces of low codimension. The latter assumption is equivalent to assuming that the trees appearing throughout the geodesic deformation have nodes of low degree. For instance, for the airway trees studied in section~\ref{pulmonarysection} below, we find empirically that it is enough to allow for one structural change in each lobar subtree, which has at most a trifurcation. Recall the definition of the metric from eq.~\ref{metricdefn}; the approximation consists of imposing upper bounds $K$ on the number $k$ and $D$ on the degrees of internal vertices appearing in eq.~\ref{metricdefn}, respectively, giving:
\begin{equation} \label{metricdefn2}
\bar{d}_{approx, K}(\bar{x}, \bar{y}) = \inf_{k \le K} \left\{ \sum_{i=1}^k
    d(x_i, y_i) | x_1 \in \bar{x}, y_i \sim x_{i+1}, y_k \in \bar{y}\right\},
\end{equation}
where all $x_i$ and $y_i$ have vertex degrees at most $D$. Geometrically, $k$ is the number of Euclidean segments concatenated to form the geodesic. Bounding $k$ is equivalent to bounding the number of internal topological transitions throughout the geodesic. In fact, $k = 1 \  + $ \emph{the number of internal topological transitions in the geodesic}.


All edges are translated to start at $0 \in \R^d$, and represented by a fixed number of landmark points (in our case $6$) evenly distributed along the edge, the first at the origin. The distance between two edge attributes $v_1, v_2 \in (\R^d)^5$ is the Euclidean distance $\|v_1 - v_2\|$.


We approximate of the QED metric using Algorithm~\ref{planaralg}; see Appendix B in the supplemental material for details. The complexity of Algorithm 1 is 
\[
O(n^{(D - 2)(K-1)})\cdot O(optim(n,K)),
\]
where $n$ is the number of internal vertices, $K$ is the bound on $k$, $D$ is the bound on vertex degree and $optim(n,K)$ is the optimization in line $8$. For the unordered airway trees in section~\ref{pulmonarysection}, we combine Algorithm 1 with a complete search through the set of branch orderings. Computing QED distances through a complete search is not optimal, and finding more exact and efficient algorithms is a nontrivial research problem.

%

\begin{algorithm}
\caption{Computing approximate QED distances between ordered, rooted trees with up to $k = K-1$ structural transitions through trees of order at most $D$}
\begin{algorithmic}[1]
\STATE $x$, $y$ planar rooted depth $n$ binary trees
\STATE $\mathbf{S} = \{S\}$ set of ordered identified pairs $S = \{S_1, S_2\}$ of subspaces of $X$ corresponding to internal topological changes through trees of order at most $D$, corresponding to a subspace $S$ of $\bar{X}$, s.t.~if $\{S_1, S_2\} \in \mathbf{S}$, then also $\{S_2, S_1\} \in \mathbf{S}$. 
\FOR{$\tilde{S} = \{S^1, \ldots, S^s\} \subset \mathbf{S}$ with $|\tilde{S}| \le k$}
\FOR{$p^i \in S^i$ with representatives $p^i_1 \in S^i_1$ and $p^i_2 \in S^i_2$} 
\STATE $p = (p^1, p^2, \ldots, p^s)$
\STATE $f(p) = \min\{ d_2(x, p_1^1) + \sum_{j=1}^{s-1} d_2(p^j_2, p^{j+1}_1) + d_2(p^s_2, y) \}$
\ENDFOR
\STATE $d_{\tilde{S}} = \min \left\{
\begin{array}{c|c}
f(p) & 
\begin{array}{c} 
p = (p^1, \ldots, p^s), p^i \in S^i,\\
 \tilde{S} = \{S^1, \ldots, S^s\}
\end{array}
\end{array}
 \right\}$
\STATE $p_{\tilde{S}} = \{p^i_1, p^i_2\}_{i=1}^s = \textrm{argmin} f(p)$
\ENDFOR
\STATE $d = \min \{d_{\tilde{S}} | \tilde{S} \subset \mathbf{S}, |\tilde{S}| \le k\}$
\STATE $p = \{p_1, p_2\}_{i=1}^s = \{p_{\tilde{S}} | d_{\tilde{S}} = d\}$
\STATE geodesic $ = g  = \{x \rightarrow p^1_1 \sim p^1_2 \rightarrow p^2_1 \sim p^2_2 \rightarrow \ldots \rightarrow p^s_1 \sim p^s_2 \rightarrow y \}$
\RETURN{$d, g$}
\end{algorithmic}
\label{planaralg}
\end{algorithm}

\section{Experimental results} \label{experimentalsection}

The QED metric is new, whereas the properties of the TED metric are well known \cite{editshock}. Our experimental results on real and synthetic data illustrate the geometric properties of the QED metric. The experiments on airway trees in section~\ref{pulmonarysection} show, in particular, that it is feasible to compute geodesics between real, $3D$ data trees.

\subsection{Synthetic planar trees of depth 3}

Movies illustrating geodesics between planar depth $3$ trees, as well as a table illustrating a tree-shape matching experiment for a set of synthetic planar depth $3$ trees, are found on the web page \url{http://image.diku.dk/aasa/tree_shape/planar.html}. The movies and matchings show the geometrically intuitive behavior of the geodesic deformations. We see that the intermediate structures resemble the geodesic endpoint trees in a reasonable manner, and show the ability of the geodesics to handle internal topological differences. These are among the wanted properties from the model listed in the beginning of section~\ref{geometrysection}.

Moreover, computed mean and centroid trees are shown in fig.~\ref{dataset_means}; these are presented in greater detail in~\cite{feragen_iccv11}. The QED average trees clearly represent the main common properties of the dataset trees.

\subsection{Results in 3D: Pulmonary airway trees} \label{pulmonarysection}

As proof-of-concept experiments on real data we compute means for sets of pulmonary airway trees from the EXACT'09 airway segmentation competition~\cite{exact}. The airway trees were first segmented from low dose screening computed tomography (CT) scans of the chest using a voxel classification based airway tree segmentation algorithm~\cite{lo}. The centerlines were extracted from the segmented airway trees using a modified fast marching algorithm based on~\cite{schlatholter2002}. The method gives a tree structure directly through connectivity of parent and children branches. For simplicity, we only consider the upper airway tree down to the lobe branches.

Means are computed using both the QED metric (details found in \cite{feragen_iccv11}) and using the TED method proposed by Trinh and Kimia~\cite{trinh}. The dataset is shown in fig.~\ref{airway_dataset}, and the results are seen in fig.~\ref{airway_means}. We clearly see how the choice of TED geodesic makes the TED mean vulnerable to noise in the form of missing branches among the dataset trees, giving a mean shape whose structure is too simple. The QED mean, on the other hand, represents the data well.

\begin{figure}
\centering
\includegraphics[width=0.7\linewidth]{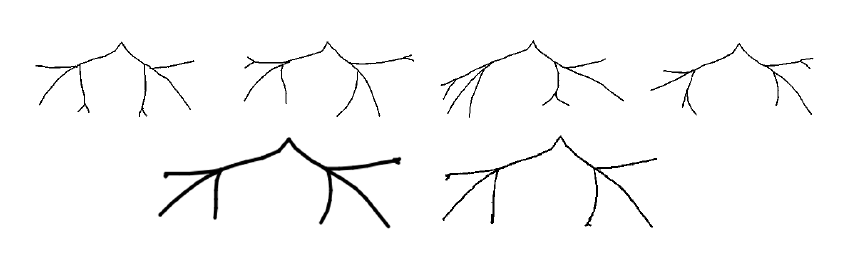}
\caption{Top: A small set of synthetic planar trees. Bottom left: The mean tree. Bottom right: The centroid tree.}
\label{dataset_means}
\end{figure}

\begin{figure}[!t]
\centering
\includegraphics[width=0.8\linewidth]{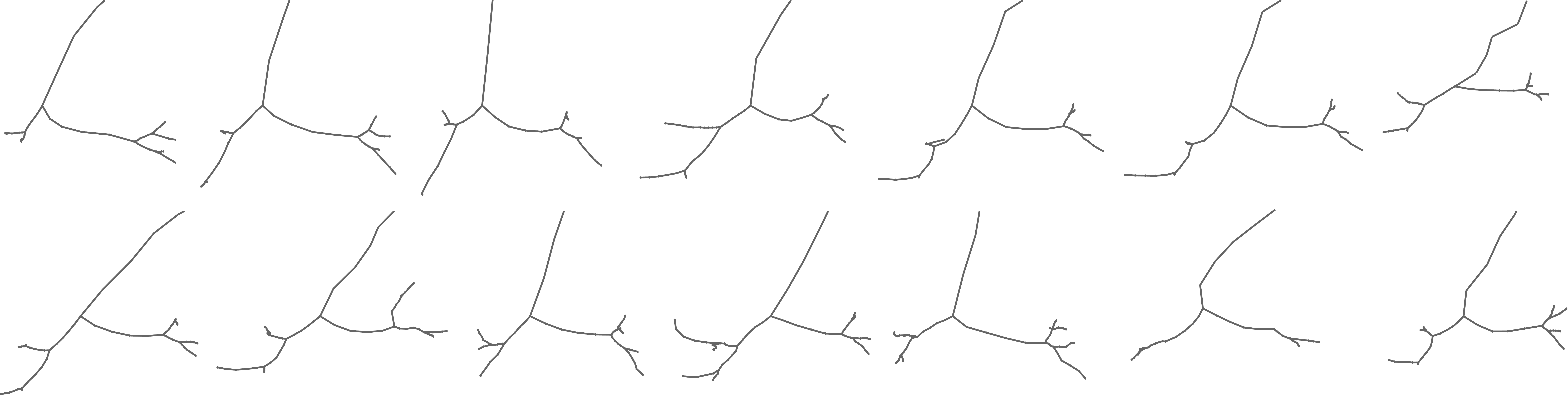}
\caption{Subtrees of $14$ airway trees from the EXACT'09 test set. \cite{exact}}
\label{airway_dataset}
\end{figure}

\begin{figure}
\centering
\includegraphics[width=0.35\linewidth]{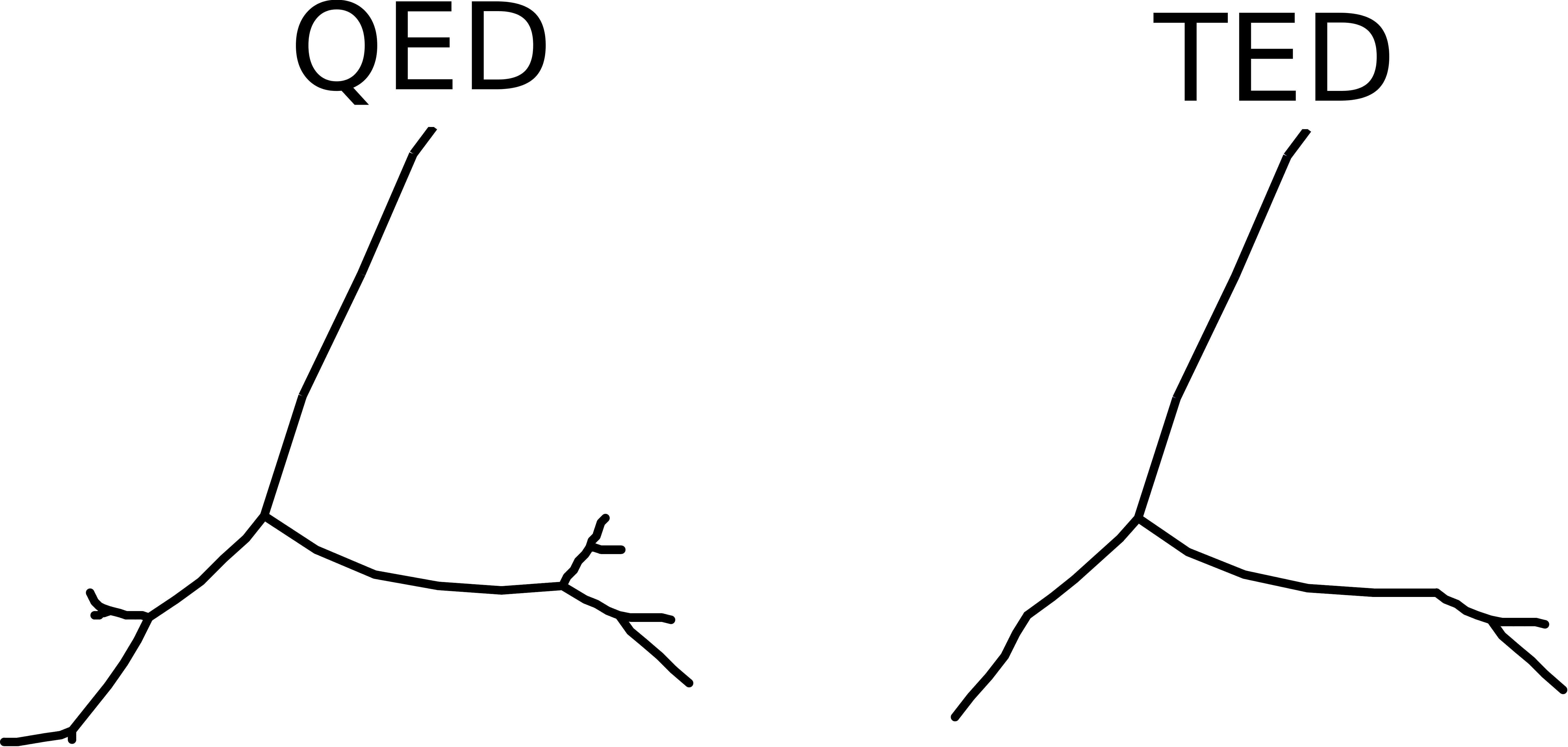}
\caption{The QED and TED mean airway trees. The TED mean misses the upper lobe branches on both sides, caused by the fact that the $13$th data tree misses these.}
\label{airway_means}
\end{figure}

\section{Discussion and conclusion}

Starting from a purely geometric point of view, we define a shape space for tree-like shapes. This intuitive geometric framework allows us to simultaneously take both global tree-topological structure and local edgewise geometry into account. We study two metrics on this shape space, TED and QED, which both give the shape space a geodesic structure (theorems~\ref{orderedthm} and~\ref{unorderedthm}). The framework is developed for tree-shapes, but carries over to other geometric trees with vector attributes.

QED is the geometrically most natural metric, and turns out to have properties which are essential for statistical shape analysis. Through a geometric analysis of the tree-shape framework, we show that the QED metric gives local uniqueness of geodesics and local existence and uniqueness for three versions of average shape, namely the mean, the circumcentre and the centroid (theorem~\ref{all_averages}). Our analysis gives new insight into the metric space defined by TED, and explains why the problem of finding TED-averages is ill-posed.

Both metrics are generally NP complete to compute for $3D$ trees. We explain how semi-labeling schemes can be used to handle complexity problems, and illustrate this by computing QED means for sets of trees extracted from pulmonary airway trees as well as synthetic planar data trees. The QED lacks an obvious decomposition strategy for dynamic programming, and involves an optimization term which makes it more time consuming to compute than the TED metric. The two metrics have similar matching properties, and thus, for applications which do not require a unique geodesic (e.g., clustering) the TED metric is just as suitable as the QED metric. However, for statistical computations, the QED metric is the only suitable metric. This is emphasized by our experiments, where computed QED and TED means illustrate that the TED means are missing important structure.

Future research will be centered around two points: Development of nonlinear statistical methods for the singular tree-shape spaces, and finding more efficient approximations and heuristics for the QED metric using both the tree geometry and the tree-space geometry. The latter will pave the road for computing averages and modes of variation for large, real $3D$ data trees. This is by no means trivial, due to the non-smooth structure of the tree-shape space and the complexity of computing exact distances and geodesics.
%
%

%
%


\begin{appendices}

\section{Measures induced on quotient spaces}

The purpose of this appendix is to explain the connection between genericity, defined as a topological property, and probability. In particular, we explain why a generic property can be assumed to hold with probability one, or almost everywhere, with respect to natural probability measures. The main point of the appendix is the contents of Theorem~\ref{mainthm}.

\begin{defn}[Generic property]
A property in a topological space $X$ is said to be \emph{generic} if it holds on a dense, open subset of $X$.
\end{defn}

\begin{defn}[$\sigma$-algebra]
  A collection $A = \{U\}$  of subsets of some set $X$ is a \emph{$\sigma$-algebra} if $A$
  has the following properties:
\begin{itemize}
\item $X \in A$.
\item If $U \in A$, then its complement is also an element: $U^C \in A$.
\item If $U_i \in A$ for a family $i = 1 \ldots \infty$, then the union is also an
  element: $\bigcup_{i=1}^\infty U_i \in A$.
\end{itemize}
\end{defn}

\begin{defn}[Measure]
  A \emph{measure} on a $\sigma$-algebra $A$ is a function $\mu \colon A \to \R_{\ge 0}$
  satisfying the $\sigma$-additivity property: if $\{U_i\}_{i = 1 .. \infty}$ is a
  countable family of pairwise disjoint subsets in $A$, then $\mu \left(\bigcup_{i =
      1}^\infty U_i \right) = \sum_{i = 1}^\infty \mu(U_i)$.
\end{defn}

\begin{defn}[Almost everywhere]
A property is said to hold \emph{almost everywhere} with respect to a probability measure $\mu$ on a space $X$ if it holds on a subset $U \subset X$ such that $\mu(X \setminus U) = 0$. This corresponds to holding with probability one, or \emph{almost surely}.
\end{defn}

Assume given a metric space $(X, d)$, and $\sigma$-algebra $A$ which contains the topology
induced by $d$, and measure $\mu$ on the sigma-algebra $A$. Assume also given an
equivalence class $\sim$ on $X$, and denote by $\bar{X} = X/\sim$ the quotient of $X$ with
respect to the equivalence. Denote by $\pi \colon X \to \bar{X}$ the projection onto the
quotient.

\begin{lem}
  The set $\bar{A} = \{\bar{U} \subset \bar{X} | \pi^{-1}(\bar{U}) \in A\}$ is a
  $\sigma$-algebra on $\bar{X}$.
\end{lem}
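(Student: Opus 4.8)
The plan is to verify directly the three defining properties of a $\sigma$-algebra for $\bar{A}$, exploiting the single fact that taking preimages under the projection $\pi \colon X \to \bar{X}$ commutes with all the relevant set-theoretic operations. Since membership in $\bar{A}$ is defined purely in terms of whether $\pi^{-1}(\bar{U})$ lies in the known $\sigma$-algebra $A$, each axiom for $\bar{A}$ will reduce to the corresponding axiom for $A$ applied to the preimage.

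First I would check that $\bar{X} \in \bar{A}$: this holds because $\pi^{-1}(\bar{X}) = X$, and $X \in A$ since $A$ is a $\sigma$-algebra. Next, for closure under complements, I would take $\bar{U} \in \bar{A}$ and use the elementary identity $\pi^{-1}(\bar{X} \setminus \bar{U}) = X \setminus \pi^{-1}(\bar{U})$; since $\pi^{-1}(\bar{U}) \in A$ by assumption and $A$ is closed under complements, the right-hand side lies in $A$, so $\bar{U}^C \in \bar{A}$. Finally, for closure under countable unions, I would take a family $\bar{U}_i \in \bar{A}$, $i = 1, \ldots, \infty$, and use the identity
\[
\pi^{-1}\left( \bigcup_{i=1}^\infty \bar{U}_i \right) = \bigcup_{i=1}^\infty \pi^{-1}(\bar{U}_i);
\]
each $\pi^{-1}(\bar{U}_i)$ is in $A$, so the countable union is in $A$ because $A$ is a $\sigma$-algebra, whence $\bigcup_i \bar{U}_i \in \bar{A}$.

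There is no real obstacle here: the entire content of the lemma is the fact that preimage is a Boolean homomorphism on power sets (it preserves complements and arbitrary unions and intersections), which is purely formal and independent of the specific structure of $\pi$, $\sim$, or the metric on $X$. The only point requiring the slightest care is to state the complement identity relative to the correct ambient set, i.e.\ to write complements of $\bar{U}$ inside $\bar{X}$ and of $\pi^{-1}(\bar{U})$ inside $X$, so that the two match up under $\pi^{-1}$; once this bookkeeping is in place, the three verifications are immediate. Consequently I would keep the proof to the three short lines above rather than belabour any step.
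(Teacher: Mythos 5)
Your proof is correct and takes essentially the same approach as the paper's: both verify the three $\sigma$-algebra axioms directly, reducing each to the corresponding axiom for $A$ via the identities $\pi^{-1}(\bar{X}) = X$, $\pi^{-1}(\bar{U}^C) = \left(\pi^{-1}\bar{U}\right)^C$, and $\pi^{-1}\left(\bigcup_{i=1}^\infty \bar{U}_i\right) = \bigcup_{i=1}^\infty \pi^{-1}(\bar{U}_i)$. The bookkeeping point you raise about taking complements in the correct ambient sets is handled implicitly in the paper by the same complement identity, so there is no substantive difference.
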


\begin{proof}
\begin{itemize}
\item $\pi^{-1}\bar{X} = X \in A$ so $\bar{X} \in \bar{A}$.
\item If $\bar{U} \in \bar{A}$ then $\pi^{-1}\bar{U} \in A$ so $\pi^{-1}(\bar{U}^C) =
  (\pi^{-1}\bar{U})^C \in A$, so $\bar{U}^C \in \bar{A}$.
\item If $\bar{U}_i \in \bar{A}$ for $i = 1 \ldots \infty$ then $\pi^{-1} \bar{U}_i \in A$
for all $i$, so $\pi^{-1} \left( \bigcup_{i = 1}^\infty U_i \right) = \bigcup_{i=1}^\infty
\pi^{-1} \bar{U}_i \in A$ so $\bigcup_{i=1}^\infty \bar{U}_i \in \bar{A}$.
\end{itemize}
\end{proof}

\begin{lem}
  The function $\bar{\mu} \colon \bar{A} \to \R_{\ge 0}$ given by $\bar{\mu}(\bar{U}) =
  \mu(\pi^{-1}(\bar{U}))$ is a measure on $\bar{A}$.
\end{lem}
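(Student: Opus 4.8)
The plan is to verify the two defining properties of a measure directly from the formula $\bar{\mu}(\bar{U}) = \mu(\pi^{-1}(\bar{U}))$, using only elementary properties of set-theoretic preimages together with the fact that $\mu$ is already a measure on $A$. First I would note that $\bar{\mu}$ is well-defined and non-negative: for any $\bar{U} \in \bar{A}$ we have $\pi^{-1}(\bar{U}) \in A$ by the very definition of $\bar{A}$, so $\mu(\pi^{-1}(\bar{U})) \in \R_{\ge 0}$ is meaningful and non-negative. This takes care of the requirement that $\bar{\mu} \colon \bar{A} \to \R_{\ge 0}$.

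The substantive step is $\sigma$-additivity. Let $\{\bar{U}_i\}_{i=1}^\infty$ be a countable family of pairwise disjoint sets in $\bar{A}$. I would rely on two elementary facts about the preimage under $\pi$. First, preimages commute with arbitrary unions: $\pi^{-1}\left(\bigcup_{i=1}^\infty \bar{U}_i\right) = \bigcup_{i=1}^\infty \pi^{-1}(\bar{U}_i)$. Second, preimages preserve disjointness: if $\bar{U}_i \cap \bar{U}_j = \emptyset$ for $i \neq j$, then $\pi^{-1}(\bar{U}_i) \cap \pi^{-1}(\bar{U}_j) = \emptyset$, since any $x$ in the intersection would satisfy $\pi(x) \in \bar{U}_i \cap \bar{U}_j = \emptyset$, a contradiction. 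Hence $\{\pi^{-1}(\bar{U}_i)\}_{i=1}^\infty$ is a pairwise disjoint family of sets, each lying in $A$.

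With these two facts in hand the computation is a direct chain of equalities: apply the definition of $\bar{\mu}$, commute $\pi^{-1}$ past the union, invoke the $\sigma$-additivity of $\mu$ on the disjoint family $\{\pi^{-1}(\bar{U}_i)\}$, and finally re-package each term using the definition of $\bar{\mu}$ again, yielding
\[
\bar{\mu}\left(\bigcup_{i=1}^\infty \bar{U}_i\right) = \mu\left(\bigcup_{i=1}^\infty \pi^{-1}(\bar{U}_i)\right) = \sum_{i=1}^\infty \mu(\pi^{-1}(\bar{U}_i)) = \sum_{i=1}^\infty \bar{\mu}(\bar{U}_i).
\]
This establishes $\sigma$-additivity and completes the verification that $\bar{\mu}$ is a measure.

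I do not anticipate a genuine obstacle here; the argument is purely formal and mirrors the standard construction of a pushforward measure. The only point requiring any care is the preservation of disjointness under $\pi^{-1}$: if one mistakenly worked with the \emph{image} $\pi(U_i)$ rather than the preimage, disjointness could fail, since the equivalence $\sim$ may glue together points coming from different sets. Working with preimages throughout sidesteps this entirely, and the fact that $\bar{A}$ was constructed precisely so that $\pi^{-1}(\bar{U}) \in A$ guarantees that every quantity appearing above is defined.
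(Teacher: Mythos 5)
Your proof is correct and follows essentially the same route as the paper's: the identical chain of equalities $\bar{\mu}\left(\bigcup_i \bar{U}_i\right) = \mu\left(\pi^{-1}\left(\bigcup_i \bar{U}_i\right)\right) = \mu\left(\bigcup_i \pi^{-1}(\bar{U}_i)\right) = \sum_i \mu(\pi^{-1}(\bar{U}_i)) = \sum_i \bar{\mu}(\bar{U}_i)$, resting on the facts that preimages commute with unions and preserve disjointness. The only difference is that you spell out the disjointness-preservation step and well-definedness explicitly, which the paper leaves implicit.
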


\begin{proof}
  If $\{\bar{U}_i\}_{i=1}^\infty$ is a countable family of pairwise disjoint subsets in
  $\bar{A}$, then $\{\pi^{-1}\bar{U}_i\}_{i=1}^\infty$ is a countable family of pairwise
  disjoint subsets in $A$, and $\bar{\mu} \left(\bigcup_{i=1}^\infty \bar{U}_i\right) =
  \mu \left( \pi^{-1} \left( \bigcup_{i=1}^\infty \bar{U}_i \right) \right) = \mu \left(
    \bigcup_{i=1}^\infty \left( \pi^{-1} \bar{U}_i \right) \right) = \sum_{i=1}^\infty \mu
  \left(\pi^{-1} \bar{U}_i \right) = \sum_{i=1}^\infty \bar{\mu} \left(\bar{U}_i \right)$.
\end{proof}

\begin{prop}
  There are Lebesgue induced measures $\bar{m}_X$ and $\bar{\bar{m}}_X$ on the tree-spaces $\bar{X}$ and
  $\bar{\bar{X}}$, coming from the Lebesgue measure $m_X$ on $X$. For these measures, subsets of positive codimension have measure zero. In particular, generic properties hold almost everywhere with respect to the Lebesgue induced measure. \hfill $\square$
\end{prop}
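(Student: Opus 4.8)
The plan is to read off the two induced measures directly from the two preceding lemmas of this appendix, and then to reduce the codimension statement to the elementary fact that a proper affine subspace of a Euclidean space is Lebesgue-null. Concretely, let $A$ be the (complete) Lebesgue $\sigma$-algebra on $X = \prod_{e \in E}(\R^d)^n$ and let $m_X$ be the Lebesgue measure. Applying the two preceding lemmas to the projection $p \colon X \to \bar{X}$ produces the $\sigma$-algebra $\bar{A} = \{\bar{U} \mid p^{-1}(\bar{U}) \in A\}$ together with the measure $\bar{m}_X(\bar{U}) = m_X(p^{-1}(\bar{U}))$. For the unordered case I would apply the same two lemmas to the composite projection $q = \bar{p} \circ p \colon X \to \bar{\bar{X}}$ (equivalently, apply them once more to $\bar{p} \colon \bar{X} \to \bar{\bar{X}}$ starting from $\bar{m}_X$), obtaining $\bar{\bar{m}}_X(\bar{\bar{U}}) = m_X(q^{-1}(\bar{\bar{U}}))$. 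This settles existence.

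The heart of the matter is the codimension claim. I would declare that $\bar{S} \subset \bar{X}$ has \emph{positive codimension} when its preimage $p^{-1}(\bar{S}) \subset X$ is contained in a finite union of proper affine subspaces of $X$ (and analogously for $\bar{\bar{X}}$ via $q$). Each proper affine subspace of the Euclidean space $X$ is Lebesgue-null, and a finite union of null sets is null, so $m_X(p^{-1}(\bar{S})) = 0$. Because $A$ is the complete Lebesgue $\sigma$-algebra, the subset $p^{-1}(\bar{S})$ of a null set is itself measurable; hence $\bar{S} \in \bar{A}$ and $\bar{m}_X(\bar{S}) = m_X(p^{-1}(\bar{S})) = 0$. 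The identical reasoning with $q$ in place of $p$ gives $\bar{\bar{m}}_X(\bar{S}) = 0$.

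Finally I would deduce the genericity statement by checking that the non-generic sets arising in this paper have positive codimension in the above sense. In each relevant case the non-generic set is the $p$-image (or $q$-image) of a finite union of proper coordinate subspaces of $X$: truly-binary-ness fails precisely where some internal edge attribute vanishes (Proposition~\ref{generic}); the category \emph{iii)} points of Theorem~\ref{unique_geo} lie on intersections of the identification subspaces $V^i_j$; and the fixed-point locus of a non-identity $g \in G$ is the linear subspace where two sibling edges carry equal attributes. Since the identification maps $\phi^i_j$ are coordinate isometries and both their number and $|G|$ are finite, the saturated preimage under $p$ (resp.\ $q$) of such a set is again a finite union of proper coordinate subspaces, hence of positive codimension. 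By the previous paragraph these sets are null, so each corresponding generic property holds almost everywhere with respect to $\bar{m}_X$ (resp.\ $\bar{\bar{m}}_X$).

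The main obstacle I anticipate is not a hard estimate but the bookkeeping imposed by the non-manifold structure of $\bar{X}$: ``codimension'' must be read through the preimage in $X$ rather than intrinsically, and one must confirm that the finite-to-one, coordinate-adapted projection neither raises dimension nor spoils measurability. The two facts that make this routine are the finiteness and coordinate-spannedness of the subspaces $V^i_j$ (so that saturated preimages of coordinate subspaces remain finite unions of proper coordinate subspaces) together with the completeness of the Lebesgue $\sigma$-algebra (so that the null preimages are automatically measurable, placing $\bar{S}$ in $\bar{A}$).
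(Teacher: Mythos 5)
Your proposal is correct and follows the paper's intended route: the paper states this proposition without proof (it is offered as an immediate consequence of the two preceding lemmas applied to the Lebesgue measure and the quotient projections), and your construction of $\bar{m}_X$ and $\bar{\bar{m}}_X$ is exactly that. Your additional bookkeeping — reading ``positive codimension'' through preimages in $X$, invoking completeness of the Lebesgue $\sigma$-algebra for measurability, and verifying case by case that the paper's non-generic sets have saturated preimages contained in finite unions of proper linear subspaces (note the fixed-point loci are diagonal rather than coordinate subspaces, but still proper and hence null) — is a sound completion of the details the paper leaves implicit.
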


Recall that we also work on a restricted subspace $Z \subset X$ which consists of trees of certain fixed topologies that are collapsed versions of the maximal tree $\mathscr{T}$. The subspace $Z$ is a finite union
\[
 \bigcup_{i=1}^n Z_i
\]
where each $Z_i$ is a $d$-dimensional linear subspace of $X$ corresponding to a certain fixed collapsed version of $\mathscr{T}$ described by its nonzero edge set $E_i \subset E$. We assume, moreover, that $Z_i \nsubseteq Z_j$ whenever $i \neq j$. Each $Z_i$ now has a Lebesgue measure $m_i$ of its own, and we obtain a Lebesgue induced measure $m_Z$ on $Z$ defined as 
\[
 m_Z(U) = \sum_{i=1}^n m_i(U \cap Z_i).
\]
The measure $m_Z$ gives rise to Lebesgue induced measures on $\bar{Z}$ and $\bar{\bar{Z}}$ as explained above.

\begin{prop}
  There are Lebesgue induced measures $\bar{m}_Z$ and $\bar{\bar{m}}_Z$ on the tree-spaces $\bar{Z}$ and $\bar{\bar{Z}}$, coming from the Lebesgue induced measure $m_Z$ on $Z$. If all $Z_i$ have equal dimension, subsets of positive codimension have measure zero. In particular, generic properties hold almost everywhere with respect to the Lebesgue induced measure. \hfill $\square$
\end{prop}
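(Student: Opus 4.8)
The plan is to prove the three assertions in turn: existence of the induced measures, the measure-zero property for positive-codimension subsets, and the consequence for generic properties.

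First I would construct the measures by invoking, verbatim, the two lemmas established just above. They show that whenever a measure $\mu$ lives on a $\sigma$-algebra $A$ over a space and we project onto a quotient by $\pi$, the assignment $\bar{\mu}(\bar{U}) = \mu(\pi^{-1}(\bar{U}))$ is a measure on the quotient $\sigma$-algebra $\bar{A} = \{\bar{U} : \pi^{-1}(\bar{U}) \in A\}$. I would apply this to $(Z, m_Z)$ with the restricted projection $\pi_Z \colon Z \to \bar{Z} = Z/\!\sim_Z$, producing $\bar{m}_Z$, and then once more to pass from $\bar{Z}$ to $\bar{\bar{Z}} = \bar{Z}/G$, producing $\bar{\bar{m}}_Z$ (equivalently, apply it directly to the composite projection $Z \to \bar{\bar{Z}}$). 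Since $G$ is finite and $\sim_Z$ is finite-to-one, both projections are finite-to-one, so no measurability or integrability subtlety arises.

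Next comes the core claim: positive codimension implies $\bar{m}_Z$-null. Let $S \subset \bar{Z}$ have positive codimension and set $\tilde{S} = \pi_Z^{-1}(S) \subset Z$. I would first argue that $\tilde{S}$ still has positive codimension: $\pi_Z$ is finite-to-one and, by the construction in definition~\ref{subspacedef}, glues the representing subspaces by the isometries $\phi^i_j$, so $\tilde{S}$ is a finite union of isometric images of pieces of $S$, whence $\dim \tilde{S} = \dim S$. Intersecting with each linear piece $Z_i$, the equal-dimension hypothesis (all $Z_i$ of common dimension $d$) forces $\dim(\tilde{S} \cap Z_i) < d = \dim Z_i$ for every $i$. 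A subset of a $d$-dimensional linear space of dimension strictly less than $d$ is $d$-dimensional Lebesgue null, so $m_i(\tilde{S} \cap Z_i) = 0$; summing over $i$ gives $m_Z(\tilde{S}) = \sum_i m_i(\tilde{S} \cap Z_i) = 0$, and therefore $\bar{m}_Z(S) = m_Z(\pi_Z^{-1}(S)) = 0$. The identical argument, composed with the finite-to-one projection $\bar{Z} \to \bar{\bar{Z}}$, yields $\bar{\bar{m}}_Z$-nullity. Finally, I would deduce the genericity statement: a generic property holds on a dense open $U$, and for the generic properties arising in this paper (true binarity, distinctness of sibling attributes, avoidance of the curvature-$+\infty$ strata) the failure locus $\bar{Z}\setminus U$ is cut out by the vanishing of finitely many linear/coordinate conditions, hence sits in a positive-codimension subset and is null by the previous step.

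The main obstacle I anticipate is the middle step, and specifically the bookkeeping that makes positive codimension well defined on the stratified union $Z = \bigcup_i Z_i$ and preserved under both quotients. Two points are delicate: one must confirm that $\pi_Z^{-1}$ does not raise dimension, which relies on the gluing maps being isometries with finite fibres; and one must see that the equal-dimension hypothesis is genuinely necessary. Indeed, if the $Z_i$ had differing dimensions, a lower-dimensional slab $Z_j$ would itself be a positive-codimension subset of $Z$ relative to the top dimension, yet it carries its full (positive, or infinite) Lebesgue mass $m_j(Z_j)$, so the implication would fail. Getting this notion of codimension right and transporting it correctly through $\sim_Z$ and through $G$ is the subtle part; the $\sigma$-additivity and Lebesgue-null computations themselves are routine.
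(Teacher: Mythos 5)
Your proposal is correct and follows essentially the route the paper intends: the paper states this proposition without proof (just a $\square$), leaving it as an application of the two preceding quotient lemmas to $(Z, m_Z)$ and then to the finite quotient by $G$, which is exactly what you do, together with the dimension-counting argument on the strata $Z_i$ that the equal-dimension hypothesis makes work. Your explicit handling of the last step --- observing that ``generic $\Rightarrow$ almost everywhere'' is justified here because the failure loci of the paper's generic properties are cut out by coordinate conditions and hence have positive codimension, rather than following from density and openness alone --- is a careful reading of what the paper leaves implicit.
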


\begin{defn} \cite[p. 120]{rudin} Let $\mu$ and $\lambda$ be two measures on a
  $\sigma$-algebra $A$. The measure $\lambda$ is \emph{absolutely continuous} with respect to
  $\mu$ if and only if $\lambda(U) = 0$ for each $U \in U$ with $\mu(U) = 0$.
\end{defn}

\begin{thm} \label{mainthm}
  Let $\lambda$ be a probability measure which is absolutely continuous with respect to
  the Lebesgue induced measure $\bar{m}$ on one of the tree-space $\bar{X}$ or $\bar{\bar{X}}$, or of $\bar{Z}$ or $\bar{\bar{Z}}$ under the assumption that all $Z_i$ have equal dimension. Now any subset of positive
  codimension has zero measure with respect to $\lambda$.  In particular, generic properties hold almost surely, or with probability $1$, with respect to $\lambda$.  \hfill $\square$
\end{thm}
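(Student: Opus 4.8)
The plan is to reduce the statement to the two preceding propositions on Lebesgue induced measures together with the definition of absolute continuity; the argument is essentially formal once those inputs are in place. First I would fix whichever of the spaces $\bar{X}$, $\bar{\bar{X}}$, $\bar{Z}$ or $\bar{\bar{Z}}$ is under consideration (imposing, in the restricted case, the hypothesis that all $Z_i$ have equal dimension, which is exactly what the corresponding proposition requires) and take an arbitrary measurable subset $\bar{U}$ of positive codimension. By the relevant proposition, such a $\bar{U}$ has $\bar{m}(\bar{U}) = 0$, where $\bar{m}$ is the Lebesgue induced measure. Since $\lambda$ is absolutely continuous with respect to $\bar{m}$, the definition of absolute continuity gives immediately that $\bar{m}(\bar{U}) = 0$ forces $\lambda(\bar{U}) = 0$. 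This establishes the first assertion: every positive-codimension subset is $\lambda$-null. Note that only absolute continuity, not the normalization of $\lambda$, is used here.

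For the second assertion I would invoke the fact that in each of these tree-spaces the failure set of a generic property is of positive codimension. Genericity here is always witnessed by a dense open set whose complement lies inside a finite union of lower-dimensional identified or fixed-point linear subspaces and their projections under the quotient maps, i.e.\ exactly the kind of positive-codimension subset treated above; cf.\ Proposition~\ref{generic} and the discussion of fixed points following Theorem~\ref{unorderedthm}. Applying the first assertion to this failure set shows it is $\lambda$-null, so the property holds on a set of full $\lambda$-measure, that is, almost surely.

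The routine part is measurability bookkeeping: one checks that the positive-codimension sets in question belong to the $\sigma$-algebra $\bar{A}$ carrying $\bar{m}$ and $\lambda$, which holds because they are preimages under $\pi$ of measurable (indeed closed, lower-dimensional) subsets, exactly as in the two lemmas constructing $\bar{A}$ and $\bar{\mu}$. The main obstacle, and the only genuinely non-formal point, is the identification of the \emph{topological} notion of genericity (holding on a dense open set) with the \emph{measure-theoretic} notion (holding off a positive-codimension set): this implication is false in a general space, since a dense open set can have complement of positive measure. It works here only because every non-generic locus appearing in the paper is cut out by finitely many linear conditions of codimension at least one, so I would make this structural observation explicit — verifying it for each generic property invoked — before concluding that generic implies $\lambda$-almost sure.
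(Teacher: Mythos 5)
Your proposal is correct and follows essentially the same route as the paper, which states this theorem without proof precisely because it is the immediate combination of the two preceding propositions (subsets of positive codimension have $\bar{m}$-measure zero) with the definition of absolute continuity (so $\bar{m}$-null implies $\lambda$-null). Your additional observation --- that passing from topological genericity (dense open set) to a positive-codimension failure set is not automatic and must be verified for each generic property invoked, as it indeed holds for the properties in the paper, whose non-generic loci lie in finite unions of lower-dimensional linear subspaces and their quotient images --- is a valid refinement of a step the paper leaves implicit in its ``in particular'' clauses.
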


\begin{rem}
Note that the converse is not necessarily true: Properties which hold almost surely are not necessarily generic.
\end{rem}

\section{Implementation of approximate QED geodesic computation}

In this appendix, we extend the explanation of how the approximate QED distance and geodesic computation specified in section 4.2 in the main paper can be made in practice. We shall consider computation of distances in the space of ordered trees $\bar{X}$, as the complete search reduction from ordered trees to the unordered tree-space $\bar{\bar{X}}$ is not hard to implement.

Recall that the tree-space $\bar{X}$ is constructed from the Euclidean space 
\[
X = \prod_{e \in E} (\R^d)^n,
\]
consisting of all trees spanned by a certain maximal combinatorial tree $\mathscr{T} = (V, E, r)$. The construction is made by identifying different representations of the same tree, with the help of an equivalence relation $\sim$. Denote by $\bar{x}$ the equivalence class of a point $x \in X$. As discussed in the article, there is a standard way of defining a distance function on $\bar{X} = X/\sim$, given by the quotient pseudometric
\begin{equation} 
\bar{d}(\bar{x}, \bar{y}) = \inf \left\{ \sum_{i=1}^k d(x_i, y_i) | x_1 \in \bar{x}, y_i \sim x_{i+1}, y_k \in \bar{y}\right\},
\end{equation}
which is shown to be an actual metric.

In the paper, we suggest approximating the geodesic by bounding the number of visited equivalence classes:

\begin{equation}
\begin{array}{c}
\bar{d}_{approx, K}(\bar{x}, \bar{y}) =\\
\inf_{k \le K} \left\{ \sum_{i=1}^k d(x_i, y_i) | x_1 \in \bar{x}, y_i \sim x_{i+1}, y_k \in \bar{y}\right\},
\end{array}
\end{equation}
and in practice, in our experiments, we have used an approximation with $K$ either $2$ or $3$.

\subsection{Reformulation of tree-space identifications in terms of tree isomorphism} \label{reformulation}

The quotient space construction is extremely useful for understanding the geometry of tree-space and proving statistical properties of tree-space based on geometry. However, it is less intuitive how one should proceed in order to actually compute distances in tree-space. In this section, we shall explain how the tree-space gluings are related to the trees themselves.

Recall that two points in $X$ are equivalent under $\sim$ and glued together in the formation of $\bar{X}$ if they represent \emph{the same} ordered tree-shape. How can we determine whether two points in tree-space represent the same ordered tree-shape? This reduces to a tree isomorphism problem. More precisely:

\begin{itemize}
\item[i)] Assume given two points $x_1, x_2 \in X$.
\item[ii)] These points correspond to two combinatorial trees $t_1, t_2$, equal to or derived from $\mathscr{T}$, where some of the edges in $t_i$ may be endowed with zero attributes in $x_i$.
\item[iii)] Create two new combinatorial trees $t'_1, t'_2$ by deleting the edges in $t_1$ and $t_2$ with zero attributes in $x_1$ and $x_2$, and using the induced orders on siblings from $t_1$, $t_2$ to induce total orders on siblings in $t'_1$, $t'_2$.
\item[iv)] The question of whether $x_1$ and $x_2$ end up in the same identified subspace is now equivalent to the question of whether $t'_1$ and $t'_2$ are isomorphic as ordered, combinatorial trees. The question of whether $x_1$ and $x_2$ represent the same tree-shape, is equivalent to the question of whether a) $t'_1$ and $t'_2$ are topologically isomorphic, and b) if yes, whether their ordered sets of attributes are identical.
\item[v)] The observations above give us a way to judge whether two points in $X$ are equivalent under the equivalence relation $\sim$ on $X$; equivalently, whether they are identified in $\bar{X}$. This realization can be used to reformulate Algorithm 1 from the main paper in terms of the trees rather than in terms of the tree-space.
\item[vi)] A geodesic in $\bar{X}$ is composed of a set of linear stretches in $X$ between equivalence classes that belong to lower-dimensional identified subspaces. Each such stretch can change one endpoint tree to the other endpoint tree by i) gradually shrinking and removing edges, ii) adding and gradually extending edges, or iii) gradually changing edges which are ''present'' in both trees. All these changes are made simultaneously at constant speed throughout the linear stretch.
\item[vii)] Given two points $\bar{x}_1$ and $\bar{x}_2$ in $\bar{X}$, how can we find the shortest path joining them which consists of two Euclidean stretches? The length of the shortest such path is $\bar{d}_{approx, 2}(\bar{x}_1, \bar{x}_2)$. We shall look at this problem in section~\ref{twostretches} below.
\end{itemize}

\subsection{Paths with two Euclidean stretches} \label{twostretches}

In this section we study the problem of finding  $\bar{d}_{approx, 2}$.

\begin{figure*}
\centering
\includegraphics[width=0.9\linewidth]{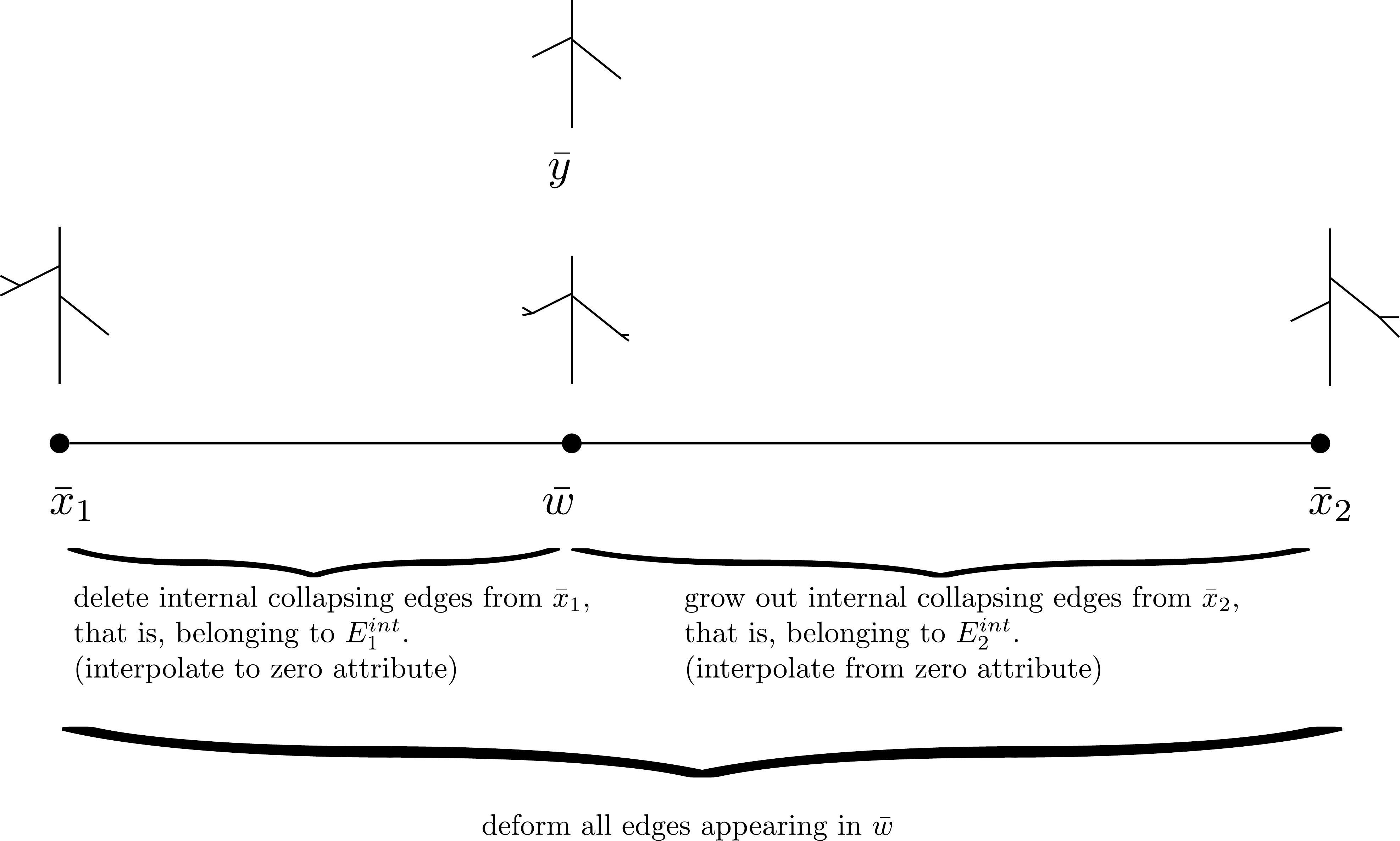}
\caption{Path consisting of two Euclidean stretches.}
\label{algorithm}
\end{figure*}

\begin{itemize}
\item Let $\bar{x}_1$, $\bar{x}_2$ be tree-shapes in $\bar{X}$. Let $\bar{y}$ be some tree-shape whose topological structure can be obtained as a collapsed version of both $\bar{x}_1$ and $\bar{x}_2$. 
\item The topology of the tree $\bar{y}$ is obtained from $\bar{x}_1$ (and similarly from $\bar{x}_2$) by collapsing two types of edges:
\begin{itemize}
\item internal edges (that is, there is some edge in the tree which is further from the root in $\bar{x}_1$ (or $\bar{x}_2$), which is not being collapsed), or
\item external edges (that is, all the edges in $\bar{x}_1$ (or $\bar{x}_2$) which are farther from the root are also collapsed).
\end{itemize}
\item Denote the sets of such edges as $E_1^{int}$, $E_1^{ext}$, $E_2^{int}$, $E_2^{ext}$, respectively.
\item We are looking for a point $\bar{w} \in \bar{X}$ which contains some such $\bar{y}$ as a subtree, such that the shortest two-stretch path from $\bar{x}_1$ to $\bar{x}_2$ passes through $\bar{w}$, and consists of the following types of deformations:
\begin{itemize}
\item deformation of all edges that collapse to an edge in $\bar{y}$; throughout the whole path.
\item gradually deforming all external collapsing edges in $\bar{x}_1$ so that when we reach $x_2$, their attributes have become zero; throughout the whole path.
\item gradually growing all external collapsing edges in $\bar{x}_2$ from zero attributes when the deformation starts in $x_1$; throughout the whole path
\item gradually deforming all internal collapsing edges in $\bar{x}_1$ so that when we reach $\bar{w}$, their attributes are zero; throughout the first stretch.
\item from $\bar{w}$, gradually grow all internal collapsing edges in $\bar{x}_2$; throughout the second stretch.
\item Thus, $\bar{w}$ contains the edges from $\bar{z}$ as well as the edges from $E_1^{ext}$ and $E_2^{ext}$.
\end{itemize}
Now the distance from $\bar{x}_1$ to $\bar{x}_2$ via $\bar{w}$ is:
\[
\begin{array}{c}
dist(\bar{x}_1, \bar{x}_2, \bar{w}) =\\

\underbrace{\sqrt{\sum_{e \in \bar{w}} \|(\bar{x}_1)_e - \bar{z}_e\|^2 + \sum_{e \in E_1^{int}} \|(\bar{x}_2)_e\|^2}}_{\textrm{length of the first stretch from } \bar{x}_1 \textrm{ to } \bar{w}} + \\
\underbrace{\sqrt{\sum_{e \in \bar{w}} \|(\bar{x}_2)_e - \bar{z}_e\|^2 + \sum_{e \in E_2^{int}} \|(\bar{x}_2)_e\|^2}}_{\textrm{length of the first stretch from } \bar{w} \textrm{ to } \bar{x}_2},
\end{array}
\]
where $\|(\bar{x}_i)_e - \bar{z}_e\| = \|\bar{z}_e\|$ in case $e$ is not an edge in $\bar{x}_i$, $i=1, 2$.
\item This is illustrated in figure~\ref{algorithm}.
\item Now we can define $\bar{d}_{approx, 2}$:
\[
\begin{array}{c}
\bar{d}_{approx, 2}(\bar{x}_1, \bar{x}_2)\\
= \min \left( \bar{d}_{approx, 1}(\bar{x}_1, \bar{x}_2), \min_{\bar{w}} dist(\bar{x}_1, \bar{x}_2, \bar{w}) \right),
\end{array}
\]
where $\bar{w}$ loops through the set of all trees generated from some tree $\bar{y}$ which is a collapsed version of both $\bar{x}_1$ and $\bar{x}_2$.
\end{itemize}

\begin{rem}[Paths with one Euclidean stretch]

The simplest possible path between two tree-shapes does not visit any identified subspaces, and has length $\bar{d}_{approx, 1}$. We have not in an obvious way taken into account that this might in fact be the shortest path. In order to see that this is part of the computation, let us start out as above:

\begin{itemize}
\item Let $\bar{x}_1$, $\bar{x}_2$ be tree-shapes in $\bar{X}$. Let $\bar{y}$ be some tree-shape whose topological structure can be obtained as a collapsed version of both $\bar{x}_1$ and $\bar{x}_2$. 
\item The topology of the tree $\bar{y}$ is obtained from $\bar{x}_1$ (and similarly from $\bar{x}_2$) by collapsing two types of edges:
\begin{itemize}
\item internal edges (that is, there is some edge in the tree which is further from the root in $\bar{x}_1$ (or $\bar{x}_2$), which is not being collapsed), or
\item external edges (that is, all the edges in $\bar{x}_1$ (or $\bar{x}_2$) which are farther from the root are also collapsed).
\end{itemize}
\item Assume that it is possible to find some tree $\bar{y}$ which is obtained by only collapsing external edges.
\item The shortest path passing through the corresponding $\bar{w}$ will be a single Euclidean stretch, which can be obtained as described above.
\item Hence, the option that $\bar{d}_{approx, 1}$ might give the optimal path is part of the computation of $\bar{d}_{approx, 2}$.
\end{itemize}

\end{rem}

\begin{rem}
There is a chance that the intermediate trees appearing in this algorithm are ''too large'' for $\bar{X}$ (for example as in fig.~\ref{algorithm}. We have chosen to disregard this, and cut the intermediate trees off in iterative algorithms if they are too large. This can also be controlled by cutting $\bar{w}$ off by default.
\end{rem}

We still have not addressed the question of where the collapsed tree $\bar{y}$ comes from.

\begin{itemize}
\item In order to compute $\bar{d}_{approx, 2}(\bar{x}_1, \bar{x}_2)$, we need to search through all different collapsed versions $\bar{y}$ of $\bar{x}_1$ and $\bar{x}_2$.
\item In order to reduce computation time, we approximate this by setting a bound $D$ on the degree of vertices of the collapsed versions. 
\item For $D = 3$, the number of such subtrees is $O(n)$, where $n$ is the number of edges in the tree. For $D > 3$, the number of such subtrees is $O(n^{(D-2)})$. In our experiments we use $D = 3$.
\end{itemize}

\subsection{Paths with several Euclidean stretches}

Note that if we can answer the question in vii) in the beginning of section~\ref{reformulation}, then we can compute $\bar{d}_{approx, K}$ as well:
\[
\begin{array}{c}
\bar{d}_{approx, K}(\bar{x}_1, \bar{x_2}) =\\
 \\
\min 
\left\{
\begin{array}{c}
  \bar{d}_{approx, K-1}(\bar{x}_1, \bar{x_2}),\\
\min_{\bar{w}} 
 \left( 
 \bar{d}_{approx, 2}(\bar{x}_1, \bar{w}) + \bar{d}_{approx, K-1}(\bar{x}_2, \bar{w}) 
  \right) 
\end{array}
\right\}
\end{array}
\]
where $\bar{w}$ goes through a family of trees derived from all possible collapsed trees $\bar{y}$ obtained from $\bar{x}_1$.

\begin{rem}
This approach can be adapted to computation in a more restricted space $\bar{Z}$ by making sure that all the intermediate trees are contained in $\bar{Z}$.
\end{rem}

\end{appendices}


\begin{thebibliography}{10}

\bibitem{pcatrees}
B.~Aydin, G.~Pataki, H.~Wang, E.~Bullitt, and J.~S. Marron.
\newblock A principal component analysis for trees.
\newblock {\em Ann.~Appl.~Statist.}, 3(4):1597--1615, 2009.

\bibitem{active_skel}
X.~Bai, X.~Wang, L.~J. Latecki, W.~Liu, and Z.~Tu.
\newblock Active skeleton for non-rigid object detection.
\newblock In {\em ICCV}, pages 575--582, 2009.

\bibitem{phylogenetic}
L.~J. Billera, S.~P. Holmes, and K.~Vogtmann.
\newblock Geometry of the space of phylogenetic trees.
\newblock {\em Adv.~in Appl.~Math.}, 27(4):733--767, 2001.

\bibitem{bridsonhaef}
M.~R. Bridson and A.~Haefliger.
\newblock {\em Metric spaces of non-positive curvature}.
\newblock Springer, 1999.

\bibitem{Chalopin01}
C.~Chalopin, G.~Finet, and I.~E. Magnin.
\newblock Modeling the {3D} coronary tree for labeling purposes.
\newblock {\em Med.~Im.~Anal.}, 5:301--315, 2001.

\bibitem{toppoints}
M.~Demirci, B.~Platel, A.~Shokoufandeh, L.~Florack, and S.~Dickinson.
\newblock The representation and matching of images using top points.
\newblock {\em J.~Math.~Im.~Vis.}, 35:103--116, 2009.

\bibitem{dug}
J.~Dugundji.
\newblock {\em Topology}.
\newblock Allyn and Bacon Inc., 1966.

\bibitem{feragen_iccv11}
A.~Feragen, S.~Hauberg, M.~Nielsen, and F.~Lauze.
\newblock Means in spaces of tree-like shapes.
\newblock In {\em ICCV}, 2011.

\bibitem{feragen_accv2010}
A.~Feragen, F.~Lauze, P.~Lo, M.~de~Bruijne, and M.~Nielsen.
\newblock Geometries on spaces of treelike shapes.
\newblock In {\em ACCV 2010}, volume 6493 of {\em LNCS}, pages 160--173. 2011.

\bibitem{icpr}
A.~Feragen, F.~Lauze, and M.~Nielsen.
\newblock Fundamental geodesic deformations in spaces of treelike shapes.
\newblock In {\em Proc ICPR}, pages 2089--2093, 2010.

\bibitem{ferrer_gen_med}
M.~Ferrer, E.~Valveny, F.~Serratosa, K.~Riesen, and H.~Bunke.
\newblock Generalized median graph computation by means of graph embedding in
  vector spaces.
\newblock {\em Pattern Rec.}, 43(4):1642--1655, 2010.

\bibitem{fletcher}
P.~T. Fletcher, C.~Lu, S.~M. Pizer, and S.~Joshi.
\newblock Principal geodesic analysis for the study of nonlinear statistics of
  shape.
\newblock {\em TMI}, 23:995--1005, 2004.

\bibitem{gromov}
M.~Gromov.
\newblock Hyperbolic groups.
\newblock In {\em Essays in group theory}, volume~8 of {\em Math. Sci. Res.
  Inst. Publ.}, pages 75--263. Springer, 1987.

\bibitem{mds_phylo}
D.~M. Hillis, T.~A. Heath, and K.~St. John.
\newblock Analysis and visualization of tree space.
\newblock {\em System.~Biol.}, 54(3):471--482, 2005.

\bibitem{structurespaces}
B.~J. Jain and K.~Obermayer.
\newblock Structure spaces.
\newblock {\em J. Mach. Learn. Res.}, 10:2667--2714, 2009.

\bibitem{karcher}
H.~Karcher.
\newblock Riemannian center of mass and mollifier smoothing.
\newblock {\em Comm.~Pure Appl.~Math.}, 30(5):509--541, 1977.

\bibitem{kendall}
D.~G. Kendall.
\newblock Shape manifolds, {P}rocrustean metrics, and complex projective
  spaces.
\newblock {\em Bull. London Math. Soc.}, 16(2):81--121, 1984.

\bibitem{keselman}
Y.~Keselman, A.~Shokoufandeh, M.~F. Demirci, and S.~Dickinson.
\newblock Many-to-many graph matching via metric embedding.
\newblock In {\em In CVPR}, volume~1, pages 850--857, 2003.

\bibitem{klein_imple}
P.~N. Klein, T.~B. Sebastian, and B.~B. Kimia.
\newblock Shape matching using edit-distance: an implementation.
\newblock SODA, pages 781--790, 2001.

\bibitem{kuijperflorack}
A.~Kuijper and L.~M.~J. Florack.
\newblock Using catastrophe theory to derive trees from images.
\newblock {\em J.~Math.~Im.~Vis.}, 23(3):219--238, 2005.

\bibitem{lo}
P.~Lo, J.~Sporring, H.~Ashraf, J.J.H. Pedersen, and M.~de~Bruijne.
\newblock Vessel-guided airway tree segmentation: A voxel classification
  approach.
\newblock {\em Med.~Im.~Anal.}, 14(4):527--538, 2010.

\bibitem{exact}
P.~Lo, B.~van Ginneken, J.M. Reinhardt, and M.~de~Bruijne.
\newblock {Extraction of Airways from CT (EXACT’09)}.
\newblock In {\em {2.~Int.~WS.~Pulm.~Im.~Anal.}}, pages 175--189, 2009.

\bibitem{mandelbrot}
B.~B. Mandelbrot.
\newblock {\em The fractal geometry of nature}.
\newblock W. H. Freedman and Co., 1982.

\bibitem{metzen_journal}
J.~H. Metzen, T.~Kr\"{o}ger, A.~Schenk, S.~Zidowitz, H-O. Peitgen, and
  X.~Jiang.
\newblock Matching of anatomical tree structures for registration of medical
  images.
\newblock {\em Im.~Vis.~Comp.}, 27:923--933, 2009.

\bibitem{michormumford}
P.~W. Michor and D.~Mumford.
\newblock Riemannian geometries on spaces of plane curves.
\newblock {\em J. Eur. Math. Soc. (JEMS}, 8:1--48, 2004.

\bibitem{nye}
T.~M.~W. Nye.
\newblock Principal components analysis in the space of phylogenetic trees.
\newblock {\em Ann.~Statist.}, 39(5):2716--2739, 2011.

\bibitem{owen_provan}
M.~Owen and J.~S. Provan.
\newblock A fast algorithm for computing geodesic distances in tree space.
\newblock {\em IEEE/ACM Trans.~Comp.~Biol.~Bioinf.}, 8:2--13, 2011.

\bibitem{riesen_bunke}
K.~Riesen and H.~Bunke.
\newblock Graph classification based on vector space embedding.
\newblock {\em Int.~J.~Pattern Rec.~\& Artif.~Int.}, 23(6):1053 -- 1081, 2009.

\bibitem{rudin}
Walter Rudin.
\newblock {\em Real and complex analysis}.
\newblock 2 edition, 1987.

\bibitem{schlatholter2002}
T.~Schlath\"{o}lter, C.~Lorenz, I.~C. Carlsen, S.~Renisch, and T.~Deschamps.
\newblock Simultaneous segmentation and tree reconstruction of the airways for
  virtual bronchoscopy.
\newblock In {\em SPIE Med. Im.}, volume 4684, pages 103--113, 2002.

\bibitem{editshock}
T.~B. Sebastian, P.N. Klein, and B.B. Kimia.
\newblock Recognition of shapes by editing their shock graphs.
\newblock {\em TPAMI}, 26(5):550--571, 2004.

\bibitem{small}
C.~G. Small.
\newblock {\em The statistical theory of shape}.
\newblock Springer, 1996.

\bibitem{lauge}
L.~S{\o}rensen, P.~Lo, A.~Dirksen, J.~Petersen, and M.~de~Bruijne.
\newblock Dissimilarity-based classification of anatomical tree structures.
\newblock In {\em Inf.~Proc.~Med.~Im.}, LNCS, pages 475--485, 2011.

\bibitem{sturm}
K.-T. Sturm.
\newblock Probability measures on metric spaces of nonpositive curvature.
\newblock volume 338 of {\em Contemp. Math.}, pages 357--390. 2003.

\bibitem{torsello}
A.~Torsello and E.~R. Hancock.
\newblock Computing approximate tree edit distance using relaxation labeling.
\newblock {\em Pattern Rec.~Lett.}, 24(8):1089 -- 1097, 2003.

\bibitem{trinh2}
N.~Trinh and B.~Kimia.
\newblock Skeleton search: Category-specific object recognition and
  segmentation using a skeletal shape model.
\newblock {\em International Journal of Computer Vision}, pages 1--26, 2011.

\bibitem{trinh}
N.H. Trinh and B.B. Kimia.
\newblock Learning prototypical shapes for object categories.
\newblock In {\em CVPR Workshops}, pages 1--8, 2010.

\bibitem{tschirren}
J.~Tschirren, G.~McLennan, K.~Pal\'{a}gyi, E.~A. Hoffman, and M.~Sonka.
\newblock Matching and anatomical labeling of human airway tree.
\newblock {\em TMI}, 24(12):1540--1547, 2005.

\bibitem{Ginneken08}
B.~van Ginneken, W.~Baggerman, and E.~van Rikxoort.
\newblock Robust segmentation and anatomical labeling of the airway tree from
  thoracic {CT} scans.
\newblock In {\em MICCAI 2008}, pages 219--226, 2008.

\bibitem{vidal}
R.~Vidal, Y.~Ma, and S.~Sastry.
\newblock {G}eneralized {P}rincipal {C}omponent {A}nalysis ({GPCA}).
\newblock {\em TPAMI}, 27:1945--1959, 2005.

\bibitem{wangandmarron}
H.~Wang and J.~S. Marron.
\newblock Object oriented data analysis: sets of trees.
\newblock {\em Ann. Statist.}, 35(5):1849--1873, 2007.

\bibitem{washko}
G.~R. Washko, T.~Dransfield, R.~S.~J. Estepar, A.~Diaz, S.~Matsuoka,
  T.~Yamashiro, H.~Hatabu, E.~K. Silverman, W.~C. Bailey, and J.~J. Reilly.
\newblock {Airway wall attenuation: a biomarker of airway disease in subjects
  with COPD}.
\newblock {\em J.~Appl.~Physiol}, 107(1):185--191, 2009.

\bibitem{zhangetal}
K.~Zhang, R.~Statman, and D.~Shasha.
\newblock On the editing distance between unordered labeled trees.
\newblock {\em Inf. Process. Lett.}, 42(3):133--139, 1992.

\end{thebibliography}
\end{document}